\tikzstyle{mystate}=[state,inner sep=3pt,minimum size=20pt,line width=0.5mm]
\tikzstyle{fstate}=[state,accepting,inner sep=2pt,minimum size=3pt]
\tikzstyle{istate}=[state,initial,inner sep=2pt,minimum size=3pt]
\tikzstyle{mysquare}=[inner sep=3pt,minimum size=15pt,line width=0.5mm]
\tikzstyle{fmysquare}=[inner sep=3pt,minimum size=15pt,line width=0.5mm,accepting]
\newcommand{\SFSAutomatEdge}[5]{\path[->](#1) edge[#4,line width=0.5mm] node[#5] {\ensuremath{#2}} (#3);}
\newcommand{\comment}[1]{}
\newcommand{\new}[1]{\textcolor{black}{#1}}
\newcommand{\full}[1]{\textcolor{black}{#1}}
\newcommand{\newb}{\color{black}}
\newcommand{\newe}{\color{black}}
\newcommand{\fullb}{\color{black}}
\newcommand{\fulle}{\color{black}}
\newcommand{\OpPre}[1]{\mathop{\mathrm{pfx}(#1)}}
\newcommand{\OpInf}[1]{\mathop{\mathrm{Inf}(#1)}}
\newcommand{\OpPreTD}[1]{\mathop{\mathsf{CondPre}\ifthenelse{\isempty{#1}}{}{(#1)}}}
\newcommand{\OpPreTDc}[1]{\mathop{\overline{\mathsf{CondPre}}\ifthenelse{\isempty{#1}}{}{(#1)}}}
\newcommand{\nocontentsline}[3]{\relax}
\newcommand{\tocless}[1]{\bgroup\let\addcontentsline=\nocontentsline#1\egroup}
\newcommand{\EndBox}{\hspace{\stretch{100}}$\Box$}
\newcommand{\equationwmem}[2]{
  \newcounter{#1}
  \setcounter{#1}{\value{equation}}
  \expandafter\newcommand\csname equationcom#1\endcsname{#2}%
  \begin{equation} \label{#1}
    \csname equationcom#1\endcsname
  \end{equation}
}
\newcommand{\equationfrommem}[1]{
    \ifcsname c@tempcounter\endcsname
    \setcounter{tempcounter}{\value{equation}}
  \else
    \newcounter{tempcounter}
    \setcounter{tempcounter}{\value{equation}}
  \fi
  \setcounter{equation}{\value{#1}}
  \begin{equation}
    \csname equationcom#1\endcsname
  \end{equation}
  \setcounter{equation}{\value{tempcounter}}
}
\newcommand{\BR}[1]{\left( #1 \right)}
\newcommand\set[1]{\{ #1 \}}
\newcommand\semantics[1]{[\![ #1 ]\!]}
\newcommand{\REFlem}[1]{\text{Lemma~\ref{#1}}}
\newcommand{\REFthm}[1]{\text{Theorem~\ref{#1}}}
\newcommand{\REFsec}[1]{Section~\ref{#1}}
\newcommand{\REFfig}[1]{Figure~\ref{#1}}
\newcommand{\REFapp}[1]{Appendix~\ref{#1}}
\newcommand{\REFproblem}[1]{Problem~\ref{#1}}
\newcommand{\REFtable}[1]{Table~\ref{#1}}
\let\exampleOrig\endexample
\def\endexample{\hspace*{0pt}\hfill$\triangleleft$\exampleOrig}
\newcommand{\ON}[1]{\mathsf{#1}}
\def\clap#1{\hbox to 0pt{\hss#1\hss}}
\newcommand{\DiCases}[4]{\ensuremath{\begin{cases}%
#1,&~#2\\%
#3,&~#4%
\end{cases}}}%
\newcommand{\TriCases}[6]{\ensuremath{\begin{cases}%
#1,&~#2\\%
#3,&~#4\\%
#5,&~#6%
\end{cases}}}%
\newif\ifFIRST
\newif\ifSECOND
\let\LISTOP\relax
\newcommand{\List}[4][\;]{#3#1%
        \FIRSTtrue
        \@for\i:=#2\do{%
        \ifFIRST\LISTOP{\i}\FIRSTfalse\else,\LISTOP{\i}\fi%
        }%
        #1#4%
        \let\LISTOP\relax
}
\newcounter{DINGLIST}
\newcommand{\markD}[3][\;\;]{\text{\ding{\the\numexpr171+\theDINGLIST}\stepcounter{DINGLIST}}#1#3}
\newcommand{\propNeg}{\@ifstar\propNegStar\propNegNoStar}
\newcommand{\propNegStar}[1]{\ensuremath{\left(\propNegNoStar{#1}\right)}}
\newcommand{\propNegNoStar}[2][\cdot]{\ensuremath{\neg\ifthenelse{\isempty{#2}}{#1}{#2}}}
\newcommand{\propConj}{\@ifstar\propConjStar\propConjNoStar}
\newcommand{\propConjStar}[2]{\ensuremath{\left(\propConjNoStar{#1}{#2}\right)}}
\newcommand{\propConjNoStar}[3][\cdot]{\ensuremath{\ifthenelse{\isempty{#2}}{#1}{#2}\wedge\ifthenelse{\isempty{#3}}{#1}{#3}}}
\newcommand{\propDisj}{\@ifstar\propDisjStar\propDisjNoStar}
\newcommand{\propDisjStar}[2]{\ensuremath{\left(\propDisjNoStar{#1}{#2}\right)}}
\newcommand{\propDisjNoStar}[3][\cdot]{\ensuremath{\ifthenelse{\isempty{#2}}{#1}{#2}\vee\ifthenelse{\isempty{#3}}{#1}{#3}}}
\newcommand{\propImp}{\@ifstar\propImpStar\propImpNoStar}
\newcommand{\propImpStar}[2]{\ensuremath{\left(\propImpNoStar{#1}{#2}\right)}}
\newcommand{\propImpNoStar}[3][\cdot]{\ensuremath{\ifthenelse{\isempty{#2}}{#1}{#2}\Rightarrow\ifthenelse{\isempty{#3}}{#1}{#3}}}
\newcommand{\propAequ}{\@ifstar\propAequStar\propAequNoStar}
\newcommand{\propAequStar}[2]{\ensuremath{\left(\propAequNoStar{#1}{#2}\right)}}
\newcommand{\propAequNoStar}[3][\cdot]{\ensuremath{\ifthenelse{\isempty{#2}}{#1}{#2}\Leftrightarrow\ifthenelse{\isempty{#3}}{#1}{#3}}}
\newcommand{\AllQ}{\@ifstar\AllQStar\AllQNoStar}
\newcommand{\AllQStar}[3][\;]{\ensuremath{\left(\forall #2#1.#1#3\right)}}
\newcommand{\AllQNoStar}[3][\;]{\ensuremath{\forall #2#1.#1#3}}
\newcommand{\AllQu}{\@ifstar\AllQuStar\AllQuNoStar}
\newcommand{\AllQuStar}[3][\;]{\ensuremath{\left(\forall^{\infty} #2#1.#1#3\right)}}
\newcommand{\AllQuNoStar}[3][\;]{\ensuremath{\forall^{\infty} #2#1.#1#3}}
\newcommand{\ExQ}{\@ifstar\ExQStar\ExQNoStar}
\newcommand{\ExQStar}[3][\;]{\ensuremath{\left(\exists #2#1.#1#3\right)}}
\newcommand{\ExQNoStar}[3][\;]{\ensuremath{\exists #2#1.#1#3}}
\newcommand{\NExQ}{\@ifstar\NExQStar\NExQNoStar}
\newcommand{\NExQStar}[3][\;]{\ensuremath{\left(\nexists #2#1.#1#3\right)}}
\newcommand{\NExQNoStar}[3][\;]{\ensuremath{\nexists #2#1.#1#3}}
\newcommand{\UniqueQ}{\@ifstar\UniqueQStar\UniqueQNoStar}
\newcommand{\UniqueQStar}[3][\;]{\ensuremath{\left(\exists! #2#1.#1#3\right)}}
\newcommand{\UniqueQNoStar}[3][\;]{\ensuremath{\exists! #2#1.#1#3}}
\newenvironment{propConjA}{\left(\def\unionAtest{1}\begin{array}{@{\if\unionAtest1\gdef\unionAtest{0}\phantom{\wedge}\else\wedge\fi}l@{}}}{\end{array}\right)}
\newenvironment{propDisjA}{\left(\def\unionAtest{1}\begin{array}{@{\if\unionAtest1\gdef\unionAtest{0}\phantom{\vee}\else\vee\fi}l@{}}}{\end{array}\right)}
  \newlength{\SFS@HEIGHT}
  \newlength{\SFS@WIDTH}
  \newcommand{\SplitX}[2]{
            \settoheight{\SFS@HEIGHT}{$#2$}
            \settowidth{\SFS@WIDTH}{$#2$}
            \mbox{\begin{tikzpicture}[baseline=(current bounding box.center)]
            \node[] (E) at (0,0) {$#1$};
            \node[inner sep=0pt] (F) at ($(E.south west)+(1ex,-1ex)+(3ex+.5\SFS@WIDTH,-\SFS@HEIGHT)$) {$#2$};
            \node[] (E) at (0,0) {\phantom{$#1$}};
            \draw[fill] ($(E.east)+(1ex,0ex)$) circle (.2ex);
            \draw[-] ($(E.east)+(1ex,0ex)$) -- ($(E.south east)+(1ex,-0.5ex)$) -- ($(E.south west)+(1ex,-0.5ex)$) -- ($(E.south west)+(1ex,-1ex)-(0,\SFS@HEIGHT)$) -- ($(E.south west)+(2.5ex,-1ex)-(0,\SFS@HEIGHT)$);
            \draw[fill] ($(E.south west)+(2.5ex,-1ex)-(0,\SFS@HEIGHT)$) circle (.2ex);
            \end{tikzpicture}}}
  \newcommand{\SplitS}[2]{
            \settoheight{\SFS@HEIGHT}{$#2$}
            \settowidth{\SFS@WIDTH}{$#2$}
            \mbox{\begin{tikzpicture}[baseline=(current bounding box.center)]
            \node[] (E) at (0,0) {$#1$};
            \node[inner sep=0pt] (F) at ($(E.south west)+(1ex,0.5ex)+(0ex+.5\SFS@WIDTH,-\SFS@HEIGHT)$) {$#2$};
            \end{tikzpicture}}}
  \newcommand{\SetCompSplit}[2]{\left\{\SplitS{#1\mid}{#2}\right\}}
\newcommand{\Set}[2][]{\List[#1]{#2}{\{}{\}}}
\newcommand{\VSet}[2][]{\let\LISTOP\val\List[#1]{#2}{\{}{\}}}
\newcommand{\Tuple}[2][]{\List[#1]{#2}{(}{)}}
\newcommand{\VTuple}[2][]{\let\LISTOP\val\List[#1]{#2}{(}{)}}
\newcommand{\UNION}{\@ifstar\UNIONStar\UNIONNoStar}
\newcommand{\UNIONStar}[2]{\ensuremath{\left(\UNIONNoStar{#1}{#2}\right)}}
\newcommand{\UNIONNoStar}[2]{\ensuremath{\ifthenelse{\isempty{#1}}{\cdot}{#1}\cup\ifthenelse{\isempty{#2}}{\cdot}{#2}}}
\newcommand{\UNIOND}{\@ifstar\UNIONDStar\UNIONDNoStar}
\newcommand{\UNIONDStar}[2]{\ensuremath{\left(\UNIONDNoStar{#1}{#2}\right)}}
\newcommand{\UNIONDNoStar}[2]{\ensuremath{\ifthenelse{\isempty{#1}}{\cdot}{#1}\uplus\ifthenelse{\isempty{#2}}{\cdot}{#2}}}
\newcommand{\SETMINUS}{\@ifstar\SETMINUSStar\SETMINUSNoStar}
\newcommand{\SETMINUSStar}[2]{\ensuremath{\left(\SETMINUSNoStar{#1}{#2}\right)}}
\newcommand{\SETMINUSNoStar}[2]{\ensuremath{\ifthenelse{\isempty{#1}}{\cdot}{#1}\setminus\ifthenelse{\isempty{#2}}{\cdot}{#2}}}
\newcommand{\INTERSECT}{\@ifstar\INTERSECTStar\INTERSECTNoStar}
\newcommand{\INTERSECTStar}[2]{\ensuremath{\left(\INTERSECTNoStar{#1}{#2}\right)}}
\newcommand{\INTERSECTNoStar}[2]{\ensuremath{\ifthenelse{\isempty{#1}}{\cdot}{#1}\cap\ifthenelse{\isempty{#2}}{\cdot}{#2}}}
\newcommand{\CARTPROD}{\@ifstar\CARTPRODStar\CARTPRODNoStar}
\newcommand{\CARTPRODStar}[2]{\ensuremath{\left(\CARTPRODNoStar{#1}{#2}\right)}}
\newcommand{\CARTPRODNoStar}[2]{\ensuremath{\ifthenelse{\isempty{#1}}{\cdot}{#1}\times\ifthenelse{\isempty{#2}}{\cdot}{#2}}}
\newcommand{\FINCOUNT}{\@ifstar\FinCountStar\FinCountNoStar}
\newcommand{\FinCountStar}[1]{\ensuremath{\#(\ifthenelse{\isempty{#1}}{\cdot}{#1})}}
\newcommand{\FinCountNoStar}[1]{\ensuremath{\#\left(\ifthenelse{\isempty{#1}}{\cdot}{#1}\right)}}
\newcommand{\SetCompX}[3][]{\left\{#1#2#1\middle\vert#1#3#1\right\}}
\newcommand{\twoup}[1]{\ensuremath{2^{#1}}}
  \newcommand{\y}{\ensuremath{y}}
  \newcommand{\Q}{\ensuremath{Q}}
  \newcommand{\Qo}{\ensuremath{Q^1}}
  \newcommand{\Qz}{\ensuremath{Q^0}}
  \newcommand{\Fz}{\ensuremath{F^0}}
  \newcommand{\Fo}{\ensuremath{F^1}}
  \newcommand{\Fsz}{\ensuremath{\mathcal{F}^0}}
  \newcommand{\Fso}{\ensuremath{\mathcal{F}^1}}
  \newcommand{\FA}{\ensuremath{F_{\mathcal{A}}}}
  \newcommand{\FG}{\ensuremath{F_{\mathcal{G}}}}
  \newcommand{\Fc}{\ensuremath{\mathcal{F}}}
  \newcommand{\FcA}{\ensuremath{\mathcal{F}_{\mathcal{A}}}}
  \newcommand{\FcG}{\ensuremath{\mathcal{F}_{\mathcal{G}}}}
 \newcommand{\Lomega}{\ensuremath{\mathcal{L}}}
  \newcommand{\Gg}{\ensuremath{H}}
 \newcommand{\Zt}{\ensuremath{\overline{Z}}}
 \newcommand{\Yt}{\ensuremath{\overline{Y}}}
 \newcommand{\Xt}{\ensuremath{\overline{X}}}
 \newcommand{\Wt}{\ensuremath{\overline{W}}}
   \newcommand{\FAt}{\ensuremath{\overline{F}_{\mathcal{A}}}}
  \newcommand{\FGt}{\ensuremath{\overline{F}_{\mathcal{G}}}}
   \newcommand{\ps}[1]{\ensuremath{{}^{#1}\!}}
  \newcommand{\Za}[1]{\ensuremath{\ifthenelse{\isempty{#1}}{\ps{a}Z}{\ps{#1}Z}}}
 \newcommand{\Ya}[1]{\ensuremath{\ifthenelse{\isempty{#1}}{\ps{a}Y}{\ps{#1}Y}}}
 \newcommand{\Xa}[1]{\ensuremath{\ifthenelse{\isempty{#1}}{\ps{a}X}{\ps{#1}X}}}
 \newcommand{\Wa}[1]{\ensuremath{\ifthenelse{\isempty{#1}}{\ps{a}W}{\ps{#1}W}}}
   \newcommand{\FAa}[1]{\ensuremath{\ifthenelse{\isempty{#1}}{\ps{b}\FA}{\ps{#1}F_{\mathcal{A}}}}}
  \newcommand{\FGa}[1]{\ensuremath{\ifthenelse{\isempty{#1}}{\ps{a}\FG}{\ps{#1}F_{\mathcal{G}}}}}
   \newcommand{\Zta}[1]{\ensuremath{\ifthenelse{\isempty{#1}}{\ps{a}\overline{Z}}{\ps{#1}\overline{Z}}}}
 \newcommand{\Yta}[1]{\ensuremath{\ifthenelse{\isempty{#1}}{\ps{a}\overline{Y}}{\ps{#1}\overline{Y}}}}
 \newcommand{\Xta}[1]{\ensuremath{\ifthenelse{\isempty{#1}}{\ps{a}\overline{X}}{\ps{#1}\overline{X}}}}
 \newcommand{\Wta}[1]{\ensuremath{\ifthenelse{\isempty{#1}}{\ps{a}\overline{W}}{\ps{#1}\overline{W}}}}
   \newcommand{\FAta}[1]{\ensuremath{\ifthenelse{\isempty{#1}}{\ps{a}\FAt}{\ps{#1}\overline{F}_{\mathcal{A}}}}}
  \newcommand{\FGta}[1]{\ensuremath{\ifthenelse{\isempty{#1}}{\ps{a}\FGt}{\ps{#1}\overline{F}_{\mathcal{G}}}}}
  \newcommand{\Trz}{\ensuremath{\delta^0}}
  \newcommand{\Tro}{\ensuremath{\delta^1}}
  \newcommand{\Tr}[1]{\ensuremath{\delta^{#1}}}
    \newcommand{\qz}{\ensuremath{q^0}}
  \newcommand{\qo}{\ensuremath{q^1}}
  \newcommand{\fz}[1]{\ensuremath{f^{0}\ifthenelse{\isempty{#1}}{}{\Tuple{#1}}}} 
  \newcommand{\fo}[1]{\ensuremath{f^{1}\ifthenelse{\isempty{#1}}{}{\Tuple{#1}}}}  
    \newcommand{\fI}[1]{\ensuremath{f^{i}\ifthenelse{\isempty{#1}}{}{\Tuple{#1}}}}  
  \newcommand{\PreE}{\ensuremath{\ON{Pre}^{\exists}}}
  \newcommand{\PreA}{\ensuremath{\ON{Pre}^{\forall}}}
  \newcommand{\Preo}{\ensuremath{\ON{Pre}^1}}
  \newcommand{\Prez}{\ensuremath{\ON{Pre}^0}}
\newcommand{\Pre}{\ensuremath{\ON{Pre}}}
    \newcommand{\rank}{\ensuremath{\ON{rank}}}
\newcommand{\ex}[1]{\ensuremath{\check{#1}}}
 \newcommand{\Tre}[1]{\ensuremath{\ex{\delta}^{#1}}}
\def\N{\hspace{4pt}\raise 3pt \hbox{\circle{7}} \hspace{0pt} }
\newcommand{\rs}{\hspace{-0.8mm}}
\begin{document}


\title{Environmentally-friendly GR(1) Synthesis}
\author{Rupak Majumdar\inst{1} \and Nir
  Piterman\inst{2}\thanks{Supported by project ``d-SynMA'' that is funded by the European Research Council (ERC) under the European Union's Horizon 2020 research and innovation programme (grant agreement No 772459).}
  \and Anne-Kathrin Schmuck\inst{1}\thanks{corresponding author: akschmuck@mpi-sws.org}}
\institute{MPI-SWS, Kaiserslautern, Germany \and University of Leicester, Leicester, UK}
\maketitle

\begin{center}
\end{center}

\begin{abstract}
Many problems in reactive synthesis are stated using two formulas
---an \emph{environment assumption} and a \emph{system guarantee}---
and ask for an implementation that satisfies the guarantee in
environments that satisfy their assumption.
Reactive synthesis tools often produce strategies that
formally satisfy such specifications by actively preventing an
environment assumption from holding. 
While formally correct, such strategies do not capture the intention
of the designer. 
We introduce an additional requirement in reactive synthesis,
\emph{non-conflictingness},
which asks that a system strategy should always allow the environment
to fulfill its liveness requirements.
We give an algorithm for solving GR(1) synthesis that produces
non-conflicting strategies.
Our algorithm is given 
by a 4-nested fixed point in the $\mu$-calculus, in contrast to the
usual 3-nested fixed point for GR(1).
Our algorithm ensures that, in every environment that satisfies its
assumptions on its own, traces of the resulting implementation satisfy
both the assumptions and the guarantees.
In addition, the asymptotic complexity of our algorithm is the same as
that of the usual GR(1) solution. 
We have implemented our algorithm and show how its performance compares to the usual GR(1) synthesis algorithm.
\end{abstract}

\section{Introduction}\label{sec:Intro}

Reactive synthesis from temporal logic specifications provides a methodology to
automatically construct a system implementation from a declarative specification
of correctness.
Typically, reactive synthesis starts with a set of requirements on the system
and a set of assumptions about the environment.
The objective of the synthesis tool is to construct an implementation that ensures 
all guarantees are met in every environment that satisfies all the assumptions;
formally, the synthesis objective is an implication $A \Rightarrow G$.
In many synthesis problems, the system can actively influence whether an environment
satisfies its assumptions.
In such cases, an implementation that prevents the environment from satisfying its
assumptions is considered correct for the specification: since the antecedent of the implication
$A\Rightarrow G$ does not hold, the property is satisfied.

Such implementations satisfy the letter of the specification but not its intent.
Moreover, assumption-violating implementations are not a theoretical curiosity but are regularly
produced by synthesis tools such as \texttt{slugs} \cite{slugs}.
In recent years, a lot of research has thus focused on how to 
model environment assumptions
\cite{KleinPnueli-2010,DBPU10,BCGHHJKK14,BloemEhlersKoenighofer_2015,AssumptionsInSynthesis},
so that assumption-violating implementations are ruled out. 
Existing research either removes the ``zero sum'' assumption
on the game by introducing different levels of co-operation
\cite{BloemEhlersKoenighofer_2015}, by introducing
equilibrium notions inspired by non-zero sum games
\cite{BrenguierRaskinSankur_2017,KupfermanPV16,FismanKL10}, 
or by introducing richer quantitative objectives 
on top of the temporal specifications \cite{BloemCHJ09,AlmagorKRV17}.

\smallskip
\noindent\textbf{Contribution}
In this paper, we take an alternative approach.
We consider the setting of GR(1) specifications, where assumptions and
guarantees are both 
conjunctions of safety and B\"uchi properties \cite{Bloem_etal_2012}.
GR(1) has emerged as an expressive specification formalism \cite{rogersten11synthesis,xu2015specification,Johnson17jfr}
and, unlike full linear temporal logic, synthesis for GR(1) can be
implemented in time quadratic in the state/transition space.
In our approach, the environment is assumed to satisfy its assumptions
provided the system does not prevent this.
Conversely, the system is required to pick a strategy
that ensures the guarantees whenever the assumptions are satisfied,
but additionally ensures \emph{non-conflictingness}:  
along each finite prefix of a play according to the strategy, there
exists the persistent possibility for the environment to play such
that its liveness assumptions will be met.
\full{Note that non-conflictingness is not a trace property; we cannot ``compile away'' this additional requirement
into a different GR(1) or even $\omega$-regular 
objective.
}

Our main contribution is to show a $\mu$-calculus characterization 
of winning states (and winning strategies) that rules out system
strategies that are winning by preventing the environment from
fulfilling its assumptions. 
Specifically, we provide a $4$-nested fixed point that characterizes
winning states and strategies that are \emph{non-conflicting} and ensure 
all guarantees are met if all the assumptions are satisfied.
Thus, if the environment promises to satisfy its assumption if
allowed, the resulting strategy ensures both the assumption and the
guarantee.

Our algorithm does not introduce new notions of winning, or new logics
or winning conditions. 
Moreover, since $\mu$-calculus formulas with $d$ alternations can be
computed in $O(n^{\lceil d/2\rceil})$ time \cite{Seidl96,Browne96}, 
the $O(n^2)$ asymptotic complexity for the new symbolic algorithm is
the same as the standard GR(1) algorithm.

 \begin{figure}[t]
 \begin{center}
    \begin{tikzpicture}
 \begin{footnotesize}
 
 \def\h{0} \def\hm{3} \def\ha{0.3} 
 \def\v{0} \def\vm{2} \def\va{0.3}
 
 \def\nodesepx{1.2}
 \def\nodesepy{0.9}
 
 \def\wa{1.2pt}
 \def\radius{0.19cm}

 \fill [black!15] (\h,\v) rectangle (\h+\ha,\v+\va);
 \fill [black!15] (\h+2*\ha,\v+\va) rectangle (\h+3*\ha,\v+2*\va); 
 \fill [black!55] (\h+2*\ha,\v) rectangle (\h+3*\ha,\v+\va);
 \fill [black!55] (\h,\v+\va) rectangle (\h+\ha,\v+2*\va); 
 
  \node  (a) at (\h+0.5*\hm*\ha,\v+\vm*\va) {};
  \node  (b) at (\h+0.5*\hm*\ha,\v+\vm*\va+2*\va) {};
  \path[->](b) edge[line width=\wa] node[] {} (a);
 
 
\foreach \nodex/\nodey/\obsx/\obsy/\robx/\roby in {0/0/0/0/2/0,2/0/1/1/1/0,4/0/2/1/1/1,5/0/2/1/0/1,7/0/1/0/1/1,9/0/0/0/1/0} {
 
 
  \draw [line width=\wa] (\h+\nodesepx*\nodex,\v-\nodesepy*\nodey) rectangle (\h+\nodesepx*\nodex+\ha*\hm,\v-\nodesepy*\nodey+\va*\vm);
 \draw [step=\ha,line width=0.1pt] (\h+\nodesepx*\nodex,\v-\nodesepy*\nodey) grid (\h+\nodesepx*\nodex+\ha*\hm,\v-\nodesepy*\nodey+\va*\vm);
   \foreach \y/\xi/\xa in {1/0/1,1/2/\hm} {
 \draw [line width=\wa] (\h+\nodesepx*\nodex+\xi*\ha,\v-\nodesepy*\nodey+\y*\va) -- (\h+\nodesepx*\nodex+\xa*\ha,\v-\nodesepy*\nodey+\y*\va);
 }

 \node [draw, circle,minimum size=\radius, anchor=center,inner sep=0pt,line width=1pt] at (\h+\nodesepx*\nodex+\obsx*\ha+0.5*\ha,\v-\nodesepy*\nodey+\obsy*\va+0.5*\va) {};

 \node [draw, rectangle,minimum width=\radius,minimum height=\radius, anchor=center,inner sep=0pt,line width=1pt] at (\h+\nodesepx*\nodex+\robx*\ha+0.5*\ha,\v-\nodesepy*\nodey+\roby*\va+0.5*\va) {};
 
 \node (lab) at (\h+\nodesepx*\nodex+0.5*\hm*\ha,\v-\nodesepy*\nodey-0.3*\vm*\va) {\small $q_{\nodex}$};
  }

  
\foreach \nodex/\nodey/\obsx/\obsy/\obsxb/\obsyb/\robx/\roby in {1/0/0/0/1/0/2/0,3/0/0/1/1/1/1/0,6/0/1/1/2/1/0/1,8/0/1/0/2/0/1/1} {
 
  \draw [line width=\wa] (\h+\nodesepx*\nodex,\v-\nodesepy*\nodey) rectangle (\h+\nodesepx*\nodex+\ha*\hm,\v-\nodesepy*\nodey+\va*\vm);
 \draw [step=\ha,line width=0.1pt] (\h+\nodesepx*\nodex,\v-\nodesepy*\nodey) grid (\h+\nodesepx*\nodex+\ha*\hm,\v-\nodesepy*\nodey+\va*\vm);
   \foreach \y/\xi/\xa in {1/0/1,1/2/\hm} {
 \draw [line width=\wa] (\h+\nodesepx*\nodex+\xi*\ha,\v-\nodesepy*\nodey+\y*\va) -- (\h+\nodesepx*\nodex+\xa*\ha,\v-\nodesepy*\nodey+\y*\va);
 }
 \node [draw, circle,minimum size=\radius, anchor=center,inner sep=0pt,dotted,line width=1pt] at (\h+\nodesepx*\nodex+\obsx*\ha+0.5*\ha,\v-\nodesepy*\nodey+\obsy*\va+0.5*\va) {};
 \node [draw, circle,minimum size=\radius, anchor=center,inner sep=0pt,dotted,line width=1pt] at (\h+\nodesepx*\nodex+\obsxb*\ha+0.5*\ha,\v-\nodesepy*\nodey+\obsyb*\va+0.5*\va) {};

 \node [draw, rectangle,minimum width=\radius,minimum height=\radius, anchor=center,inner sep=0pt,line width=1pt] at (\h+\nodesepx*\nodex+\robx*\ha+0.5*\ha,\v-\nodesepy*\nodey+\roby*\va+0.5*\va) {};
  
  \node (help) at (\h+\nodesepx*\nodex+0.5*\hm*\ha,\v-\nodesepy*\nodey+\vm*\va) {};
  \path[->](help) edge[loop above,line width=1pt] node[] {} (help);
  
 \node (lab) at (\h+\nodesepx*\nodex+0.5*\hm*\ha,\v-\nodesepy*\nodey-0.3*\vm*\va) {\small$q_{\nodex}$};
  }

  \def\arc{0.2}
  \def\arcsep{0.05}
  
  \foreach \nodex in {1,...,9} {
  \node [inner sep=0pt] (a) at (\h+\nodesepx*\nodex-\arc-\arcsep,\v+\va) {};
  \node [inner sep=0pt] (b) at (\h+\nodesepx*\nodex-\arcsep,\v+\va) {};  
    \path[->] (a) edge[line width=1pt] node[] {} (b);
  }
  
  \node [inner sep=0pt] (a) at (\h+\nodesepx*9+\hm*\ha-0.05,\v-0.05) {};
  \node [inner sep=0pt] (b) at (\h+\nodesepx*9+\hm*\ha-0.05,\v-0.4) {};
  \node [inner sep=0pt] (c) at (\h+0.05,\v-0.4) {};
  \node [inner sep=0pt] (d) at (\h+0.05,\v-0.05) {};
  \path[->] (c) edge[line width=1pt] node[] {} (d);
  \draw[line width=1pt] (c) -| (a) (b) -| (d);
  
  \end{footnotesize}
 \end{tikzpicture}
 \end{center}
 \vspace*{-7mm}
  \caption{Pictorial representation of a \emph{desired} strategy for a
    robot (square) moving in a maze in presence of a moving obstacle
    (circle). Obstacle and robot start in the lower left and right
    corner, can move at most one step at a time (to non-occupied
    cells) and cells that they should visit infinitely often are
    indicated in light and dark gray (see $q_0$), respectively. Nodes
    with self-loops ($q_{\set{1,3,6,8}}$) can be repeated finitely
    often with the obstacle located at one of the dotted
    positions.}\label{fig:maze_3x2_4FP}
  \vspace*{-7mm}
\end{figure}
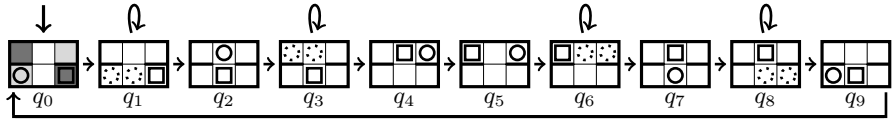

\smallskip
\noindent\textbf{Motivating Example}
Consider a small two-dimensional maze with 3x2 cells as depicted in
\REFfig{fig:maze_3x2_4FP}, state $q_0$. A robot (square) and an
obstacle (circle) are located in this maze and can move at most one
step at a time to non-occupied cells. There is a wall
between the lower and upper left cell and the lower and upper right
cell. The interaction between the robot and the object
is as follows: first the environment chooses where to move
the obstacle to, and, after observing the new location of the
obstacle, the robot chooses where to move.

Our objective is to synthesize a strategy for the robot s.t.\ it
visits both the upper left and the lower right corner of the maze
(indicated in dark gray in \REFfig{fig:maze_3x2_4FP}, state $q_0$)
infinitely often. Due to the walls in the maze the robot needs to
cross the two white middle cells infinitely often to fulfill this
task. If we assume an arbitrary, adversarial behavior of the
environment (e.g., placing the obstacle in one white cell and never
moving it again) this desired robot behavior cannot be
enforced. 
%
We therefore assume that the obstacle is actually another
robot that is required to visit the lower left and the upper right
corner of the maze (indicated in light gray in
\REFfig{fig:maze_3x2_4FP}, state $q_0$) infinitely often. While we do
not know the precise strategy of the other robot (i.e., the obstacle),
its liveness assumption is enough to infer that the obstacle will
always eventually free the white cells. Under this assumption the
considered synthesis problem has a solution. 

Let us first discuss one intuitive strategy for the robot in this
scenario, as depicted in \REFfig{fig:maze_3x2_4FP}. We start in $q_0$
with the obstacle (circle) located in the lower left corner and the
robot (square) located in the lower right corner.  
Recall that the obstacle will eventually move towards the upper right
corner. The robot can therefore  wait until it does so, indicated by
$q_1$. Here, the dotted circles denote possible locations of the
obstacle during the (finitely many) repetitions of $q_1$ by following
its self loop. Whenever the obstacle moves to the upper part of the
maze, the robot moves into the middle part ($q_2$). Now it waits until
the obstacle reaches its goal in the upper right, which is
ensured to happen after a finite number of visits to $q_3$. When the
obstacle reaches the upper right, the robot moves up as well
($q_4$). Now the robot can freely move to its goal in the upper left
($q_5$). This process symmetrically repeats for moving back to the
respective goals in the lower part of the maze ($q_6$ to $q_9$ and
then back to $q_0$). With this strategy, the interaction between
environment and system goes on for infinitely many cycles and the
robot fulfills its specification. 

\begin{figure}[t]
\begin{center}
   \begin{tikzpicture}
 \begin{footnotesize}
 
 \def\h{0} \def\hm{3} \def\ha{0.3} 
 \def\v{0} \def\vm{2} \def\va{0.3}
 
 \def\nodesepx{1.5}
 \def\nodesepy{0.9}
 
 \def\wa{1.2pt}
 \def\radius{0.19cm}

 \fill [black!15] (\h+\nodesepx,\v) rectangle (\h+\nodesepx+\ha,\v+\va);
 \fill [black!15] (\h+\nodesepx+2*\ha,\v+\va) rectangle (\h+\nodesepx+3*\ha,\v+2*\va); 
 \fill [black!55] (\h+\nodesepx+2*\ha,\v) rectangle (\h+\nodesepx+3*\ha,\v+\va);
 \fill [black!55] (\h+\nodesepx,\v+\va) rectangle (\h+\nodesepx+\ha,\v+2*\va); 
 
  \node  (a) at (\h+\nodesepx+0.5*\hm*\ha,\v+\vm*\va) {};
  \node  (b) at (\h+\nodesepx+0.5*\hm*\ha,\v+\vm*\va+2*\va) {};
  \path[->](b) edge[line width=\wa] node[] {} (a);
 
 
\foreach \nodex/\nodey/\obsx/\obsy/\robx/\roby in {1/0/0/0/2/0,3/0/1/1/1/0,0/0/0/0/1/0} {
 
 
  \draw [line width=\wa] (\h+\nodesepx*\nodex,\v-\nodesepy*\nodey) rectangle (\h+\nodesepx*\nodex+\ha*\hm,\v-\nodesepy*\nodey+\va*\vm);
 \draw [step=\ha,line width=0.1pt] (\h+\nodesepx*\nodex,\v-\nodesepy*\nodey) grid (\h+\nodesepx*\nodex+\ha*\hm,\v-\nodesepy*\nodey+\va*\vm);
   \foreach \y/\xi/\xa in {1/0/1,1/2/\hm} {
 \draw [line width=\wa] (\h+\nodesepx*\nodex+\xi*\ha,\v-\nodesepy*\nodey+\y*\va) -- (\h+\nodesepx*\nodex+\xa*\ha,\v-\nodesepy*\nodey+\y*\va);
 }

 \node [draw, circle,minimum size=\radius, anchor=center,inner sep=0pt,line width=1pt] at (\h+\nodesepx*\nodex+\obsx*\ha+0.5*\ha,\v-\nodesepy*\nodey+\obsy*\va+0.5*\va) {};

 \node [draw, rectangle,minimum width=\radius,minimum height=\radius, anchor=center,inner sep=0pt,line width=1pt] at (\h+\nodesepx*\nodex+\robx*\ha+0.5*\ha,\v-\nodesepy*\nodey+\roby*\va+0.5*\va) {};
 
  }

  
\foreach \nodex/\nodey/\obsx/\obsy/\obsxb/\obsyb/\robx/\roby in {2/0/0/0/1/0/2/0,4/0/1/0/2/0/0/0} {
 
  \draw [line width=\wa] (\h+\nodesepx*\nodex,\v-\nodesepy*\nodey) rectangle (\h+\nodesepx*\nodex+\ha*\hm,\v-\nodesepy*\nodey+\va*\vm);
 \draw [step=\ha,line width=0.1pt] (\h+\nodesepx*\nodex,\v-\nodesepy*\nodey) grid (\h+\nodesepx*\nodex+\ha*\hm,\v-\nodesepy*\nodey+\va*\vm);
   \foreach \y/\xi/\xa in {1/0/1,1/2/\hm} {
 \draw [line width=\wa] (\h+\nodesepx*\nodex+\xi*\ha,\v-\nodesepy*\nodey+\y*\va) -- (\h+\nodesepx*\nodex+\xa*\ha,\v-\nodesepy*\nodey+\y*\va);
 }
 \node [draw, circle,minimum size=\radius, anchor=center,inner sep=0pt,dotted,line width=1pt] at (\h+\nodesepx*\nodex+\obsx*\ha+0.5*\ha,\v-\nodesepy*\nodey+\obsy*\va+0.5*\va) {};
 \node [draw, circle,minimum size=\radius, anchor=center,inner sep=0pt,dotted,line width=1pt] at (\h+\nodesepx*\nodex+\obsxb*\ha+0.5*\ha,\v-\nodesepy*\nodey+\obsyb*\va+0.5*\va) {};

 \node [draw, rectangle,minimum width=\radius,minimum height=\radius, anchor=center,inner sep=0pt,line width=1pt] at (\h+\nodesepx*\nodex+\robx*\ha+0.5*\ha,\v-\nodesepy*\nodey+\roby*\va+0.5*\va) {};
  
  \node (help) at (\h+\nodesepx*\nodex+0.5*\hm*\ha,\v-\nodesepy*\nodey+\vm*\va) {};
  \path[->](help) edge[loop above,line width=1pt] node[] {} (help);
  
  }
  
  \def\nodex{4} 
  \foreach \obsx in {0,...,2} {
  \node [draw, circle,minimum size=\radius, anchor=center,inner sep=0pt,dotted,line width=1pt] at (\h+\nodesepx*\nodex+\obsx*\ha+0.5*\ha,\v+1*\va+0.5*\va) {};
  }
  
  \foreach \a/\b in {0/4,1/0,2/1,3/2,4/3} {
  \node (lab) at (\h+\nodesepx*\a+0.5*\hm*\ha,\v-0.3*\vm*\va) {\small$q_{\b}$};
  }
  
  \def\arc{0.5}
  \def\arcsep{0.05}
  
  \foreach \nodex in {2,...,4} {
  \node [inner sep=0pt] (a) at (\h+\nodesepx*\nodex-\arc-\arcsep,\v+\va) {};
  \node [inner sep=0pt] (b) at (\h+\nodesepx*\nodex-\arcsep,\v+\va) {};  
    \path[->] (a) edge[line width=1pt] node[] {} (b);
  }
  
    \node [inner sep=0pt] (a) at (\h+\nodesepx-\arc-\arcsep,\v+\va) {};
  \node [inner sep=0pt] (b) at (\h+\nodesepx-\arcsep,\v+\va) {};  
    \path[->] (b) edge[line width=1pt] node[] {} (a);
  
  
    \node (help) at (\h+\nodesepx*0+0.5*\hm*\ha,\v+\vm*\va) {};
  \path[->](help) edge[loop above,line width=1pt] node[] {} (help);
  
  \end{footnotesize}
 \end{tikzpicture}
\end{center}
 \vspace*{-9mm}
 \caption{Pictorial representation of the \emph{GR(1) winning
     strategy} synthesized by \texttt{slugs} for the robot (square) in
   the game described in
   \REFfig{fig:maze_3x2_4FP}.}\label{fig:maze_3x2_3FP} 
 \vspace*{-9mm}
\end{figure}
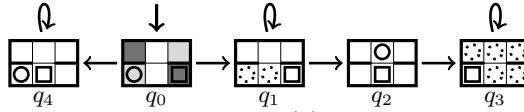

The outlined synthesis problem can be formalized as a two player game
with GR(1) winning condition. When solving this synthesis problem
using the tool \texttt{slugs} \cite{slugs}, we obtain the strategy
depicted in \REFfig{fig:maze_3x2_3FP} (not the desired one in
\REFfig{fig:maze_3x2_4FP}). The initial state, denoted by $q_0$ is the
same as in \REFfig{fig:maze_3x2_4FP} and if the environment moves the
obstacle into the middle passage ($q_1$) the robot reacts as before;
it waits until the object eventually proceeds to the upper part of the
maze ($q_2$). However, after this happens the robot takes the chance
to simply move to the lower left cell of the maze and stays there
forever ($q_3$). By this, the robot prevents the environment from
fulfilling its objective. Similarly, if the obstacle does not
immediately start moving in $q_0$, the robot takes the chance to place
itself in the middle passage and stays there forever ($q_4$). This
obviously prevents the environment from fulfilling its liveness
properties. 
 
In contrast, when using our new algorithm to solve
the given synthesis problem, we obtain the strategy
given in \REFfig{fig:maze_3x2_4FP}, which satisfies the guarantees while allowing
the environment assumptions to be satisfied.

%

\smallskip
\noindent\textbf{Related Work}
Our algorithm is inspired by supervisory controller synthesis for
non-terminating processes \cite{RamadgeOmega,Thistle94}, resulting in
a fixed-point algorithm over a Rabin-Büchi automaton. This algorithm
has been simplified for two interacting Büchi automata in
\cite{Moor_Report_omegaSCT} without proof. We adapt this algorithm to
GR(1) games and provide a new, self-contained proof in the framework
of two-player games, which is distinct from the supervisory controller
synthesis setting (see \cite{EhlersJdeds,SchmuckMoorMajumdar_2018} for a recent comparison of both frameworks).

The problem of correctly handling assumptions in synthesis has
recently gained attention in the reactive synthesis community
\cite{AssumptionsInSynthesis}.  
As our work does not assume precise knowledge about the environment
strategy (or the ability to impose the latter), it is distinct from
cooperative approaches such as assume-guarantee
 \cite{ChatterjeeHenzinger_2007} or rational synthesis \cite{RationalSynthesis_2010}. It is closest related to obliging games
\cite{ChatterjeeHornLoeding_ObligingGames_2010},
cooperative reactive synthesis \cite{BloemEhlersKoenighofer_2015}, and
assume-admissible synthesis \cite{BrenguierRaskinSankur_2017}.  
Obliging games \cite{ChatterjeeHornLoeding_ObligingGames_2010}
incorporate a similar notion of non-conflictingness as our work,  
but do not condition winning of the system on the environment
fulfilling the assumptions. This makes obliging games harder to win.   
Cooperative reactive synthesis \cite{BloemEhlersKoenighofer_2015}
tries to find a winning strategy 
enforcing $A\cap G$. If this specification is not realizable, it is
relaxed and the obtained system strategy enforces the 
guarantees if the environment cooperates \enquote{in the right
  way}. Instead, our work always assumes the same form of
cooperation; coinciding with just one cooperation lever in
\cite{BloemEhlersKoenighofer_2015}. 
Assume-admissible synthesis \cite{BrenguierRaskinSankur_2017} for two
players results in two individual synthesis problems. Given that both
have a solution, only implementing the system strategy ensures that
the game will be won if the environment plays \emph{admissible}. This
is comparable to the view taken in this paper, however, assuming that
the environment plays \emph{admissible} is stronger then our
assumption on an environment attaining its liveness properties if not
prevented from doing so. Moreover, we only need so solve one synthesis
problem, instead of two. 
However, it should be noted that 
\cite{ChatterjeeHornLoeding_ObligingGames_2010,BloemEhlersKoenighofer_2015,BrenguierRaskinSankur_2017} handle $\omega$-regular assumptions
and guarantees. We focus on the practically important GR(1) fragment and our method
better leverages the computational benefits for this fragment.  

\section{Two Player Games and the Synthesis Problem}\label{sec:prelim}


\subsection{Two Player Games}
%
\noindent\textbf{Formal Languages}
Let $\Sigma$ be a finite alphabet. 
We write $\Sigma^*$, $\Sigma^+$, and $\Sigma^\omega$ for the sets of finite words, non-empty finite words,
and infinite words over $\Sigma$. 
%
%
We write $w\le v$ (resp., $w<v$) if $w$ is a prefix of $v$ (resp., a
strict prefix of $v$). 
The set of all prefixes of a word $w\in\Sigma^\omega$ is denoted
$\OpPre{w}\subseteq \Sigma^*$. 
For $L\subseteq\Sigma^*$, we have $L\subseteq \OpPre{L}$. 
%
For ${\cal L} \subseteq \Sigma^\omega$ we denote by
$\overline{\cal L}$ its complement $\Sigma^\omega\setminus {\cal L}$. 

\smallbreak
\noindent\textbf{Game Graphs and Strategies}
A \emph{two player game graph} $\Gg = \Tuple{\Qz,\Qo, \Trz,\Tro,q_0}$ consists of two finite disjoint state sets $\Qz$ and $\Qo$, two transition functions $\Trz: \Qz \rightarrow \twoup{\Qo}$ and $\Tro: \Qo \rightarrow \twoup{\Qz}$,
and an initial state $q_0\in\Qz$. We write $\Q = \Qz\cup\Qo$.
%
Given a game graph $\Gg$, a \emph{strategy} for player $0$ is a function $\fz{}: (\Qz\Qo)^*\Qz\rightarrow \Qo$; it is \emph{memoryless} if $\fz{}(\nu q^0) = \fo{q^0}$ for all $\nu \in (\Qz\Qo)^*$ and all $q^0\in \Qz$. 
A \emph{strategy} $\fo{}: (\Qz\Qo)^+\rightarrow \Qz$ for player $1$ is defined analogously.
%
The infinite sequence $\pi\in(\Qz\Qo)^\omega$ is called a play over $\Gg$ if $\pi(0)=q_0$ and for all $k\in\mathbb{N}$ holds that $\pi(2k+1)\in\Trz(\pi(2k))$ and $\pi(2k+2)\in\Tro(\pi(2k+1))$; $\pi$ is compliant with $\fz{}$ and/or $\fo{}$ if 
additionally holds that
$\fz{}(\pi|_{[0,2k]}) = \pi(2k+1)$ and/or  $\fo{}(\pi|_{[0,2k+1]}) = \pi(2k+2)$. We denote by $\Lomega(H,\fz{})$, $\Lomega(H,\fo{})$ and $\Lomega(H,\fz{},\fo{})$ the set of plays over $\Gg$ compliant with $\fz{}$, $\fo{}$, and both $\fz{}$ and $\fo{}$, respectively.

\smallbreak
\noindent\textbf{Winning Conditions}
We consider winning conditions defined over sets of states of a given game graph $\Gg$.
Given $F\subseteq \Q$, we say a play $\pi$ satisfies the \emph{B\"uchi condition} $F$
if $\OpInf{\pi}\cap F \neq \emptyset$, where $\OpInf{\pi} = \set{q\in \Q \mid \pi(k)=q \mbox{ for infinitely many }k\in\mathbb{N}}$.
Given a set $\Fc = \Set{F_1,\hdots,F_m}$, where each $F_i\subseteq \Q$, we say
a play $\pi$ satisfies the \emph{generalized B\"uchi condition} $\Fc$
if $\OpInf{\pi}\cap F_i \neq \emptyset$ for each $i\in [1;m]$.
We additionally consider generalized reactivity winning conditions with rank 1 (GR(1) winning conditions in short).
Given two generalized B\"uchi conditions $\Fsz=\Set{\Fz_1,\hdots ,\Fz_m}$ and  $\Fso=\Set{\Fo_1,\hdots, \Fo_n}$,
a play $\pi$ satisfies the GR(1) condition
if either 
$\OpInf{\pi}\cap \Fz_i = \emptyset$ for some $i\in [1;m]$
or 
$\OpInf{\pi}\cap \Fo_j \neq \emptyset$ for each $j\in [1;m]$.
That is, whenever the play satisfies $\Fsz$, it also satisfies $\Fso$.
We use the tuples $(\Gg,F)$, $(\Gg,\Fc)$ and $(\Gg,\Fsz,\Fso)$ to denote a B\"uchi, generalized B\"uchi and GR(1) game over $\Gg$, respectively, and collect all winning plays in these games in the sets $\mathcal{L}(\Gg,F)$, $\mathcal{L}(\Gg,\Fc)$ and $\mathcal{L}(\Gg,\Fsz,\Fso)$. 
A strategy $f^l{}$ is \emph{winning} for player $l$ in a B\"uchi, generalized B\"uchi, or GR(1) game, if $\Lomega(H,f^l)$ is contained in the respective set of winning plays.

\smallbreak
\noindent\textbf{Set Transformers on Games}
Given a game graph $\Gg$, we define the existential, universal, and player
$0$-, and player $1$-controllable pre-operators. Let $P\subseteq Q$.
\begin{align}
\allowdisplaybreaks
\PreE(P) =
 &\SetCompX{\qz\in\Qz}{\Trz(\qz)\cap P\neq\emptyset}
 \cup\SetCompX{\qo\in\Qo}{\Tro(\qo)\cap P\neq\emptyset},~\text{and}\label{equ:PreE}\\
 \PreA(P)=
 &\SetCompX{\qz\in\Qz}{\Trz(\qz)\subseteq P}\cup\SetCompX{\qo\in\Qo}{\Tro(\qo)\subseteq P},\label{equ:PreA}\\
\Prez(P) =
 &\SetCompX{\qz\in\Qz}{\Trz(\qz)\cap P\neq\emptyset}
 \cup\SetCompX{\qo\in\Qo}{\Tro(\qo)\subseteq P},~\text{and}\label{equ:Prez}\\
 \Preo(P)=
 &\SetCompX{\qz\in\Qz}{\Trz(\qz)\subseteq P}\cup\SetCompX{\qo\in\Qo}{\Tro(\qo)\cap P\neq\emptyset}.\label{equ:Preo}
\end{align} 
Observe that $Q \setminus\PreE(P)=\PreA(Q \setminus P)$ and 
$Q \setminus\Preo(P)=\Prez(Q \setminus P)$. 

\comment{
Given a game graph $\Gg$, we define the existential and universal pre-operators for some $P\subseteq Q$ as
\begin{align}
\allowdisplaybreaks
\PreE(P) =
 &\SetCompX{\qz\in\Qz}{\Trz(\qz)\cap P\neq\emptyset}
 \cup\SetCompX{\qo\in\Qo}{\Tro(\qo)\cap P\neq\emptyset},~\text{and}\label{equ:PreE}\\
 \PreA(P)=
 &\SetCompX{\qz\in\Qz}{\Trz(\qz)\subseteq P}\cup\SetCompX{\qo\in\Qo}{\Tro(\qo)\subseteq P},\label{equ:PreA}
\end{align} 
and observe that $Q \setminus\PreE(P)=\PreA(Q \setminus P)$. 
For $l\in\set{0,1}$, we furthermore define
the \emph{player-$l$ controllable predecessor} for some $P\subseteq Q$ by
\begin{align}
\allowdisplaybreaks
\Prez(P) =
 &\SetCompX{\qz\in\Qz}{\Trz(\qz)\cap P\neq\emptyset}
 \cup\SetCompX{\qo\in\Qo}{\Tro(\qo)\subseteq P},~\text{and}\label{equ:Prez}\\
 \Preo(P)=
 &\SetCompX{\qz\in\Qz}{\Trz(\qz)\subseteq P}\cup\SetCompX{\qo\in\Qo}{\Tro(\qo)\cap P\neq\emptyset}.\label{equ:Preo}
\end{align} 
Intuitively, $\Pre^l(P)$ computes all states from which player $l$ can
force a visit to $P$ in one step.
It holds that $Q \setminus\Preo(P)=\Prez(Q \setminus P)$. 
}

We combine the operators in \eqref{equ:PreE}-\eqref{equ:Preo} to define a \emph{conditional predecessor} 
$\OpPreTD{}$ and its dual $\OpPreTDc{}$ for sets $P,P'\subseteq \Q$ by
\begin{align}
\allowdisplaybreaks
\OpPreTD{P,\,P'}:= &\PreE(P) \cap \Preo(P\cup P'),~\text{and}\label{equ:newPre}\\
\OpPreTDc{P,\,P'}:= &\PreA(P) \cup \Prez(P\cap P').\label{equ:newPre_neg}
\end{align}
\full{
Intuitively, $\OpPreTD{}$ computes the set of states from which $P$
is reachable in one step and player $1$ can force a visit to $P\cup P'$ in one step. 
Likewise, $\OpPreTDc{}$ computes the set of states from which either
player $0$ can force a visit to $P\cap P'$ in one step or neither player can
force the game to leave $P$ in one step.
}
We see that 
$Q \setminus\OpPreTD{P,\,P'}=\OpPreTDc{Q\setminus P,\,Q \setminus P'}$.

\smallbreak
\noindent\textbf{$\mu$-Calculus}
We use the  $\mu$-calculus as a convenient logical notation used to define a symbolic algorithm
(i.e., an algorithm that manipulates sets of states rather then
individual states) for computing a set of states with a particular
property over a given game graph $\Gg$. 
The formulas of the $\mu$-calculus, interpreted over a two-player game graph $\Gg$, 
are given by the grammar
\[
\varphi\; ::= \; p \mid X \mid \varphi \cup \varphi \mid \varphi_1 \cap \varphi_2 \mid \mathit{pre}(\varphi) \mid \mu X.\varphi \mid \nu X.\varphi
\]
where $p$ ranges over subsets of $Q$, $X$ ranges over a set of formal variables,
$\mathit{pre}\in \set{\PreE,\PreA,\Prez,\Preo,\OpPreTD{},\OpPreTDc{}}$ ranges over set transformers,
and $\mu$ and $\nu$ denote, respectively, the least and greatest fixpoint of the functional
defined as $X \mapsto \varphi(X)$.
Since the operations $\cup$, $\cap$, and the set transformers $\mathit{pre}$ are all monotonic,
the fixpoints are guaranteed to exist.
A $\mu$-calculus formula evaluates to a set of states over $\Gg$, and the set can be computed
by induction over the structure of the formula, where the fixpoints are evaluated by iteration.
We omit the (standard) semantics of formulas \cite{Kozen83}.

\subsection{The Considered Synthesis Problem}

The GR(1) synthesis problem asks to synthesize a winning strategy for
the system player (player $1$) for a given GR(1) game
$(\Gg,\FcA,\FcG)$ or determine that no such strategy exists. This can
be equivalently represented in terms of $\omega$-languages, by asking
for a system strategy $\fo{}$ over $\Gg$ s.t.\
\begin{equation*}
 \emptyset\neq\Lomega(\Gg,\fo{})\subseteq \overline{\Lomega(\Gg,\FcA)} \cup \Lomega(\Gg,\FcG).
\end{equation*}
That is, the system wins on plays $\pi\in\Lomega(\Gg,\fo{})$
if either $\pi\notin\Lomega(\Gg,\FcA)$ or
$\pi\in\Lomega(\Gg,\FcA)\cap\Lomega(\Gg,\FcG)$. 
The only mechanism to ensure that \emph{sufficiently} many
computations will result from $\fo{}$ is the usage of the environment
input, which enforces a minimal branching structure.
However, the system could still win this game by \emph{falsifying the
  assumptions}; i.e., by generating plays
$\pi\notin\Lomega(\Gg,\FcA)$ that prevent the environment from
fulfilling its liveness properties.  

We suggest an alternative view to the usage of the assumptions on the
environment $\FcA$ in a GR(1) game.
The condition $\FcA$ can be interpreted abstractly as modeling an
underlying mechanism that ensures that the environment player (player
$0$) generates only inputs (possibly in response to observed outputs)
that conform with the given assumption.
In this context, we would like to ensure that the system (player $1$) allows the
environment, as much as possible, to fulfill its liveness and only \emph{restricts} the environment behavior 
if needed to enforce the guarantees. 
%
We achieve this by forcing the system player to ensure that the
environment is always able to play such that it fulfills its liveness, i.e.
\begin{equation*}
 \OpPre{\Lomega(\Gg,\fo{})} = \OpPre{\Lomega(\Gg,\fo{})\cap\Lomega(\Gg,\FcA)}.
\end{equation*}
As the $\supseteq$-inclusion trivially holds, the constraint is given by the
$\subseteq$-inclusion.
Intuitively, the latter holds if every finite
play $\alpha$ compliant with $\fo{}$ over $\Gg$
can be extended (by a suitable environment strategy) to an infinite play $\pi$ compliant with
$\fo{}$ that fulfills the environment liveness assumptions. 
It is easy to see that not every solution to the GR(1) game
$(\Gg,\FcA,\FcG)$ (in the classical sense) supplies this additional
requirement.
We therefore propose to synthesize a system strategy $\fo{}$ with the
above properties, as summarized in the following problem statement.

\begin{problem}\label{PS}
Given a GR(1) game $(\Gg,\FcA,\FcG)$ synthesize a system strategy
$\fo{}$
\begin{subequations}\label{equ:PS}
 \begin{align}
 &\text{s.t.}\quad \emptyset\neq\Lomega(\Gg,\fo{})\subseteq \overline{\Lomega(\Gg,\FcA)} \cup \Lomega(\Gg,\FcG),\label{equ:LanguageSpecOld}\\
  &\text{and}\quad  \OpPre{\Lomega(\Gg,\fo{})} = \OpPre{\Lomega(\Gg,\fo{})\cap\Lomega(\Gg,\FcA)}\label{eq:SCT_nonblocking}
\end{align}
\end{subequations}
both hold, or verify that no such system strategy exists. \EndBox
\end{problem}

\REFproblem{PS} asks for a strategy $\fo{}$ s.t.\ every play $\pi$ compliant with $\fo{}$ over $\Gg$ fulfills the system
guarantees, i.e., $\pi\in\Lomega(\Gg,\FcG)$, if the environment
fulfills its liveness properties, i.e., if
$\pi\in\Lomega(\Gg,\FcA)$ (from \eqref{equ:LanguageSpecOld}),
while the latter always remains possible (by a suitably playing environment) due to
\eqref{eq:SCT_nonblocking}.
Inspired by algorithms solving the supervisory controller synthesis
problem for non-terminating processes \cite{RamadgeOmega,Thistle94},
we propose a solution to \REFproblem{PS} in terms of a vectorized
4-nested fixed-point in the remaining part of this paper. We show
that \REFproblem{PS} can be solved by a finite-memory strategy, if a
solution exists.

\new{
We note that \eqref{eq:SCT_nonblocking} is not a linear time
but a branching time property and can therefore not be 
``compiled away'' into a different GR(1) or even $\omega$-regular objective.
Satisfaction of \eqref{eq:SCT_nonblocking} requires checking whether the set $\FA$ remains reachable from any reachable state in the game graph realizing $\Lomega(\Gg,\fo{})$%
\footnote{It can indeed be expressed by the CTL$^*$ formular $\mathsf{AGEF}\FA$ (see \cite{EhlersJdeds}, Sec. 3.3.2).}%
.
%
}
\fullb
This is made clear by the example in \REFfig{fig:branching}. The game graph $\Gg'$ (\REFfig{fig:branching}, left) realizes a language $\Lomega(\Gg,\fo{})$ which is non-conflicting for $\FA=\set{q_5}$ as $q_5$ is reachable from all states in $\Gg'$. However, reducing this language to the single trace $q_0 q_1 (q_2 q_3 q_4)^\omega$ realized by the game graph $\Gg''$ (\REFfig{fig:branching}, right) shows that the property does not hold anymore. Hence, non-conflictingness is not a trace property.
 \begin{figure}
 \begin{center}
    \begin{tikzpicture}[auto,scale=1]
     
	    \node (init) at (-1,0) {};
	    \node[mystate] (q0) at (0,0) {$q_0$};
	    \node[draw,mysquare] (q1) at (1,0) {$q_1$};        
	    \node[mystate] (q2) at (2,0) {$q_2$};
	    \node[draw,mysquare] (q3) at (3,0) {$q_3$};        
	    \node[mystate] (q4) at (4,0) {$q_4$};
          \node[draw,mysquare, fill=black!15] (q5) at (3,1) {$q_5$};
          
%
%
%
%
\SFSAutomatEdge{init}{}{q0}{semithick}{}  
\SFSAutomatEdge{q0}{}{q1}{}{}  
\SFSAutomatEdge{q1}{}{q2}{}{}
\SFSAutomatEdge{q2}{}{q3}{}{}  
\SFSAutomatEdge{q3}{}{q4}{}{}
\SFSAutomatEdge{q4}{}{q2}{bend left}{}
\SFSAutomatEdge{q2}{}{q5}{bend left}{}
\SFSAutomatEdge{q5}{}{q4}{bend left}{}

     \end{tikzpicture}
     \hspace{1cm}
     \begin{tikzpicture}[auto,scale=1]
     
          \node (init) at (-1,0) {};
          \node[mystate] (q0) at (0,0) {$q_0$};
          \node[draw,mysquare] (q1) at (1,0) {$q_1$};        
          \node[mystate] (q2) at (2,0) {$q_2$};
          \node[draw,mysquare] (q3) at (3,0) {$q_3$};        
          \node[mystate] (q4) at (4,0) {$q_4$};
           \node[draw,mysquare, fill=black!15] (q5) at (3,1) {$q_5$};

\SFSAutomatEdge{init}{}{q0}{semithick}{}  
\SFSAutomatEdge{q0}{}{q1}{}{}  
\SFSAutomatEdge{q1}{}{q2}{}{}
\SFSAutomatEdge{q2}{}{q3}{}{}  
\SFSAutomatEdge{q3}{}{q4}{}{}
\SFSAutomatEdge{q4}{}{q2}{bend left}{}
     \end{tikzpicture}
 \end{center}
  \caption{Two game graphs $\Gg'$ (left) and $\Gg''$ (right). $\Gg'$ realizes an example language $\Lomega(\Gg,f^{1})$ which is non-conflicting for $\FA=\set{q_5}$ (indicated in gray). The right side shows a game graph $\Gg''$ realizing the sub language $\Lomega(\Gg'')=q_0 q_1 (q_2 q_3 q_4)^\omega\subset\Lomega(\Gg,f^{1})$. 
  }\label{fig:branching}
 \end{figure}
\fulle

\section{Algorithmic Solution for Singleton Winning Conditions}\label{sec:AlgoS}
We first consider the GR(1) game $(\Gg,\FcA,\FcG)$ with singleton winning
conditions $\FcA=\Set{\FA}$ and $\FcG=\Set{\FG}$, i.e., $n=m=1$.
It is well known that a system winning strategy $\fo{}$ for this game can be synthesized by
solving a three color parity game over $\Gg$.
This can
be expressed by the $\mu$-calculus formula  
(see \cite{EmersonJutla_1991}) 
\begin{align}\label{equ:3nestedFP}
\varphi_3:=\nu Z~.~\mu Y~.~\nu X~.~ (\FG\cap \Preo(Z)) \cup \Preo(Y) \cup (Q\setminus \FA\cap \Preo(X)).
\end{align}
It follows that $q_0\in \semantics{\varphi_3}$ if and only if the
synthesis problem has a solution and the winning strategy $\fo{}$ is obtained from a ranking
argument over the sets computed during the evaluation of
\eqref{equ:3nestedFP}.

To obtain a system strategy $\fo{}$ solving \REFproblem{PS} instead, we propose to extend \eqref{equ:3nestedFP} to a 4-nested fixed-point expressed by the $\mu$-calculus formula 
\begin{equation}\label{equ:new_FP}
  \begin{array}{l l}
  \varphi_4= & \nu Z~.~\mu Y~.~\nu X~.~\mu W~.\hfill\\
  & \multicolumn{1}{r}{
  (\FG\cap \Preo(Z)) ~\cup~
    \Preo(Y) ~\cup~ ((Q\setminus \FA)\cap \OpPreTD{W, X \setminus \FA}) }.
  \end{array}
\end{equation}

\newb
Compared to \eqref{equ:3nestedFP} this adds an inner-most largest fixed-point and
substitutes the last controllable pre-operator by the conditional
one. Intuitively, this distinguishes between 
states from which player $1$ can force visiting $\FG$ and states
from which player $1$ can force avoiding $\FA$. This is in contrast to 
\eqref{equ:3nestedFP} and allows to exclude strategies that allow
player $1$ to win by falsifying the assumptions. 
\full{This is further explained when discussing the example in \REFfig{fig:gamegraph1}.}

\newe

%
The remainder of this section shows that $q_0\in \semantics{\varphi_4}$ if and only if \REFproblem{PS} has a solution and the winning strategy $\fo{}$
fulfilling \eqref{equ:PS} can be obtained from a ranking
argument over the sets computed during the evaluation of
\eqref{equ:new_FP}.

\smallbreak
\noindent\textbf{Soundness}

\noindent
\label{sec:AlgoS:Soundness}
We prove soundness of \eqref{equ:new_FP} by showing that every state
$q\in\semantics{\varphi_4}$ 
is winning for the system player.
In view of \REFproblem{PS} this requires to show that there exists a system strategy $\fo{}$ s.t.\ all plays starting
in a state $q\in \semantics{\varphi_4}$ and evolving in accordance to
$\fo{}$ result in an infinite play that fulfills
\eqref{equ:LanguageSpecOld} and \eqref{eq:SCT_nonblocking}.  
%

%
We start by defining $\fo{}$ from a ranking argument over
the iterations of \eqref{equ:new_FP}.
Consider the last iteration of the fixed-point in \eqref{equ:new_FP}
over $Z$. As \eqref{equ:new_FP} terminates after this iteration we have $Z=Z^\infty=\semantics{\varphi_4}$.
Assume that the fixed point over $Y$ is reached after $k$ iterations. 
If $Y^i$ is the set obtained after the $i$-th iteration, we have that
$Z^\infty=\bigcup_{i=0}^k Y^i$ with $Y^i\subseteq Y^{i+1}$,  
$Y^0=\emptyset$ and $Y^k=Z^\infty$. 
Furthermore, let $X^i=Y^i$ denote the fixed-point of the iteration
over $X$ resulting in $Y^i$ and denote by $W^i_j$ the set obtained in
the $j$th iteration  
over $W$ performed while using the value $X^i$ for $X$ and $Y^{i-1}$
for $Y$.
Then it holds that $Y^i=X^i=\bigcup^{l_i}_{j=0} W_j^i$ with
$W_j^i\subseteq W_{j+1}^i$, $W_0^i=\emptyset$ and $W_{l_i}^i=Y^i$ for
all $i\in[0;k]$.  

%
%
Using these sets, we define a ranking for every state $q\in Z^\infty$
s.t.\ 
\begin{equation}\label{equ:ranking}
 \rank(q)=(i,j) ~\text{iff}~q\in \BR{Y^i\setminus
   Y^{i-1}}\cap\BR{W^i_{j}\setminus W^i_{j-1}}~\text{for}~i,j>0. 
\end{equation}
We order ranks lexicographically. 
\new{It further holds that (see \REFapp{sec:proof:single:soundness})}
\begin{subequations}\label{equ:rank_prop}
\begin{align}
 q\in D&~\Leftrightarrow~\rank(q)=(1,1)~&&~\Leftrightarrow~q\in\FG\cap  Z^\infty\label{equ:rankFG}\\
 q\in E^i&~\Leftrightarrow~\propConj{\rank(q)=(i,1)}{i>1}\hspace{-0.5cm}~&&~\Leftrightarrow~q\in(\FA\setminus\FG)\cap  Z^\infty\label{equ:rankFA}\\
 q\in R^i_j&~\Leftrightarrow~\propConj{\rank(q)=(i,j)}{j>1}\hspace{-0.5cm}~&&~\Leftrightarrow~q\in(Z^\infty\setminus(\FA\cup\FG)),\label{equ:ranknoR}
\end{align}
\end{subequations}
where $D$, $E^i$ and $R^i_j$ denote the sets \emph{added} to the winning state set by the first, second and third term of \eqref{equ:new_FP}, respectively, in the corresponding iteration. 
 
%

\begin{figure}[t]
 \begin{center}
   \input{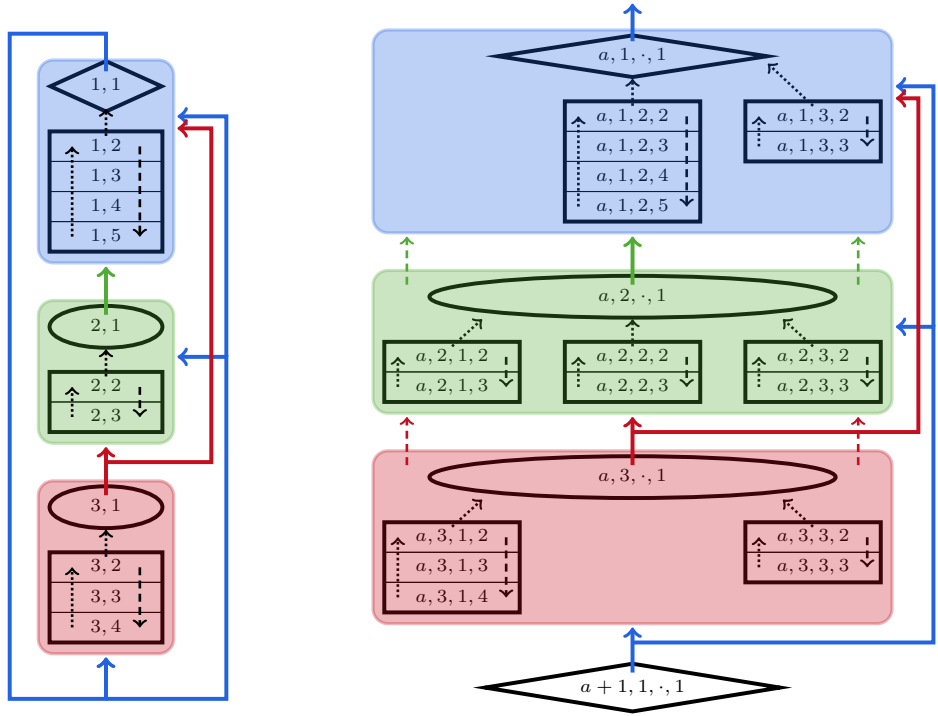}
 \end{center}
 \vspace*{-7mm}
  \caption{Schematic representation of the ranking defined in \eqref{equ:ranking} (left) and in \eqref{equ:ranking_a} (right). 
  Diamond, ellipses and rectangles represent the sets $D$, $E^i$ and $R^i_j$, while blue, green and red indicate the sets $Y^1$, $Y^2\setminus Y^1$ and $Y^3\setminus Y^2$ (annotated by $\ps{a}$/$\ps{ab}$ for the right figure).
  Labels $(i,j)$ and $(a,i,b,j)$ indicate that all states $q$ associated with this set fulfill $\rank(q)=(i,j)$ and $\ps{ab}\rank(q)=(i,j)$, respectively. Solid, colored arcs indicate system-enforceable moves, dotted arcs indicate existence of environment or system transitions and dashed arcs indicate possible existence of environment transitions.
  }\label{fig:strategy} 
 \vspace*{-6.5mm}
\end{figure}

\REFfig{fig:strategy} (left) shows a schematic representation of this construction for an example with $k=3$, $l_1=4$, $l_2=2$ and $l_3=3$. 
The set $D=\FG\cap  Z^\infty$ is represented by the diamond at the top where the label $(1,1)$ denotes the associated rank (see \eqref{equ:rankFG}). The ellipses represent the sets $E^i\subseteq(\FA\setminus\FG)\cap  Z^\infty$, where the corresponding $i>1$ is indicated by the associated rank $(i,1)$.  Due to the use of the controllable pre-operator in the first and second term of \eqref{equ:new_FP}, it is ensured that progress out of $D$ and $E^i$ can be enforced by the system, indicated by the solid arrows. 
This is in contrast to all states in $R^i_j\subseteq Z^\infty\setminus\FA\setminus\FG$, which are represented by the rectangular shapes in \REFfig{fig:strategy} (left). These states allow the environment to increase the ranking (dashed lines) as long as $Z^\infty\setminus\FA\setminus\FG$ is not left and there exists a possible move to decrease the $j$-rank (dotted lines).
While this does not strictly enforce progress, we see that whenever the environment plays such that states in $\FA$ (i.e., the ellipses) are visited infinitely often (i.e., the environment fulfills its assumptions), the system can enforce progress w.r.t.\ the defined ranking and states in $\FG$ (i.e., the diamond shape) is eventually visited. 
The system is restricted to take the existing solid or dotted transitions in \REFfig{fig:strategy} (left). With this, it is easy to see that the constructed strategy is winning if the environment fulfills its assumptions, i.e., \eqref{equ:LanguageSpecOld} holds. However, to ensure that \eqref{eq:SCT_nonblocking} also holds, we need an additional requirement. This is necessary as the used construction also allows plays to cycle through the blue region of \REFfig{fig:strategy} (left) only, and by this not surely visiting states in $\FA$ infinitely often. However, if $\Lomega(\Gg,\FG)\subseteq\Lomega(\Gg,\FA)$ we see that \eqref{eq:SCT_nonblocking} holds as well. It should be noted that the latter is a sufficient condition which can be easily checked symbolically on the problem instance but not a necessary one.

Based on the ranking in \eqref{equ:ranking} we define a memory-less system strategy $\fo{}:\Qo\cap Z^\infty\rightarrow \Qz\subseteq\Tro{}$ s.t.\ the rank is always decreased, i.e.,
\begin{equation}\label{equ:rank_new}
 \propImp{q'=\fo{q}}{
 \DiCases{\rank(q')<\rank(q)}{\rank(q)>(1,1)}{q'\in
   Z^\infty}{\text{otherwise}} 
 }.
\end{equation}
The next theorem shows that this strategy indeed solves \REFproblem{PS}.

\begin{theorem}\label{thm:Soundness_single}
 Let $(\Gg,\FcA,\FcG)$ be a GR(1) game with singleton winning
 conditions $\FcA=\Set{\FA}$ and $\FcG=\Set{\FG}$. Suppose $\fo{}$ is
 the system strategy in \eqref{equ:rank_new} based
 on the ranking in \eqref{equ:ranking}.
  Then it holds for all $q\in\semantics{\varphi_4}$ that%
 \footnote{Given a state $q\in Q=\Qz\cup\Qo$ we use the subscript $q$
   to denote that the respective set of plays is defined by using $q$ as
   the initial state of $\Gg$.}%
 \begin{subequations}\label{equ:Soundness}
 \begin{align}
  &\Lomega_q(\Gg,\fo{}) \subseteq \overline{\Lomega_q(\Gg,\FcA)} \cup \Lomega_q(\Gg,\FcG),\label{equ:Soundness:infinite}\\
  &\Lomega_q(\Gg,\fo{})\cap\Lomega_q(\Gg,\FcG)\neq\emptyset,~\text{and}\label{equ:Soundness:nonempty}\\
  &
  \Lomega_q(\Gg,\FcG)\rs\subseteq\rs\Lomega_q(\Gg,\FcA)
   \Rightarrow
   \OpPre{\Lomega_q(\Gg,\fo{})}\rs=\rs\OpPre{\Lomega_q(\Gg,\fo{})\rs\cap\rs\Lomega_q(\Gg,\FcA)}\rs.\label{equ:Soundness:nonblock}
 \end{align}
 \end{subequations}
\end{theorem}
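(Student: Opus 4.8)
The plan is to prove the three parts of \eqref{equ:Soundness} by a ranking analysis of plays compliant with $\fo{}$, using the characterization \eqref{equ:rank_prop} of the sets $D$, $E^i$, $R^i_j$ together with the pre-operators that generated them. First I would record two facts that underlie all three parts. \emph{(i) Every play compliant with $\fo{}$ stays in $Z^\infty$:} each state of $Z^\infty$ is added by one of the three terms of \eqref{equ:new_FP}, and each term lies in $\Preo(Z^\infty)$, since $Y^i,X^i,W^i_j\subseteq Z^\infty$ and $\OpPreTD{P,P'}=\PreE(P)\cap\Preo(P\cup P')\subseteq\Preo(P\cup P')$; hence at a player-$0$ state in $Z^\infty$ all successors lie in $Z^\infty$, while at a player-$1$ state $\fo{}$ keeps the play in $Z^\infty$ by \eqref{equ:rank_new}. \emph{(ii) Any $\FA$-visit forces the level (first rank component) to drop:} by \eqref{equ:rankFA} an $\FA\setminus\FG$ state at level $i$ is an $E^i$ state with rank $(i,1)$, and since $W^i_0=\emptyset$ it was added by the middle term, i.e.\ $q\in\Preo(Y^{i-1})$ (the $\FG$ conjunct is excluded as $q\notin\FG$); thus every successor, respectively the $\fo{}$-successor, lies in $Y^{i-1}$ and has level $<i$, whereas an $\FA\cap\FG$ state has rank $(1,1)$ by \eqref{equ:rankFG}.

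For \eqref{equ:Soundness:infinite} I would argue by contraposition: take a play $\pi$ compliant with $\fo{}$ visiting $\FG$ only finitely often and show it visits $\FA$ only finitely often. After the last $\FG$-visit $\pi$ avoids rank $(1,1)$ by \eqref{equ:rankFG}, so every remaining state is an $E^i$ or $R^i_j$ state. The level is then non-increasing along this suffix: at a player-$1$ state $\fo{}$ decreases the rank lexicographically; at a player-$0$ $E^i$ state all successors drop the level by~(ii); and at a player-$0$ $R^i_j$ state the conditional predecessor $\OpPreTD{W^i_{j-1},X^i\setminus\FA}\subseteq\Preo\bigl(W^i_{j-1}\cup(X^i\setminus\FA)\bigr)$ forces all successors into $W^i_{j-1}\cup(X^i\setminus\FA)\subseteq Y^i$, keeping the level $\le i$. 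As levels are positive integers, the level stabilizes at some $i^\ast$. By~(ii) the play cannot visit $\FA$ at level $i^\ast$ without dropping below $i^\ast$, contradicting stabilization; and if $i^\ast=1$ the only $\FA$- or $\FG$-states present are the $\FG$-states $D$, which are avoided. Hence $\pi$ visits $\FA$ finitely often, which is \eqref{equ:Soundness:infinite}.

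For \eqref{equ:Soundness:nonempty} and \eqref{equ:Soundness:nonblock} I would build a single cooperative extension. From any state of $Z^\infty$ with rank $>(1,1)$ the rank can be strictly decreased in one step: at a player-$1$ state by $\fo{}$; at a player-$0$ $E^i$ state because all successors have level $<i$; and at a player-$0$ $R^i_j$ state because the existential conjunct $\PreE(W^i_{j-1})$ of $\OpPreTD{\cdot,\cdot}$ offers a cooperative move into $W^i_{j-1}$, a set of rank $<(i,j)$. By well-foundedness of the lexicographic order this reaches a rank-$(1,1)$, i.e.\ $\FG$, state in finitely many steps, whereupon fact~(i) lets the play re-enter $Z^\infty$ and the construction repeats; the resulting play is compliant with $\fo{}$ and lies in $\Lomega_q(\Gg,\FcG)$, giving \eqref{equ:Soundness:nonempty}. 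For \eqref{equ:Soundness:nonblock} the $\supseteq$-inclusion of $\OpPre{\cdot}$ is trivial, so only $\subseteq$ remains: any finite prefix $\alpha$ compliant with $\fo{}$ ends in $Z^\infty$ by~(i), and extending it by the above construction yields a compliant continuation visiting $\FG$ infinitely often. Under the hypothesis $\Lomega_q(\Gg,\FcG)\subseteq\Lomega_q(\Gg,\FcA)$ this continuation also visits $\FA$ infinitely often, so $\alpha\in\OpPre{\Lomega_q(\Gg,\fo{})\cap\Lomega_q(\Gg,\FcA)}$.

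I expect the main obstacle to be the player-$0$ $R^i_j$ case in \eqref{equ:Soundness:infinite}: there the environment may \emph{increase} the inner rank $j$ while staying inside level $i$ and outside $\FA$, so a naive ``rank strictly decreases'' invariant fails. The remedy is to track only the level and to combine fact~(ii) (every $\FA$-visit strictly drops the level) with level stabilization. Getting the bookkeeping $W^i_{j-1}\subseteq Y^i$ and the division of labor between the two conjuncts of $\OpPreTD{\cdot,\cdot}$ exactly right — the $\Preo$-conjunct drives the forcing argument in \eqref{equ:Soundness:infinite}, while the $\PreE$-conjunct drives the cooperative construction in \eqref{equ:Soundness:nonempty} and \eqref{equ:Soundness:nonblock} — is the delicate part, and is precisely where the new innermost fixed point over $W$ pays off compared with \eqref{equ:3nestedFP}.
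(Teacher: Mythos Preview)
Your proposal is correct and follows essentially the same approach as the paper: the paper packages fact~(i) as \REFlem{lem:stayinZ}, then proves \eqref{equ:Soundness:infinite}, \eqref{equ:Soundness:nonempty}, and \eqref{equ:Soundness:nonblock} as \REFlem{lem:GFa imp GFg}, \REFlem{lem:existE}, and \REFlem{lem:f_nonconflice}, using exactly the case analysis $(a)$--$(c')$ on $D$, $E^i$, $R^i_j$ that you outline and the same split of $\OpPreTD{\cdot,\cdot}$ into its $\Preo$ and $\PreE$ conjuncts. The only cosmetic difference is that the paper argues \eqref{equ:Soundness:infinite} directly (if $\FA$ is visited infinitely often then the first rank component drops infinitely often, which forces infinitely many passes through rank $(1,1)$), whereas you phrase the same argument contrapositively via level stabilization; your anticipation of the player-$0$ $R^i_j$ subtlety (the $j$-rank may increase while the level does not) matches the paper's case~$(c')$ precisely.
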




\smallbreak
\noindent\textbf{Completeness}

\noindent
\label{sec:AlgoS:Completeness}
We show completeness of \eqref{equ:new_FP} by establishing that
every state $q\in Q\setminus\semantics{\varphi_4}=\semantics{\overline{\varphi}_4}$ is losing for the system player. In view of \REFproblem{PS} this requires to show that for all $q\in\semantics{\overline{\varphi}_4}$ and all system strategies $\fo{}$ either \eqref{equ:LanguageSpecOld} or \eqref{eq:SCT_nonblocking} does not hold. 
This is formalized 
in \REFapp{sec:proof:completeness} 
by first negating the fixed-point in \eqref{equ:new_FP} and deriving the induced ranking of this negated fixed-point. 
Using this ranking, we first show that the environment can 
(i) render the negated winning set $\overline{Z}^\infty$ invariant 
and 
(ii) can always enforce the play to visit $\FG{}$ only finitely often, resulting in a violation of the guarantees.
Using these observations we finally show that whenever \eqref{equ:LanguageSpecOld} holds for an arbitrary system 
strategy $\fo{}$ starting in $\semantics{\overline{\varphi}_4}$, then \eqref{eq:SCT_nonblocking} cannot hold. 
With this, completeness, as formalized in the following theorem, directly follows.

\begin{theorem}\label{thm:Completeness_single}
 Let $(\Gg,\FcA,\FcG)$ be a GR(1) game with singleton winning
 conditions $\FcA=\Set{\FA}$ and $\FcG=\Set{\FG}$.
 Then it holds for all $q\in \semantics{\overline{\varphi}_4}$ and all system strategies $\fo{}$ over $\Gg$ that either 
\begin{subequations}
 \begin{align}
&\emptyset\neq\Lomega_q(\Gg,\fo{})\subseteq \overline{\Lomega_q(\Gg,\FcA)} \cup \Lomega_q(\Gg,\FcG),~\text{or}\label{equ:LanguageSpecOld_q}\\
  \quad&\OpPre{\Lomega_q(\Gg,\fo{})} = \OpPre{\Lomega_q(\Gg,\fo{})\cap\Lomega_q(\Gg,\FcA)}~\text{does not hold.}\label{eq:SCT_nonblocking_q}
 \end{align}
\end{subequations}
\end{theorem}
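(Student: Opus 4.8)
The plan is to prove \REFthm{thm:Completeness_single} as the exact dual of the soundness construction, working entirely on the complement set. First I would negate the fixed point \eqref{equ:new_FP}: using the stated dualities $Q\setminus\Preo(P)=\Prez(Q\setminus P)$, $Q\setminus\PreE(P)=\PreA(Q\setminus P)$ and $Q\setminus\OpPreTD{P,P'}=\OpPreTDc{Q\setminus P,\,Q\setminus P'}$, together with De~Morgan and the swap of $\mu$ and $\nu$, the complement $\overline{Z}^\infty:=\semantics{\overline{\varphi}_4}=Q\setminus\semantics{\varphi_4}$ is the value of
\[
\overline{\varphi}_4=\mu\overline{Z}.\,\nu\overline{Y}.\,\mu\overline{X}.\,\nu\overline{W}.\;\big((Q\setminus\FG)\cup\Prez(\overline{Z})\big)\cap\Prez(\overline{Y})\cap\big(\FA\cup\OpPreTDc{\overline{W},\,\overline{X}\cup\FA}\big).
\]
From the iteration sequences of this $\mu\nu\mu\nu$-fixed point I would read off a four-component environment (player~$0$) ranking $(a,i,b,j)$ on $\overline{Z}^\infty$, the dual of the ranking \eqref{equ:ranking} shown on the right of \REFfig{fig:strategy}, and establish the analogue of \eqref{equ:rank_prop} characterizing the diamond/ellipse/rectangle cells in terms of membership in $\FG$, $\FA$ and the dualized pre-operators (the full bookkeeping being deferred to \REFapp{sec:proof:completeness}).

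Next I would extract from this ranking a single memoryless environment strategy $\fz{}$ and prove the two promised properties. For (i), invariance of $\overline{Z}^\infty$, the key observation is $\overline{Z}^\infty\subseteq\Prez(\overline{Z}^\infty)$: since the body is an intersection containing $\Prez(\overline{Y})$ and at the fixed point $\overline{Y}^\infty=\overline{Z}^\infty$, every system state in $\overline{Z}^\infty$ has $\Tro(\qo)\subseteq\overline{Z}^\infty$ (so the system cannot escape), while at every environment state $\fz{}$ selects a successor that stays inside; iterating shows that every $\fz{}$-play remains in $\overline{Z}^\infty$. For (ii) I would use the progress structure of the ranking: $\fz{}$ decreases the rank whenever possible and can be pushed up only across the $\nu\overline{W}$/$\nu\overline{Y}$ levels, whose re-entrant cells lie outside $\FG$; hence every $\fz{}$-play satisfies $\OpInf{\pi}\cap\FG=\emptyset$. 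Crucially I would design $\fz{}$ to be $\FA$-seeking inside $\overline{Z}^\infty$, i.e.\ to follow the $\Prez(\,\cdot\,\cap(\overline{X}\cup\FA))$ directions of the dualized conditional predecessor whenever they are available, so that forcing $\FG$ finitely often never requires the environment to decline an available move towards $\FA$.

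For the final combination, let $q\in\overline{Z}^\infty$ and let $\fo{}$ be any system strategy for which \eqref{equ:LanguageSpecOld_q} holds; I must show \eqref{eq:SCT_nonblocking_q} fails. Consider the play $\pi$ compliant with both $\fo{}$ and $\fz{}$. By (i) it stays in $\overline{Z}^\infty$, and by (ii) it visits $\FG$ finitely often, so $\pi\notin\Lomega_q(\Gg,\FcG)$; as \eqref{equ:LanguageSpecOld_q} holds this forces $\pi\notin\Lomega_q(\Gg,\FcA)$, i.e.\ $\FA$ is visited only finitely often along $\pi$. Let $\alpha$ be the prefix of $\pi$ ending just after the last visit to $\FA$, at some $\hat q\in\overline{Z}^\infty$. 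Assuming towards a contradiction that \eqref{eq:SCT_nonblocking_q} holds, $\alpha$ extends to some $\fo{}$-play visiting $\FA$ infinitely often. Because $\fz{}$ is $\FA$-seeking yet still forces $\FG$ finitely often on every continuation inside $\overline{Z}^\infty$, the environment can realise this $\FA$-progress while keeping $\FG$ finite, producing an $\fo{}$-compliant play with $\FA$ infinitely often and $\FG$ finitely often. Such a play lies in $\Lomega_q(\Gg,\FcA)\setminus\Lomega_q(\Gg,\FcG)$ and thus violates \eqref{equ:LanguageSpecOld_q}, a contradiction. Hence \eqref{eq:SCT_nonblocking_q} cannot hold, which is exactly the disjunction claimed.

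I expect the main obstacle to be property (ii) in the strong form actually required: showing that inside $\overline{Z}^\infty$ the environment can force $\OpInf{\pi}\cap\FG=\emptyset$ \emph{and} simultaneously stay non-$\FA$-blocking, so that a non-conflicting system is necessarily driven into an $\FA$-infinite, $\FG$-finite play. This is precisely what the innermost $\nu\overline{W}$ together with the dualized conditional predecessor $\OpPreTDc{\overline{W},\overline{X}\cup\FA}=\PreA(\overline{W})\cup\Prez(\overline{W}\cap(\overline{X}\cup\FA))$ is meant to deliver, and disentangling its two disjuncts --- the $\PreA(\overline{W})$ ``trap'' where $\FA$ is unreachable against $\fo{}$ versus the $\Prez$ part where the environment forces progress towards $\FA$ --- across the four nested ranking levels is the delicate step; by contrast, the invariance (i) and the negation of the fixed point are comparatively routine.
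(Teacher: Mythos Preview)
Your negation of $\varphi_4$ and the invariance argument (i) are fine, and a two-component environment ranking on $\overline{Z}^\infty$ (not four; the $(a,b)$ modes only appear in the vectorized case) does let player~0 force $\OpInf{\pi}\cap\FG=\emptyset$, as in the paper's \REFlem{lem:FGng}. The genuine gap is in your final combination step.

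Your contradiction argument assumes that if some $\fo{}$-play extending $\alpha$ visits $\FA$ infinitely often, then the environment can ``realise this $\FA$-progress while keeping $\FG$ finite''. This does not follow. The only way the environment can guarantee the given $\FA$-infinite continuation is to copy its environment moves verbatim; the resulting play is then exactly that continuation, and by \eqref{equ:LanguageSpecOld_q} it must also visit $\FG$ infinitely often---no contradiction. To get $\FA$ infinite \emph{and} $\FG$ finite you would need an environment strategy that forces both simultaneously against $\fo{}$, but \eqref{equ:LanguageSpecOld_q} says precisely that no such play exists. Your ``$\FA$-seeking'' $\fz{}$ cannot supply this: being $\FA$-seeking in the sense of preferring the $\Prez$-disjunct of $\OpPreTDc{}$ does not establish that $\FA$ is actually reached infinitely often against an arbitrary $\fo{}$, and you correctly flag this as the obstacle without resolving it.

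The paper avoids this trap by \emph{not} fixing a single environment strategy. Instead, for the given $\fo{}$ it builds a region $R_{\fo{}}\subseteq\overline{Z}^\infty$ by unwinding to a tree $T$: at environment nodes the cases $(a'),(b'),(c'2),(c'3)$ pick one successor, but case $(c'1)$---where \emph{all} successors stay in $\overline{Z}^\infty$ with non-increasing rank---keeps \emph{all} of them. König's lemma then shows that, under \eqref{equ:LanguageSpecOld_q}, the $\FA$-visiting cases $(b),(b'),(c'3)$ occur only finitely often in $T$ (else some branch would be $\FA$-infinite and $\FG$-finite), and subsequently $(c'2)$ is finite by rank decrease. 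Hence below some depth only $(c)$ and $(c'1)$ remain; these subtrees are closed under environment moves, so \emph{every} $\fo{}$-play is eventually confined to one of them and visits $\FA$ only finitely often. That yields $\OpPre{\Lomega_q(\Gg,\fo{})\cap\Lomega_q(\Gg,\FcA)}=\emptyset$ directly. The essential idea you are missing is this use of case $(c'1)$ to pass from ``one environment play'' to ``all $\fo{}$-plays''.
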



\begin{figure}[h!]
   \begin{center}
     \begin{tikzpicture}[auto,scale=1.5]
     
          \node (init) at (-0.5,0) {};
          \node[mystate,fill=black!15] (q0) at (0,0) {$q_0$};
          \node[draw,mysquare] (q1) at (2,0) {$q_1$};        
          \node[mystate] (q2) at (2.75,-0.75) {$q_2$};
          \node[draw,mysquare] (q3) at (3.75,-0.75) {$q_3$};
          \node[draw,fmysquare,fill=black!15] (q4) at (2,-1.5) {$q_4$};
          \node[mystate,fill=black!15] (q5) at (0.75,-1.5) {$q_5$};
          \node[draw,mysquare] (q6) at (0,-0.75) {$q_6$};
          \node[mystate,fill=black!15] (q7) at (-1,-0.75) {$q_7$};
          \node[mystate] (q8) at (2.75,0.75) {$q_8$};
          \node[draw,mysquare] (q9) at (3.75,0.75) {$q_9$};
          
          \begin{tiny}
          \node (q0a) at (0.2,0.3) {$2,1$};
          \node (q1a) at (2.25,0.25) {$1,3$};        
          \node (q2a) at (2.95,-0.45) {$1,2$};
          \node (q3a) at (3.95,-0.5) {$1,3$};
          \node (q4a) at (2.2,-1.2) {$1,1$};
          \node (q5a) at (0.95,-1.2) {$3,1$};
          \node (q6a) at (0.2,-0.5) {$2,2$};
          \node (q7a) at (-0.8,-0.45) {$3,1$};
          \end{tiny}
          
          \begin{tiny}
          \node (q0b) at (0.2,-0.3) {$2$};
          \node (q1b) at (2.25,-0.3) {$1$};        
          \node (q2b) at (2.95,-1.05) {$1$};
          \node (q3b) at (3.95,-1.05) {$1$};
          \node (q4b) at (2.2,-1.8) {$1$};
          \node (q5b) at (0.95,-1.8) {$3$};
          \node (q6b) at (0.2,-1.05) {$2$};
          \node (q7b) at (-0.8,-1.05) {$3$};
          \node (q8b) at (2.95,0.45) {$1$};
          \node (q9b) at (3.95,0.45) {$1$};
          \end{tiny}

\SFSAutomatEdge{init}{}{q0}{semithick}{}  
\SFSAutomatEdge{q0}{}{q1}{}{}  
\SFSAutomatEdge{q1}{}{q8}{bend left}{}
\SFSAutomatEdge{q1}{}{q2}{bend right}{}
\SFSAutomatEdge{q8}{}{q9}{bend left}{}
\SFSAutomatEdge{q9}{}{q8}{bend left}{}
\SFSAutomatEdge{q2}{}{q3}{bend left}{} 
\SFSAutomatEdge{q3}{}{q2}{bend left}{} 
\SFSAutomatEdge{q2}{}{q4}{bend left}{}
\SFSAutomatEdge{q4}{}{q5}{}{}
\SFSAutomatEdge{q5}{}{q6}{bend left}{}
\SFSAutomatEdge{q6}{}{q7}{bend left}{}
\SFSAutomatEdge{q7}{}{q6}{bend left}{}
\SFSAutomatEdge{q6}{}{q0}{}{}

     \end{tikzpicture}
\end{center}
\caption{Game graph $H_1$. States in $\Qz$ and $\Qo$ are indicated by circles and squares, respectively. States in $\FA$ and $\FG$ are indicated by light gray with a single and a double boundary, respectively. The small numbers on the top right and bottom right of each state indicate the ranking induced for all states in $Z^\infty$ by \eqref{equ:new_FP} and \eqref{equ:3nestedFP}, respectively. }\label{fig:gamegraph1}
\end{figure}
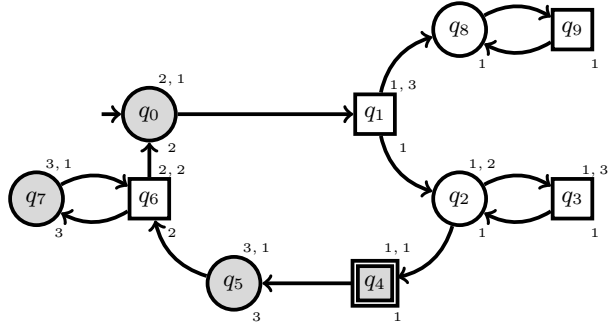

\fullb
\smallbreak
\noindent\textbf{Example}

\noindent
Consider the game graph $H_1$ in \REFfig{fig:gamegraph1}. Running the fixed-point in \eqref{equ:new_FP} for $H_1$ induces the ranking defined in \eqref{equ:ranking} as indicated in the top right of every winning state. Here, the evaluation of the fixed-point is particularly simple as the smallest fixed-points over $X$ and $Z$ never remove states; we therefore concentrate on the maximal fixed-points over $W$ and $Y$. In the first iteration over $W$, we start with $\FG$ (i.e., $q_4$) and successively enlarge this set by states that can reach $W$ (i.e., have a path to $\FG$) and can be forced by player $1$ to stay within $Q\setminus\FA$. This is true for all states with rank $(1,\cdot)$	, i.e., $q_1$ to $q_4$, giving $D=\set{q_1}$, $R^1_2=\set{q_2}$ and $R^1_3=\set{q_1,q_3}$. It is easy to see, that the environment can increase the rank during a play by going from $q_2$ to $q_3$ (i.e., moving from $R^1_2$ to $R^1_3$). However, whenever the environment fulfills its liveness property, it has to eventually transition to $q_4$ and, hence, the ranking is only increased finitely often. 

After this local state set is constructed, the pre-operator over $Y$ adds the assumption state $q_0$ to the fixed-point, indicated by the rank $(2,1)$ and resulting in $E^2=\set{q_0}$. Running the new fixed-point over $W$ now only adds $q_6$, giving $R^2_1=\set{q_6}$. Finally, $q_5$ and $q_7$ are added by the next iteration over $Y$, indicated by the rank $(3,1)$ and resulting in $E^3=\set{q_5,q_7}$. 

It is easy to see that  $q_8$ and $q_9$ are never added to the winning region, as they do not have a path to any $W$ constructed during the iteration over \eqref{equ:new_FP}, i.e., do not allow to reach $\FG$. By this, the strategy induced by this ranking via \eqref{equ:rank_new} always transitions from $q_1$ to $q_2$ and from $q_6$ to $q_0$, thereby avoiding to win by falsifying the assumptions.

Now consider the fixed-point in \eqref{equ:3nestedFP}, which induces a ranking over $Y$ as indicated in the bottom right of every winning state in \REFfig{fig:gamegraph1} (see \cite{Bloem_etal_2012} for a definition of the used ranking). Due to the missing inner fixed-point over $W$, the first iteration over $X$ is initialized directly with $\FG\cup(Q\setminus\FA)$, resulting in $Y^1=\set{q_1,\hdots,q_4,q_7,q_8}$. While the remaining iterations over $Y$ result in an equivalent $i$ ranking as in the new 4-nested fixed-point \eqref{equ:new_FP}, we see that $q_8$ and $q_9$ are now part of the winning region. Even worse, due to the structure of \eqref{equ:3nestedFP}, $q_8$ and $q_2$ have the same rank. I.e., the rank does not allow to distinguish between 
states from which player $1$ can force a visit to $\FG$ and states
from which player $1$ can force the play to stay inside
$Q\setminus\FA$. Therefore, is not possibly to construct a strategy via this ranking that avoids winning by falsifying the assumptions.

\fulle

\smallbreak
\noindent\textbf{A Solution for \REFproblem{PS}}

\noindent
\label{sec:AlgoS:PS}
We note that the additional assumption in 
\REFthm{thm:Soundness_single} is required only to ensure that the
resulting strategy fulfills \eqref{eq:SCT_nonblocking}.
Suppose that this assumption holds for the initial state $q_0$ of
$\Gg$.
That is, consider a GR(1) game $(\Gg,\FcA,\FcG)$ with singleton
winning conditions $\FcA=\Set{\FA}$ and $\FcG=\Set{\FG}$ s.t.\  
 $\Lomega(\Gg,\FG)\subseteq\Lomega(\Gg,\FA)$.
Then it follows from \REFthm{thm:Completeness_single} that
\REFproblem{PS} has a solution iff $q_0\in
\semantics{\varphi_4}$. Furthermore, if $q_0\in \semantics{\varphi_4}$,
based on the intermediate values maintained for the computation of
$\varphi_4$ in \eqref{equ:ranking} and the ranking defined in
\eqref{equ:rank_new}, we can construct $\fo{}$
that wins the GR(1) condition in \eqref{equ:LanguageSpecOld} and is
non-conflicting, as in \eqref{eq:SCT_nonblocking}. 

We can check symbolically whether
$\Lomega(\Gg,\FG)\subseteq\Lomega(\Gg,\FA)$. 
For this we
construct a game graph $\Gg'$ from $\Gg$ by removing all states in
$\FA$, and then check whether $\Lomega(\Gg',\FG)$ is empty. The latter
is decidable in logarithmic space and polynomial time. 
%
If this check fails, then
$\Lomega(\Gg,\FG)\not\subseteq\Lomega(\Gg,\FA)$.
Furthermore, we can replace $\Lomega(\Gg,\FcG)$ in
\eqref{equ:LanguageSpecOld} 
by $\Lomega(\Gg,\FcG)\cap\Lomega(\Gg,\FcA)$ without affecting the
restriction \eqref{equ:LanguageSpecOld} imposes on the choice of
$\fo{}$. Given singleton winning conditions $\FG$ and $\FA$, we see
that $\Lomega(\Gg,\FG)\cap\Lomega(\Gg,\FA)=\Lomega(\Gg,\set{\FG,\FA})$
and it trivially holds that
$\Lomega(\Gg,\set{\FG,\FA})\allowbreak\subseteq\Lomega(\Gg,\FA)$.
That is,
we fulfill the conditional by replacing the system guarantee $\Lomega(\Gg,\FcG)$ by
$\Lomega(\Gg,\set{\FG,\FA})$. However,
this results in a GR(1) synthesis problem with $m=1$ and
$n=2$, which we discuss next.

\section{Algorithmic Solution for GR(1) Winning Conditions}\label{sec:AlgoV}

We now consider a general GR(1) game $(\Gg,\FcA,\FcG)$ with
$\FcA=\Set{\FAa{1},\hdots,\FAa{m}}$ and
$\FcG=\Set{\FGa{1},\hdots,\FGa{n}}$ s.t.\ $n,m>1$. 
The known fixed-point for solving GR(1) games in \cite{Bloem_etal_2012} rewrites the three nested fixed-point in \eqref{equ:3nestedFP} in a vectorized version, which induces an order on the guarantee sets in $\FcG$ and adds a disjunction over all assumption sets in $\FcA$ to every line of this vectorized fixed-point. Adapting the same idea to the 4-nested fixed-point algorithm \eqref{equ:new_FP} results in 
\begin{align}\label{equ:4FP_vector}
\varphi_4~=~ &\nu 
\begin{bmatrix}
\Za{1}\\ \Za{2}\\ \vdots \\ \Za{n} 
\end{bmatrix}.
\begin{bmatrix}
 \mu~\Ya{1}~.~\BR{\bigvee_{b=1}^m~\nu~\Xa{1b}~.~\mu~\Wa{1b}~\ps{1b}\Omega}\\
 \mu~\Ya{2}~.~\BR{\bigvee_{b=1}^m~\nu~\Xa{2b}~.~\mu~\Wa{2b}~\ps{2b}\Omega}\\
 \vdots\\
\mu~\Ya{n}~.~\BR{\bigvee_{b=1}^m~\nu~\Xa{nb}~.~\mu~\Wa{nb}~\ps{nb}\Omega}
\end{bmatrix},
\end{align}
where, $\ps{ab}\Omega=(\FGa{a}\cap
  \Preo(\Za{a^+})) \cup \Preo(\ps{a}Y) \cup (Q\setminus \FAa{b}\cap
  \OpPreTD{W, X \setminus \FAa{b}})\notag$ and 
${a^+}$ denotes ${(a\mod n)+1}$.
  

The remainder of this section shows how soundness and completeness carries over from the 4-nested fixed-point algorithm \eqref{equ:new_FP} to its vectorized version in \eqref{equ:4FP_vector}.

\smallbreak
\noindent
    {\bf Soundness and Completeness}

\noindent
    We refer to intermediate sets obtained during the computation of the
fixpoints by similar notations as in
Section~\ref{sec:AlgoS:Soundness}.
For example, the set $\ps{a}Y^i$ is the $i$-th approximation of the
fixpoint computing $\ps{a}Y$ and $\ps{ab}W^i_j$ is the $j$-th
approximation of $\ps{ab}W$ while computing the $i$-th
approximation of $\ps{a}Y$, i.e., computing $\ps{a}Y^i$ and using
$\ps{a}Y^{i-1}$. 
Similar to the above, we define a mode-based rank for every state
$q\in \Za{}^\infty$; we 
track the currently chased guarantee $a\in [1;n]$ (similar to
\cite{Bloem_etal_2012}) and the currently avoided assumption set
$b\in[1,m]$ as an additional internal mode. In analogy to
\eqref{equ:ranking} we define 
\begin{equation}\label{equ:ranking_a}
 \ps{ab}\rank(q)=(i,j) ~\text{iff}~q\in \BR{\Ya{}^i\setminus \Ya{}^{i-1}}\cap\BR{\Wa{ab}^i_{j}\setminus \Wa{ab}^i_{j-1}}~\text{for}~i,j>0.
\end{equation}
Again, we order ranks lexicographically, and, in analogy to \eqref{equ:rank_prop}, we have%

\begin{subequations}\label{equ:arank}
\begin{align}
\allowdisplaybreaks
q\in
\ps{a}D&~\Leftrightarrow~\ps{a\cdot}\rank(q)=(1,1)
&&~\Rightarrow \new{q\in\FGa{}}
,\label{equ:arankFG}\\
q\in \ps{a}E^i&~\Leftrightarrow~\propConj{\ps{a\cdot}\rank(q)=(i,1)}{i\rs>\rs1}
,\label{equ:arankFA}\\
q\in \ps{ab}R^i_j&~\Leftrightarrow ~ \propConj{\ps{ab}\rank(q)=(i,j)}{j>1}\hspace{-0.5cm}&&~\Rightarrow~\new{q\notin\FAa{b}}\label{equ:arankFA_without}.
\end{align}
\end{subequations}
The sets $\Ya{}^i$, $\Wa{ab}^i_{j}$, $\ps{a}D$, $\ps{a}E^i$ and $\ps{ab}R^i_j$ are interpreted in direct analogy to \REFsec{sec:AlgoS:Soundness}, where $a$ and $b$ annotate the used line and conjunct in \eqref{equ:4FP_vector}. 

\REFfig{fig:strategy} (right) shows a schematic representation of the
ranking for an example with $\ps{a}k=3$, $\ps{a1}l_1=0$,
$\ps{a2}l_1=4$, $\ps{a3}l_1=2$, $\ps{a\cdot}l_2=2$, $\ps{a1}l_3=3$,
$\ps{a2}l_3=0$, and $\ps{a3}l_3=2$. Again, the set \new{$\ps{a}D\subseteq\FGa{}$} is represented by the diamond at the top of the figure.
%
Similarly, all ellipses represent sets
\new{$\ps{a}E^i$}
added in the $i$-th iteration over line $a$ of
\eqref{equ:4FP_vector}. 
%
Again, progress out of ellipses can be enforced by the system,
indicated by the solid arrows leaving those shapes. However, this
might not preserve the current $b$ mode. It might be the environment
choosing which assumption to avoid next%
.
\new{
Further, the environment might choose to change the $b$ mode along with decreasing the $i$-rank, as indicated by the colored dashed lines
\footnote{\new{The strategy extraction in \eqref{equ:rank_new_a} prevents the system from choosing a different $b$ mode. The strategy choice could be optimized w.r.t.\ fast
   progress towards $\FGa{}$ in such cases.}}%
  .
\full{This is possible as for $i>1$ we have $\ps{a}Y^{i-1}=\bigcup_{b'\in[1,m]}\ps{a,b'}W^{i-1}\subseteq \ps{ab}X^i$
and is further explained when discussing the example in \REFfig{fig:gamegraph2}}.
  }
Finally, the interpretation of the sets represented by rectangular shapes in \REFfig{fig:strategy} (right), corresponding to \eqref{equ:arankFA_without}, is in direct analogy to the case with singleton winning conditions. It should be noticed that this is the only place where we preserve the current $b$-mode when constructing a strategy.

Using this intuition we define a system strategy that uses enforceable
and existing transitions to decrease the rank if possible and
preserves the current $a$ mode until the diamond shape is reached. The
$b$ mode is only preserved within rectangular sets.  This is
formalized by a strategy 
\begin{subequations}\label{equ:rank_new_a}
 \begin{equation}
  \textstyle\fo{}:\bigcup_{a\in[1;n]} \BR{(\Q^1\cap \Za{}^\infty)\times a \times [1;m]}\rightarrow \Q^0\times [1;n]\times [1;m]
 \end{equation}
s.t.\ $(q',\cdot,\cdot)=\fo{q,\cdot,\cdot}$ implies $q'\in\Tr{1}(q)$ and $(q',a',b')=\fo{q,a,b}$ implies
\begin{equation}
 \TriCases
 {q'\in \ps{a^+}Z^\infty \wedge a'=a^+}{\ps{ab}\rank(q)=(1,1)}
 {\ps{a'b'}\rank(q')\leq(i-1,\cdot) \wedge a'=a}{\ps{ab}\rank(q)=(i,1),i>1}
 {\ps{a'b'}\rank(q')\leq(i,j-1) \wedge a'=a \wedge b'=b}{\ps{ab}\rank(q)=(i,j),j>1}.
\end{equation}
\end{subequations}
%
We say that a play $\pi$ over $\Gg$ is compliant with $\fo{}$ if there exist mode traces $\alpha\in[1;n]^\omega$ and $\beta\in[1;m]^\omega$ s.t.\ for all $k\in\mathbb{N}$ holds  $(\pi(2k+2),\alpha(2k+2),\beta(2k+2))=f^1(\pi(2k+1),\alpha(2k+1),\beta(2k+1))$, and 
\begin{inparaenum}[(i)]
 \item $\alpha(2k+1)=\alpha(2k)^+$ if $\ps{ab}\rank(\pi(2k+1))=(1,1)$, 
 \item $\alpha(2k+1)=\alpha(2k)$ if $\ps{ab}\rank(\pi(2k+1))=(i,1),i>1$, and
 \item $\alpha(2k+1)=\alpha(2k)$ and $\beta(2k+1)=\beta(2k)$ if $\ps{ab}\rank(\pi(2k+1))=(i,j),j>1$.
\end{inparaenum}

With this it is easy to see that the intuition behind \REFthm{thm:Soundness_single} directly carries over to every line of \eqref{equ:4FP_vector}. Additionally, using $\Preo(\Za{a^+})$ in $\ps{a}D$ allows to cycle through all the lines of \eqref{equ:4FP_vector}, which ensures that every set $\FGa{}\in\FcG$ is tried to be attained by the constructed system strategy in a pre-defined order. 
This is formalized in Appendix~\ref{sec:proof:vectorized} and summarized in \REFthm{thm:SoundnessCompleteness_vec} below.

To prove completeness, 
it is shown in Appendix~\ref{sec:proof:completeness:vec} 
that the negation of \eqref{equ:4FP_vector} can be 
over-approximated by negating every line separately. 
Therefore, the reasoning for every line of the negated fixed-point carries over from \REFsec{sec:AlgoS:Completeness}, resulting in the analogous completeness result.
With this we obtain soundness and completeness in direct analogy to \REFthm{thm:Soundness_single}-\ref{thm:Completeness_single}, formalized in \REFthm{thm:SoundnessCompleteness_vec}.

\begin{theorem}\label{thm:SoundnessCompleteness_vec}
 Let $(\Gg,\FcA,\FcG)$ be a GR(1) game with $\FcA=\Set{\FAa{1},\hdots,\FAa{m}}$ and $\FcG=\Set{\FGa{1},\hdots,\FGa{n}}$. Suppose $\fo{}$ is
 the system strategy in \eqref{equ:rank_new_a} based
 on the ranking in \eqref{equ:ranking_a}.
 Then it holds for all $q\in\semantics{\varphi_4^v}$ that \eqref{equ:Soundness} holds. Furthermore, it holds for all $q\notin \semantics{\varphi_4^v}$ and all system strategies $\fo{}$ over $\Gg$ that either \eqref{equ:LanguageSpecOld_q} or \eqref{eq:SCT_nonblocking_q} does not hold.
\end{theorem}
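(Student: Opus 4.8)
The plan is to lift the singleton results \REFthm{thm:Soundness_single} and \REFthm{thm:Completeness_single} to the vectorized fixed point \eqref{equ:4FP_vector} line by line, exploiting that each of its $n$ lines has exactly the shape of the single-condition formula \eqref{equ:new_FP}: the guarantee $\FGa{a}$ plays the role of $\FG$, the inner disjunction $\bigvee_b$ ranges over the avoided assumption $\FAa{b}$, and the only coupling between lines is the target set $\Za{a^+}$ of the next line appearing in the first disjunct of $\ps{ab}\Omega$. First I would record the rank facts \eqref{equ:arank}: reading off the three disjuncts of $\ps{ab}\Omega$ shows the diamond $\ps{a}D$ lands in the guarantee $\FGa{a}$ via \eqref{equ:arankFG}, the ellipses $\ps{a}E^i$ are reached through the controllable term $\Preo(\ps{a}Y)$, and the rectangles $\ps{ab}R^i_j$ avoid the currently avoided assumption $\FAa{b}$ via \eqref{equ:arankFA_without}, just as in \REFsec{sec:AlgoS:Soundness}. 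Using the inclusion $\ps{a}Y^{i-1}=\bigcup_{b'\in[1;m]}\ps{a,b'}W^{i-1}\subseteq\ps{ab}X^i$ for $i>1$, I would then check that the strategy \eqref{equ:rank_new_a} is well-defined, i.e., a rank-decreasing successor always exists; this is the same monotonicity argument as before.

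For soundness I would fix $q\in\semantics{\varphi_4^v}$ and a play $\pi$ compliant with $\fo{}$ together with its mode traces $\alpha,\beta$. Within a fixed line $a$ (before the diamond is reached) the argument of \REFthm{thm:Soundness_single} applies verbatim: the conditional predecessor $\OpPreTD{W,X\setminus\FAa{b}}$ lets the environment raise the $j$-rank only while the play avoids $\FAa{b}$, so if $\pi$ stalls in one line forever then $\FAa{b}$ is visited finitely often and \eqref{equ:Soundness:infinite} holds through the $\overline{\Lomega_q(\Gg,\FcA)}$ disjunct. Otherwise the $i$-rank strictly decreases until the diamond is reached, whereupon the extra disjunct $\Preo(\Za{a^+})$ in $\ps{a}D$ advances the mode $a\mapsto a^+$; cycling through all $n$ lines then forces every $\FGa{a}$ infinitely often, so $\pi\in\Lomega_q(\Gg,\FcG)$. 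Property \eqref{equ:Soundness:nonempty} follows from the cooperative play in which the environment always takes a rank-decreasing move, and \eqref{equ:Soundness:nonblock} from the $\PreE(W)$ component of the conditional predecessor, which keeps such a move available from every reachable state under the sufficient condition $\Lomega_q(\Gg,\FcG)\subseteq\Lomega_q(\Gg,\FcA)$, exactly as in the singleton case.

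For completeness I would follow the negation-based argument of \REFsec{sec:AlgoS:Completeness}. Since the lines of \eqref{equ:4FP_vector} are coupled only through the targets $\Za{a^+}$, I would first show that the complement $Q\setminus\semantics{\varphi_4^v}$ is contained in the set obtained by negating each line \eqref{equ:new_FP} separately, reducing the claim to per-line reasoning. On each negated line the singleton argument provides an environment strategy that (i) renders the negated winning set invariant and (ii) forces the chased guarantee $\FGa{a}$ to be visited only finitely often. Given any system strategy $\fo{}$ from $q\notin\semantics{\varphi_4^v}$, combining (i)--(ii) as in the singleton proof shows that if the GR(1) condition \eqref{equ:LanguageSpecOld_q} holds then the environment is prevented from satisfying $\FcA$ along $\fo{}$-compliant plays, so \eqref{eq:SCT_nonblocking_q} must fail, yielding the claimed dichotomy.

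The hard part will be the bookkeeping around the two modes $a$ and $b$, and in particular justifying that the environment's freedom to reset $b$ whenever the $i$-rank drops does not break progress. The point to nail down is that the $i$-component is governed by the outermost $\mu Y$-iteration, which sits above the disjunction over $b$, so a strict decrease of $i$ is guaranteed regardless of which avoided assumption the environment switches to, while the $j$-component need only decrease when $b$ is preserved---precisely the regime that \eqref{equ:rank_new_a} enforces inside the rectangular sets. The second delicate step is the completeness over-approximation: one must verify that negating the vectorized fixed point line by line only enlarges the complement, so that proving the system loses on the larger set strengthens, rather than weakens, the conclusion.
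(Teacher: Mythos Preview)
Your proposal is correct and follows essentially the same approach as the paper: lift the singleton soundness argument line by line via the mode-indexed ranking \eqref{equ:ranking_a}, and for completeness over-approximate the negation of \eqref{equ:4FP_vector} by negating each line separately so that the per-line analysis of \REFsec{sec:proof:completeness} applies. You have also correctly identified the two genuinely delicate points---the $b$-mode resets at $i$-rank drops and the validity of the line-wise over-approximation---and your explanation of why the $i$-component still strictly decreases (because $\mu\,\Ya{}$ sits above $\bigvee_b$) matches the paper's reasoning.
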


%


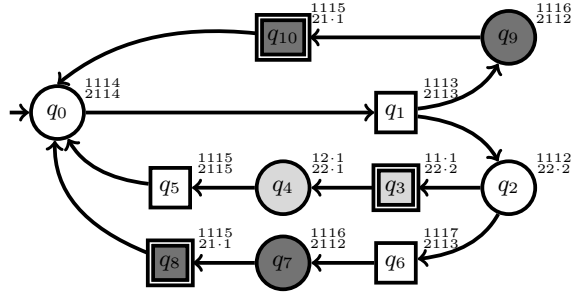
\begin{figure}[h!]
   \begin{center}
     \begin{tikzpicture}[auto,yscale=1,xscale=1.5]
     
          \node (init) at (-0.5,0) {};
          \node[mystate] (q0) at (0,0) {$q_0$};
          \node[draw,mysquare] (q1) at (3,0) {$q_1$};        
          \node[mystate] (q2) at (4,-1) {$q_2$};
          \node[draw,fmysquare,fill=black!15] (q3) at (3,-1) {$q_3$};
          \node[mystate,fill=black!15] (q4) at (2,-1) {$q_4$};
          \node[draw,mysquare] (q5) at (1,-1) {$q_5$};
          \node[draw,mysquare] (q6) at (3,-2) {$q_6$};
          \node[mystate,fill=black!55] (q7) at (2,-2) {$q_7$};
          \node[draw,fmysquare,fill=black!55] (q8) at (1,-2) {$q_8$};
          \node[mystate,fill=black!55] (q9) at (4,1) {$q_9$};
           \node[draw,fmysquare,fill=black!55] (q10) at (2,1) {$q_{10}$};
          
          \begin{tiny}
          \node (q0a) at (0.4,0.4) {$1114$};
          \node (q1a) at (3.4,0.4) {$1113$};        
          \node (q2a) at (4.4,-0.6) {$1112$};
          \node (q3a) at (3.4,-0.6) {$11\hspace{-1.5pt}\cdot\hspace{-1.5pt}1$};
          \node (q4a) at (2.4,-0.6) {$12\hspace{-1.5pt}\cdot\hspace{-1.5pt}1$};
          \node (q5a) at (1.4,-0.6) {$1115$};
          \node (q6a) at (3.4,-1.6) {$1117$};
          \node (q7a) at (2.4,-1.6) {$1116$};
          \node (q8a) at (1.4,-1.6) {$1115$};
          \node (q9a) at (4.4,1.4) {$1116$};
          \node (q10a) at (2.4,1.4) {$1115$};
          \end{tiny}
          
          \begin{tiny}
             \node (q0a) at (0.4,0.25) {$2114$};
             \node (q1a) at (3.4,0.25) {$2113$};        
          \node (q2a) at (4.4,-0.75) {$22\hspace{-1.5pt}\cdot\hspace{-1.5pt}2$};
          \node (q3a) at (3.4,-0.75) {$22\hspace{-1.5pt}\cdot\hspace{-1.5pt}2$};
          \node (q4a) at (2.4,-0.75) {$22\hspace{-1.5pt}\cdot\hspace{-1.5pt}1$};
          \node (q5a) at (1.4,-0.75) {$2115$};
          \node (q6a) at (3.4,-1.75) {$2113$};
          \node (q7a) at (2.4,-1.75) {$2112$};
          \node (q8a) at (1.4,-1.75) {$21\hspace{-1.5pt}\cdot\hspace{-1.5pt}1$};
          \node (q9a) at (4.4,1.25) {$2112$};
          \node (q10a) at (2.4,1.25) {$21\hspace{-1.5pt}\cdot\hspace{-1.5pt}1$};
          \end{tiny}

%

\SFSAutomatEdge{init}{}{q0}{semithick}{}  
\SFSAutomatEdge{q0}{}{q1}{}{} 
\SFSAutomatEdge{q1}{}{q2}{bend left}{}
\SFSAutomatEdge{q1}{}{q9}{bend right}{}
\SFSAutomatEdge{q2}{}{q3}{}{}
\SFSAutomatEdge{q2}{}{q6}{bend left}{}
\SFSAutomatEdge{q3}{}{q4}{}{} 
\SFSAutomatEdge{q4}{}{q5}{}{}
\SFSAutomatEdge{q5}{}{q0}{bend left}{}
\SFSAutomatEdge{q6}{}{q7}{}{}
\SFSAutomatEdge{q7}{}{q8}{}{}
\SFSAutomatEdge{q8}{}{q0}{bend left}{}
\SFSAutomatEdge{q9}{}{q10}{}{}
\SFSAutomatEdge{q10}{}{q0.north}{bend right}{} 

%

     \end{tikzpicture}
\end{center}
\caption{Game graph $H_2$. States in $\Qz$ and $\Qo$ are indicated by circles and squares, respectively. States in $\FcA$ and $\FcG$ are indicated by a state with a single and a double boundary, respectively, filled by light and dark gray for $a=1$ and $a=2$, respectively. The small numbers $aibj$ on the top right of each state indicate the ranking $(a,i,b,j)$ as in \REFfig{fig:strategy} induced by \eqref{equ:4FP_vector} via \eqref{equ:arank}.
}\label{fig:gamegraph2}
\end{figure}
\fullb
\smallbreak
\noindent\textbf{Example}

\noindent
We will explain the evaluation of the vectorized fixed-point in \eqref{equ:4FP_vector} using the game graph $H_2$ in \REFfig{fig:gamegraph2}. In this example, the fixed-point terminates after one iteration over every line of \eqref{equ:4FP_vector} with $\ps{1}Z=\ps{2}Z=Q$. Therefore, the ranking induced by the first iteration over $Z$ is also the final one. We discuss its construction for both lines separately.

$\mathbf{a=1}$:
First consider $a=1$ and $b=1$. In this case, the first iteration over $\ps{11}W$ starts with $\FGa{1}=\set{q_3}$ (giving $\ps{1}D=\set{q_3}$) and successively adds all states except $q_4$, as $q_4\in\FAa{1}$. Applying the smallest fixed-point over $\ps{11}X$ to this set has no effect and we have $\ps{11}X^1_\infty=\ps{11}W^1_\infty=Q\setminus\set{q_4}$ with $\ps{11}R^1_2=\set{q_2}$, $\ps{11}R^1_3=\set{q_1}$, $\ps{11}R^1_4=\set{q_0}$, $\ps{11}R^1_5=\set{q_{10},q_5,q_8}$, $\ps{11}R^1_6=\set{q_7,q_9}$ and $\ps{11}R^1_6=\set{q_6}$, as indicated by the upper four-digit number on the top-right of each state.
Now we consider $a=1$ and $b=2$. Again, the first iteration over $\ps{12}W$ starts with $\FGa{1}=\set{q_3}$ and successively adds all states except $q_7$ and $q_6$ (as $q_7\in\FAa{2}$ and $q_6$ is its predecessor), beginning with $q_2$. This results in $\ps{12}X=Q\setminus\set{q_6,q_7}$ for re-iterating $\ps{12}W$, which does not allow to add $q_2$ to $\ps{12}W$ as not all successors of $q_2$ (in particular $q_6$) are contained in $\ps{12}X\setminus\FAa{2}$. The re-calculation of the fixed-point therefore terminates with $\ps{12}X^1_\infty=\ps{12}W^1_\infty=\set{q_3}$, giving $\ps{12}R^1_j=\emptyset$ for all $j>1$. 
Now taking the union over the resulting fixed-points $\ps{11}X^1_\infty=Q\setminus\set{q_4}$ and $\ps{12}X^1_\infty=\set{q_3}$ gives $\ps{1}Y^1=Q\setminus\set{q_4}$ and it is easy to see that $q_4\in\Preo(\ps{1}Y^1)$, giving $\ps{1}E^2=\set{q_4}$. As all other states are already contained in $\ps{1}Y^1$ we have $\ps{12}R^2_j=\emptyset$ for all $j>1$. 

$\mathbf{a=2}$:
We first consider $a=2$ and $b=1$. In this case, the first iteration over $\ps{21}W$ starts with $\FGa{2}=\set{q_8,q_{10}}$ (giving $\ps{2}D=\set{q_8,q_{10}}$) and successively adds all states except $q_3$ and $q_4$ (as $q_4\in\FAa{1}$ and $q_3$ is its predecessor). Similarly to the case where $a=1$ and $b=2$ this results in the removal of $q_2$ from $\ps{21}W^1_\infty$ when re-iterating the fixed-point with $\ps{21}X=Q\setminus\set{q_3,q_4}$, as its successor $q_3$ is not contained in $\ps{21}X$. However, this does not effect the remaining iterations and we get $\ps{21}R^1_2=\set{q_7,q_9}$, $\ps{21}R^1_3=\set{q_6,q_1}$, $\ps{21}R^1_4=\set{q_0}$ and $\ps{21}R^1_5=\set{q_5}$, as indicated by the lower four-digit number on the top-right of each state.
Now we consider $a=2$ and $b=2$. Again, the first iteration over $\ps{22}W$ starts with $\FGa{2}=\set{q_8,q_{10}}$ but no further states are added as their only predecessors $q_7$ and $q_9$ are both in $\FAa{2}$. Hence, $\ps{22}R^1_j=\emptyset$ for all $j>1$. 
Now taking the union over the resulting fixed-points $\ps{21}X^1_\infty=\ps{21}W^1_\infty=Q\setminus\set{q_2,q_3,q_4}$ and $\ps{22}X^1_\infty=\ps{22}W^1_\infty=\set{q_8,q_{10}}$ gives $\ps{2}Y^1=Q\setminus\set{q_2,q_3,q_4}$ and it is easy to see that $q_4\in\Preo(\ps{2}Y^1)$, giving $\ps{2}E^2=\set{q_4}$. Now re-computing the fixed-points over $\ps{21}W$ and $\ps{22}W$ adds $q_2$ and $q_3$ in the first iteration in both cases. Hence $\ps{2\cdot}R^2_2=\set{q_2,q_3}$, as indicated by the lower four-digit number $22\cdot2$ on the top-right of both states.
%

Given this example we want to highlight that in $q_2$ the environment can decide to switch the $b$-mode from $2$ to $1$ by transitioning to $q_6$, which decreases the $i$-rank from $2$ to $1$. This is due to the fact that the re-evaluation of $\ps{22}W$ \enquote{copies} $\Preo{\ps{2}Y^1}$ to $\ps{22}W^2_1$, which contains $q_6$.

Further, we see that for $a=2$ the system strategy will always decide to move from $q_1$ to $q_9$, as this preserves the current $b$-mode. In this example, this also
allows to reach the target state $q_{10}\in\FGa{2}$ faster, which might not necessarily be the case. On the other hand, for $a=1$ the strategy will always transition from $q_1$ to $q_2$, as otherwise the rank increases. By this, the system must rely on the environment to eventually choose to transition from $q_2$ to $q_3$. While this might not always be the case (the environment is allowed to increase the $j$-rank by transitioning from $q_2$ to $q_6$), we see that whenever the environment plays such that the assumption is satisfied, i.e., $\FAa{1}=\set{q_4}$ is visited infinitely often, also $\FGa{1}=\set{q_3}$ is visited infinitely often, resulting in a winning play.  

\fulle

\smallbreak
\noindent
{\bf A Solution for \REFproblem{PS}}

\noindent
Given that $\Lomega(\Gg,\FcG)\subseteq\Lomega(\Gg,\FcA)$ it follows from \REFthm{thm:SoundnessCompleteness_vec} that \REFproblem{PS} has a solution iff $q_0\in \semantics{\varphi_4^v}$. 
Furthermore, if $q_0\in \semantics{\varphi_4^v}$ we can construct $\fo{}$
that wins the GR(1) condition in \eqref{equ:LanguageSpecOld} and is
non-conflicting, as in \eqref{eq:SCT_nonblocking}. 
%

Using a similar construction as in \REFsec{sec:AlgoS:PS}, we can
symbolically check whether
$\Lomega(\Gg,\FcG)\subseteq\Lomega(\Gg,\FcA)$. For this, we construct
a new game graph $\Gg_b$ for every $\FAa{b}$,~$b\in[1;m]$ by removing
the latter set from the state set of $\Gg$ and checking whether
$\Lomega(\Gg_b,\FcG)$ is empty. If some of these $m$ checks fail, we
have $\Lomega(\Gg,\FcG)\not\subseteq\Lomega(\Gg,\FcA)$. Now observe
that by checking every $\FAa{b}$ separately, we know which goals are
not necessarily passed by infinite runs which visit all $\FGa{a}$
infinitely often and can collect them in the set
$\FcA^{\mathrm{failed}}$.  
Using the same reasoning as in \REFsec{sec:AlgoS:PS}, 
we can simply add the set $\FcA^{\mathrm{failed}}$ to the system guarantee set to obtain an equivalent synthesis problem which is solvable by the given algorithm, if it is realizable. More precisely, consider the new system guarantee set $\FcG'=\FcG\cup\FcA^{\mathrm{failed}}$ and observe that $\Lomega(\Gg,\FcG')\subseteq\Lomega(\Gg,\FcA)$ by definition, and 
therefore substituting $\Lomega(\Gg,\FcG)$ by $\Lomega(\Gg,\FcG')$ in \eqref{equ:LanguageSpecOld} does not change the satisfaction of the given inclusion.

\section{Complexity Analysis}
We show that the search for a more elaborate strategy does not affect
the worst case complexity.
In \REFsec{section:experiments} we show that this is also the case in
practice.
\comment{
Formally, for singleton GR(1) games, we have the
following.

\begin{theorem}
  Let $(\Gg,\FcA,\FcG)$ be a GR(1) game with singleton winning
  conditions.
  We can check whether there is a winning non-conflicting strategy
  $\fo{}$ by a symbolic algorithm that performs $O(|Q|^2)$ next step
  computations and by an enumerative algorithm that works in time
  $O(m|Q|^2)$, where $m$ is the number of transitions of the game.
\end{theorem}

\begin{proof}
  As mentioned, from \cite{Seidl96,Browne96} we know that by keeping
  intermediate values of the computation of the fixed-point we can
  reduce the number of iterations through the four-nested fixed point
  from $O(|Q|^4)$ to $O(|Q|^2)$.
  We note that the intermediate values that are required by the
  algorithm of \cite{Browne96} are exactly the values required for the
  computation of the winning strategy.

  For the enumerative algorithm, we can compute the ranking directly
  by a rank lifting algorithm \cite{Jurdzinski00}.
  Let $n$ be the number of states and $m$ the number of transitions. 
  Every state can be lifted at most $O(n^2)$ times.
  Handling of every state is proportional to the number of its
  incoming transitions.
  Hence, the total run time is $O(mn^2)$.
\end{proof}

We note that the enumerative algorithm for parity(3) games (equivalent
to GR(1) with singleton conditions) is $O(mn)$.
It follows that \emph{enumeratively} our approach is slower than the
classical approach to GR(1).

We turn now to the case of general GR(1). 
As before, the symbolic algorithm has the same worst case complexity
as the classical GR(1) algorithm.
} %
We state this complexity formally below.

\begin{theorem}
  Let $(\Gg,\FcA,\FcG)$ be a GR(1) game.
  We can check whether there is a winning non-conflicting strategy
    $\fo{}$ by a symbolic algorithm that performs
    $O(|Q|^2|\FcG||\FcA|)$
  next step computations and by an enumerative algorithm that works in time
    $O(m|Q|^2|\FcG||\FcA|)$, where $m$ is the number of transitions
    of the game.
\end{theorem}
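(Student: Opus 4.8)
The plan is to bound the two running times separately, reducing everything to the cost of evaluating the alternation-depth-four fixed point $\varphi_4^v$ of \eqref{equ:4FP_vector}, which by \REFthm{thm:SoundnessCompleteness_vec} characterizes the winning region. The observation that keeps the strategy-construction cost from dominating is that the intermediate approximants $\ps{a}Y^i$ and $\ps{ab}W^i_j$ produced during this evaluation are exactly the data needed to read off the ranking \eqref{equ:ranking_a}, and hence the mode-based strategy \eqref{equ:rank_new_a}; deciding realizability and synthesizing a witness therefore cost the same asymptotically, and it suffices to bound the cost of computing $\varphi_4^v$ together with its approximants.

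For the symbolic bound I would observe that, read for a fixed line $a$ and disjunct $b$, $\varphi_4^v$ is a four-fold alternation $\nu Z.\,\mu Y.\,\nu X.\,\mu W$ of fixpoints. By the result of \cite{Seidl96,Browne96}, a $\mu$-calculus formula of alternation depth $d$ can be evaluated with $O(|Q|^{\lceil d/2\rceil})$ next-step operations by reusing the monotone chains of approximants across re-entries of the inner fixpoints; for $d=4$ this is $O(|Q|^2)$. The vectorization over the $|\FcG|$ guarantee components (the outer $\nu Z$ vector and its lines) and the inner disjunction over the $|\FcA|$ assumption modes contribute independent multiplicative factors, yielding $O(|Q|^2\,|\FcG|\,|\FcA|)$ next-step computations. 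This matches the standard three-nested GR(1) algorithm, since $\lceil 3/2\rceil=\lceil 4/2\rceil=2$.

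For the enumerative bound I would compute the ranking \eqref{equ:ranking_a} directly by a progress-measure / rank-lifting procedure in the style of \cite{Jurdzinski00}, rather than by iterating the set equations. A rank is a tuple $(a,i,b,j)$ with $a\in[1;|\FcG|]$, $b\in[1;|\FcA|]$ and $i,j\in[1;|Q|]$, so the lexicographically ordered rank domain has size $O(|Q|^2\,|\FcG|\,|\FcA|)$. Defining the lift operator from the one-step update implicit in \eqref{equ:4FP_vector}, one lifts ranks monotonically up to their least simultaneous fixed point. Each state is lifted at most $O(|Q|^2\,|\FcG|\,|\FcA|)$ times (the size of the rank domain), and the work to handle a state per lift is proportional to its number of incident transitions; summing over all states and edges gives total time $O(m\,|Q|^2\,|\FcG|\,|\FcA|)$, where $m$ is the number of transitions.

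The main obstacle is to justify that the four-nested, vectorized fixed point with the nonstandard conditional predecessor $\OpPreTD{}$ appearing in the third disjunct of \eqref{equ:4FP_vector} is faithfully captured by such a progress measure, so that lifting converges to precisely the ranking of \eqref{equ:ranking_a} and the strategy of \eqref{equ:rank_new_a} rather than to some coarser measure. Concretely, I would expand $\OpPreTD{}$ into ordinary existential and player-$1$ controllable predecessors, as defined in \eqref{equ:newPre}, so that the $\nu\mu\nu\mu$ nesting exposes four priority levels whose lexicographic product is exactly the rank domain above, and then verify that the resulting lift operator is monotone. Monotonicity then delivers both termination within the stated number of lifts and agreement of the least progress measure with the approximant-based ranking already shown correct in the soundness and completeness arguments, closing the gap between the abstract complexity bound and the concrete algorithm.
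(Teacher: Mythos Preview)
Your proposal is correct and follows essentially the same approach as the paper: the symbolic bound comes from applying the \cite{Seidl96,Browne96} $O(|Q|^{\lceil d/2\rceil})$ iteration bound to each of the $|\FcG|\cdot|\FcA|$ line-disjunct pairs in \eqref{equ:4FP_vector}, and the enumerative bound comes from Jurdzin\'ski-style rank lifting over a rank domain of size $O(|Q|^2|\FcG||\FcA|)$. The paper's own proof is considerably terser---it simply asserts that each line is iterated $O(|Q|^2)$ times and that $|\FcG||\FcA|$ ranks must be computed per state---so your elaboration on why the conditional predecessor $\OpPreTD{}$ does not obstruct the progress-measure argument, and your observation that the stored approximants double as the data for strategy extraction, fill in details the paper leaves implicit.
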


\begin{proof}
  Each line of the fixed-point is iterated $O(|Q|^2)$ times
  \cite{Browne96}.
  As there are $|\FcG||\FcA|$ lines the upper bound follows.
As we have to compute $|\FcG||\FcA|$ different ranks for
    each state, it follows that the complexity is $O(m|Q|^2|\FcG||\FcA|)$.
\end{proof}

We note that \emph{enumeratively} our approach is theoretically worse
than the classical approach to GR(1).
\newb
This follows from the straight forward reduction to the rank
computation in the rank lifting algorithm and the relative complexity
of the new rank when compared to the general GR(1) rank. 
We conjecture that more complex approaches, e.g., through a reduction
to a parity game and the usage of other enumerative algorithms, could
eliminate this gap. 
\newe

\section{Experiments}\label{section:experiments}


\begin{table}[b]
	\centering
	\vspace*{-2mm}
	\caption{Experimental results for the maze benchmark. The size
          of the maze is given in columns/lines, the number of goals
          is given per player. The states are counted for the returned
          winning strategies. Strategies preventing the environment
          from fulfilling its goals are indicated by a $^*$. Recorded
          computation times are rounded wall-clock
          times.}\label{table:maze_experiments}
        \vspace*{-2mm}
	\begin{footnotesize}
	\begin{tabular}{|c|c||c|c|c|c|c|c||c|c|c|c|c|c||} 
	        \hline
		&&\multicolumn{6}{c||}{falsifiable assumptions}&\multicolumn{6}{c||}{non-falsifiable assumptions}\\
		& &\multicolumn{2}{c|}{3FP} & \multicolumn{2}{c|}{4FP} & \multicolumn{2}{c||}{Heuristic}   &\multicolumn{2}{c|}{3FP} & \multicolumn{2}{c|}{4FP}& \multicolumn{2}{c||}{Heuristic}\\
		size & goals& states & time & states & time & states & time & states & time & states & time & states & time \\
		 \hline \hline
		 $3/2$&$2$ &	 $10^*$&$<1$s&			$46$&$<1$s&		$12$&$<1$s&		$35$&$<1$s&		$50$&$<1$s&		$40$&$<1$s	\\
 		\hline		 
		$3/10$&$10$&	$34^*$&$<1$s&			$1401$&$8$s&		$1307$&$3$s&		$1119$&$1$s&		$1513$&$13$s&		$1533 $&$5$s	\\
		$3/20$&$20$&	$64^*$&$21$s&			$5799$&$201$s&		$5732$&$337$s&		$3926$&$37$s&		$6000$&$163$s&		$6378 $&$105$s\\
		\hline
		$25/2$&$2$&	$94^*$&$<1$s&		$2144$&$4$s&		n.r.&$6$s&		$744$&$<1$s&	$2318$&$4$s			&n.r.&$5$s\\
		$63/2$&$2$&	$397^*$&$<1$s&		$14259$&$32$s&		n.r.&$101$s&		$4938$&$2$s&	$15465$&$54$s		&n.r.&$66$s\\
		\hline		
	\end{tabular}
	\end{footnotesize}
        \vspace*{-7mm}
\end{table}
%

We have implemented the 4-nested fixed-point algorithm in
\eqref{equ:4FP_vector} and the corresponding strategy extraction in
\eqref{equ:rank_new_a}. \new{It is available as an extension to the GR(1) synthesis tool
\texttt{slugs} \cite{slugs}}. In this section we show how this algorithm
(called 4FP) performs in comparison to the usual 3-nested fixed-point
algorithm for GR(1) synthesis (called 3FP) available in
\texttt{slugs}. All experiments were run on a computer with an Intel
i5 processor running an x86 Linux at 2\,GHz with 8\,GB of memory. 

We first run both algorithms on a benchmark set obtained from the maze example in the introduction by 
changing the number of rows and columns of the maze. 
We first increased the number of lines in the maze and added a goal state for both the obstacle and the robot per line. This results in a maze where in the first and last column, system and environment goals alternate and all adjacent cells are separated by a horizontal wall. 
Hence, both players need to cross the one-cell wide white space in the middle infinitely often to visit all their goal states infinitely often. The computation times and the number of states in the resulting strategy are shown in \REFtable{table:maze_experiments}, upper part, column 3-6. 
Interestingly, we see that the 3FP always returns a strategy that blocks the environment. 
In contrast, the non-conflicting strategies computed by the 4FP are relatively larger (in state size) and computed about 10 times 
slower compared to the 3FP (compare column 3-4 and 5-6). 
When increasing the number of columns instead (lower part of \REFtable{table:maze_experiments}), the number of goals is unaffected. We made the maze wider and left only a one-cell wide passage in the middle of the maze to allow crossings between its upper and lower row. 
Still, the 3FP only returns strategies that falsify the assumption, which have fewer states and are computed much faster than 
the environment respecting strategy returned by the 4FP.
Unfortunately, the speed of computing a strategy or its size is immaterial if the winning strategy so computed
wins only by falsifying assumptions.

To rule out the discrepancy between the two algorithms w.r.t.\ the size of strategies, we slightly modified the above maze benchmark s.t.\ the environment assumptions are not falsifiable anymore. We increased the capabilities of the obstacle by allowing it to move at most $2$ steps in each round and to \enquote{jump over} the robot. 
Under these assumptions we repeated the above experiments. The computation times and the number of states in the resulting strategy are shown in \REFtable{table:maze_experiments}, column 9-12. We see, that in this case the size of the strategies computed by the two algorithms are more similar. The larger number for the 4FP is due to the fact that we have to track both the $a$ and the $b$ mode, possibly resulting in multiple copies of the same $a$-mode state. We see that the state difference decreases with the number of goals (upper part of \REFtable{table:maze_experiments}, column 9-12) and increases with the number of (non-goal) states (lower part of \REFtable{table:maze_experiments}, column 9-12). In both cases, the 3FP still computes faster, but the difference decreases with the number of goals. 

In addition to the 3FP and the 4FP we have also tested a sound but
incomplete heuristic, which avoids the disjunction over all $b$'s in
every line of \eqref{equ:4FP_vector} by only investigating $a=b$. The
state count and computation times for this heuristic are shown in
\REFtable{table:maze_experiments}, column 7-8 for the original maze
benchmark, and in column 13-14 for the modified one. We see that in
both cases the heuristic only returns a winning strategy if the maze
is not wider then 3 cells. This is due to the fact that in all other
cases the robot cannot prevent the obstacle from attaining a
particular assumption state until the robot has moved from one goal to
the next. The 4FP handles this problem by changing between avoided
assumptions in between visits to different goals. 
Intuitively, the computation times and state counts for the heuristic
should be smaller then for the 4FP, as the exploration of the
disjunction over $b$'s is avoided, which is true for many scenarios of
the considered benchmark. It should however be noted that this is not
always the case (compare e.g. line 3, column 6 and 8). This stems from
the fact that restricting the synthesis to avoiding one particular
assumption might require more iterations over $W$ and $Y$ within the
fixed-point computation. 

\fullb
\smallskip
In addition to the maze benchmark, we have also run our algorithm on
the 3 safety-benchmarks that are included in the \texttt{slugs}
distribution. All three benchmarks do not have liveness assumptions
for either the system or the environment player. For all realizable
specifications, both the 3FP and the 4FP return the same strategy (as
there is only one maximal permissive strategy in a safety game) and
need almost the same time to compute this strategy.
\fulle

\section{Discussion}

We believe the requirement that a winning strategy be \emph{non-conflicting} is
a simple way to disallow strategies that win by actively preventing the environment
from satisfying its assumptions, without significantly changing the theoretical formulation
of reactive synthesis (e.g., by adding different winning conditions or new notions of equilibria).
It is not a trace property, but 
our main results show that adding this requirement retains the algorithmic niceties
of GR(1) synthesis: in particular, symbolic algorithms have the same asymptotic complexity.

However, non-conflictingness makes the implicit assumption of a ``maximally flexible'' environment:
it is possible that because of unmodeled aspects of the environment strategy, it is not possible
for the environment to satisfy its specifications in the precise way allowed by a non-conflicting strategy. 
In the maze example discussed in \REFsec{sec:Intro}, the environment needs to move the obstacle to precisely the 
goal cell which is currently rendered reachable by the system. 
If the underlying dynamics of the obstacle require it to go back to the lower left 
from state $q_3$ before proceeding to the upper right 
(e.g., due to a required battery recharge), 
the synthesized robot strategy prevents the obstacle from doing so.

Finally, if there is no non-conflicting winning strategy, one could look for a ``minimally violating'' strategy.
We leave this for future work.
Additionally, we leave for future work the consideration of
non-conflictingness for general LTL specifications or (efficient)
fragments thereof.

\clearpage

\bibliographystyle{abbrv}
\bibliography{bibliography_SCTvsRS}
\clearpage
\appendix

\section{Proofs for Singleton Winning Conditions}\label{sec:proof:single}

\subsection{Soundness}\label{sec:proof:single:soundness}



As mentioned, we compute
$W^i_j$ as part of $ Y^i$ and based on $ Y^{i-1}$ and
$ W^i_{j-1}$:
\begin{equation}\label{equ:proof:lastiteration}
W^i_j=(\FG\cap \Preo( Z^\infty))
\cup \Preo( Y^{i-1})
\cup \underbrace{(Q\setminus \FA\cap \OpPreTD{ W^i_{j-1},  Y^i \setminus \FA})}_{\Theta^i_j}
\end{equation}
Suppose that $\fo{}$ is
the system strategy in \eqref{equ:rank_new} based on the ranking in \eqref{equ:ranking}. 

%
We first show, that the property in \eqref{equ:rank_prop} holds.

\begin{lemma}\label{lem:rank_prop}
 Given the premises of \REFthm{thm:Soundness_single}, it holds that
 \begin{subequations}\label{equ:proof:rankFG}
\begin{align}
 &q\in (\FG\cap \Preo(Z^\infty))=:D~\label{equ:proof:rankFG:a}\\
 \Leftrightarrow~&\rank(q)=(1,1)~\label{equ:proof:rankFG:b}\\
 \Leftrightarrow~&q\in\FG\cap  Z^\infty,\label{equ:proof:rankFG:c}
\end{align}
\end{subequations}
 \begin{subequations}\label{equ:proof:rankFA}
\begin{align}
&q\in\Preo(Y^{i-1})\setminus Y^{i-1}=:E^i\neq\emptyset\label{equ:proof:rankFA:a}\\
\Leftrightarrow~&\propConj{\rank(q)=(i,1)}{i>1}~\label{equ:proof:rankFA:b}\\
\Leftrightarrow~&q\in(\FA\setminus\FG)\cap  Z^\infty.\label{equ:proof:rankFA:c}
\end{align}
\end{subequations}
 \begin{subequations}\label{equ:proof:rankNM}
\begin{align}
&q\in \Theta^i_j\setminus (W^i_{j-1}\cup Y^{i-1}\cup E^i\cup D)=:R^i_j\neq\emptyset\label{equ:proof:rankNM:a}\\
\Leftrightarrow~&\propConj{\rank(q)=(i,j)}{j>1}\label{equ:proof:rankNM:b}\\
\Leftrightarrow~&q\in(Z^\infty\setminus(\FA\cup\FG)).\label{equ:proof:rankNM:c}
\end{align}
\end{subequations}
\end{lemma}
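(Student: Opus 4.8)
The plan is to read everything off the recursive unfolding \eqref{equ:proof:lastiteration} by classifying each winning state according to \emph{which} of the three disjuncts first places it into some $W^i_j$, and then to match that classification against the rank. The first disjunct contributes only states of $\FG$, the third disjunct contributes only states of $Q\setminus\FA$ (because of the explicit factor $(Q\setminus\FA)$), and the second disjunct $\Preo(Y^{i-1})$ is independent of $j$. I would first record two degeneracies: since every state has a successor, $\Preo(\emptyset)=\PreE(\emptyset)=\emptyset$, hence $\Theta^i_1=(Q\setminus\FA)\cap\OpPreTD{\emptyset,\,Y^i\setminus\FA}=\emptyset$, and at $Y$-level $1$ the second disjunct $\Preo(Y^0)$ is empty as well. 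Consequently $W^1_1=D$ and, for $i>1$, $W^i_1=D\cup\Preo(Y^{i-1})$, while for every $j>1$ the only genuinely new states of $W^i_j$ come from $\Theta^i_j$. This already pins down the three rank strata: rank $(1,1)$ is $W^1_1=D$; rank $(i,1)$ with $i>1$ is $W^i_1\setminus Y^{i-1}=\Preo(Y^{i-1})\setminus Y^{i-1}=E^i$ (using $D\subseteq Y^{i-1}$); and rank $(i,j)$ with $j>1$ is $\Theta^i_j\setminus(W^i_{j-1}\cup Y^{i-1})=R^i_j$ (using $D,E^i\subseteq W^i_{j-1}$). This settles the equivalence between lines (a) and (b) of \eqref{equ:proof:rankFG}, \eqref{equ:proof:rankFA} and \eqref{equ:proof:rankNM}.

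\textbf{Membership characterizations.} For the remaining equivalences I would first prove the auxiliary fact $Z^\infty\subseteq\Preo(Z^\infty)$: any $q\in Z^\infty$ lies in some disjunct, and each disjunct is contained in $\Preo(Z^\infty)$ (directly for the first; by monotonicity, using $Y^{i-1}\subseteq Z^\infty$ and $W^i_{j-1}\cup(X^i\setminus\FA)\subseteq Z^\infty$, for the other two). This immediately yields $D=\FG\cap\Preo(Z^\infty)=\FG\cap Z^\infty$, proving \eqref{equ:proof:rankFG}. The same fact excludes $\FG$ from the other strata: a state of $\FG$ lying in $\Preo(Y^{i-1})$ or in $\Theta^i_j$ would lie in $\FG\cap\Preo(Z^\infty)=D\subseteq Y^{i-1}\cup W^i_{j-1}$, hence could not be new; so $E^i\cap\FG=\emptyset$ and $R^i_j\cap\FG=\emptyset$. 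Since $\Theta^i_j\subseteq Q\setminus\FA$ by construction, this already gives $R^i_j\subseteq Z^\infty\setminus(\FA\cup\FG)$.

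\textbf{Main obstacle.} The crux is the inclusion $E^i\subseteq\FA$, i.e.\ that the second disjunct never introduces a \emph{new} state outside $\FA$. I would establish the sharper statement $(Q\setminus\FA)\cap\Preo(Y^{i-1})\subseteq Y^{i-1}$. Here I use that $X^i=Y^i$ is a greatest fixpoint of the inner $\mu W$-functional, so $Y^{i-1}$ is itself a fixpoint of its own inner iteration; writing out the self-reference and simplifying $\OpPreTD{Y^{i-1},\,Y^{i-1}\setminus\FA}=\PreE(Y^{i-1})\cap\Preo(Y^{i-1})$ gives $(Q\setminus\FA)\cap\PreE(Y^{i-1})\cap\Preo(Y^{i-1})\subseteq Y^{i-1}$. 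Combining this with the (non-blocking) inclusion $\Preo(P)\subseteq\PreE(P)$ shows that any $q\in(Q\setminus\FA)\cap\Preo(Y^{i-1})$ already lies in $Y^{i-1}$, hence is not new. Thus $E^i=\Preo(Y^{i-1})\setminus Y^{i-1}\subseteq\FA$, and together with $E^i\cap\FG=\emptyset$ and $E^i\subseteq Z^\infty$ this yields $\bigcup_{i>1}E^i\subseteq(\FA\setminus\FG)\cap Z^\infty$. I expect this fixpoint self-reference to be the only genuinely delicate step.

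\textbf{Converse inclusions.} Finally I would close the loop by a minimal-index argument. For $q\in(\FA\setminus\FG)\cap Z^\infty$, take the least $i$ with $q\in Y^i$; then $q\notin\FG$ rules out the first disjunct and $q\in\FA$ rules out the third, so $q$ must enter through $\Preo(Y^{i-1})$, which is present already at $j=1$, forcing rank $(i,1)$ and $q\in E^i$ (with $i>1$, since $\Preo(Y^0)=\emptyset$). For $q\in Z^\infty\setminus(\FA\cup\FG)$, the same least-$i$ choice together with the inclusion $(Q\setminus\FA)\cap\Preo(Y^{i-1})\subseteq Y^{i-1}$ just proved rules out both the first and the second disjunct, so $q$ enters through $\Theta^i_j$; emptiness of $\Theta^i_1$ then forces $j>1$ and $q\in R^i_j$. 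These reverse inclusions complete \eqref{equ:proof:rankFA} and \eqref{equ:proof:rankNM}; everything outside the obstacle step is bookkeeping on which disjunct is active at which index pair $(i,j)$.
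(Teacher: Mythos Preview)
Your proof is correct and follows essentially the same route as the paper: both arguments classify states by which disjunct of \eqref{equ:proof:lastiteration} first captures them, both establish $D=\FG\cap Z^\infty$ via $Z^\infty\subseteq\Preo(Z^\infty)$, and both resolve the key step $E^i\subseteq\FA$ by the same fixpoint self-reference on $Y^{i-1}=W^{i-1}_l$ together with $\Preo\subseteq\PreE$. Your organization is tidier (grouping (a)$\Leftrightarrow$(b) for all three parts first, then the membership characterizations), and you make the direction \eqref{equ:proof:rankNM:c}$\Rightarrow$\eqref{equ:proof:rankNM:a} explicit where the paper leaves it to the partition of $Z^\infty$, but the substance is the same.
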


\begin{proof}
We show all claims separately.

\textbf{Show \eqref{equ:proof:rankFG}:}
To see that \eqref{equ:proof:rankFG:a} $\Leftrightarrow$
\eqref{equ:proof:rankFG:c} holds, recall that $Z^\infty$ denotes the
fixed-point set.
We can show that $Z^\infty$ is closed under $\Preo(\cdot)$, which
immediately implies that
$(\FGa{}\cap \Preo(Z^\infty))=\FGa{}\cap Z^\infty$. 
Using \eqref{equ:proof:lastiteration} it can be easily observed
that for $i=j=1$ we have $ W^1_1=(\FG\cap \Preo(Z^\infty))=D$. 
As $ Y^0= W^0_0=\emptyset$ this implies that every state
$q\in D$ has $\rank(q)=(1,1)$ and vice versa.  
By the definition of the rank in \eqref{equ:ranking}, this in turn
means that $\rank(q')>(1,1)$ implies $q'\notin\FG\cap Z^\infty$, which proves \eqref{equ:proof:rankFG:a}$\Leftrightarrow$\eqref{equ:proof:rankFG:b}.

\textbf{Show \eqref{equ:proof:rankFA}:}
To see that \eqref{equ:proof:rankFA:b}$\Rightarrow$\eqref{equ:proof:rankFA:a}
holds, we pick $q$ s.t.\ $\rank(q)=(i,1)$ and $i>1$.
With $j=1$ we know that $ W^i_0=\emptyset$ and hence $\Theta^i_0=\emptyset$. It furthermore follows from \eqref{equ:proof:rankFG} and $i>1$ that $q\notin D$. As \eqref{equ:ranking} further implies  $q\in W^i_1$ we conclude from \eqref{equ:proof:lastiteration} that $q\in\Preo(Y^{i-1})$. It follows again from  \eqref{equ:ranking} that $q\notin Y^{i-1}$.
To see, that the other direction also holds, pick $q\in\Preo(Y^{i-1})\setminus Y^{i-1}=E^i$ and observe that $E^i\neq\emptyset$ iff $i>1$ as $Y^0=\emptyset$.
This implies $q\in W^i_j$ (from \eqref{equ:proof:lastiteration}) and hence $q\in\Ya{}^{i}$ by construction.
Now observe that \eqref{equ:ranking} determines the $j$-rank based on $W^i_j\setminus W^i_{j-1}$. As we know that $W^i_1$ contains $\Preo(Y^{i-1})$ from before, we conclude $j=1$.

We now show \eqref{equ:proof:rankFA:a}$\Rightarrow$\eqref{equ:proof:rankFA:c}.
By the nature of the fixed-point we have
$ Y^{i-1}= X^{i-1}=\bigcup_{j}^l  W^{i-1}_j= W^{i-1}_l$.
Hence, $q\in\Preo( W^{i-1}_l)$ and $q\notin W^{i-1}_l$ .  
As $ W^{i-1}_{l}$ is a fixed-point, we know that
$ W^{i-1}_{l}=\FG\cap \Preo( Z^\infty) \cup \Preo( Y^{i-2})\cup
  (Q \setminus \FA) \cap
\OpPreTD{ W^{i-1}_l, W^{i-1}_l\setminus \FA}$.
By definition we have\linebreak
$\OpPreTD{ W^{i-1}_{l}, W^{i-1}_{l}\setminus \FA}
=$
$\PreE( W^{i-1}_{l})
\cap \Preo( W^{i-1}_{l})=\Preo( W^{i-1}_{l})$, where the last equality follows from $\PreE( W^{i-1}_{l}) \supseteq
\Preo( W^{i-1}_{l})$. Hence, 
$ W^{i-1}_{l}=\FG\cap \Preo( Z^\infty) \cup \Preo( Y^{i-2})\cup
  \BR{(Q \setminus \FA) \cap\Preo( W^{i-1}_{l})}$.
It follows that every element in $\Preo( W^{i-1}_l)$ that is not in
$W^{i-1}_l$ must be in $\FA$. By recalling that $D\subseteq Y^{1}\subseteq Y^{i-1}$, we also have $q\notin\FG\cap Z^\infty$ from \eqref{equ:proof:rankFG}, what proves the statement.

To see that \eqref{equ:proof:rankFA:a}$\Leftarrow$\eqref{equ:proof:rankFA:c} also holds, fix $q\in(\FA\setminus\FG)\cap  Z^\infty$ s.t. $\rank(q)=(i,j)$. 
As $q\notin\FG$, it follows from \eqref{equ:proof:rankFG} that $i>1$ and $q\notin D$. With $q\in\FA\cap  Z^\infty$ we see that $q\notin\Theta^i_j$ either. With this, it follows from \eqref{equ:proof:lastiteration} that $q\in\Preo( Y^{i-1})$. As $Y^0$ there exists one $i$ for which $q\in\Preo( Y^{i-1})\setminus Y^{i-1}$.

\textbf{Show \eqref{equ:proof:rankNM}:}
First observe that for any $q$ s.t.\ $\rank(q)=(i,j)$ and $j>1$ we know that $q\in W^i_{j}\setminus W^i_{j-1}$ where $ W^i_{j-1}\neq\emptyset$ and $q\in Y^i\setminus Y^{i-1}$. 
As \eqref{equ:proof:rankFG} and \eqref{equ:proof:rankFA} holds, we furthermore know that $q\notin D$ and $q\notin E^i$. With this it follows from \eqref{equ:proof:lastiteration} that 
$q\in\Theta^i_j\setminus W^i_{j-1}\setminus Y^{i-1}\setminus E^i\setminus D$. This immediately proves \eqref{equ:proof:rankNM:b}$\Rightarrow$\eqref{equ:proof:rankNM:a}.
For the other direction, we see that $q\in\Theta^i_j$ implies $q\in W^i_{j}$ from \eqref{equ:proof:lastiteration}. As $q\notin W^i_{j-1}$ and $q\notin Y^{i-1}$, we know that $\rank(q)=(i,j)$. As $q\notin D$ and $q\notin E^i$, it immediately follows from \eqref{equ:proof:rankFG} and \eqref{equ:proof:rankFA} that $j>1$. 

To see that \eqref{equ:proof:rankNM:a}$\Rightarrow$\eqref{equ:proof:rankNM:c}, observe that \eqref{equ:proof:rankNM:a} and \eqref{equ:proof:lastiteration} imply that $q$ is contained in the last term of \eqref{equ:proof:lastinteration:a}, from which it is easy to see that $q\notin\FA$. Further, $q\notin\FG$ due to \eqref{equ:proof:rankFG}, what proves the statement.
\end{proof}

%
Now observe that the conditional predecessor in \eqref{equ:newPre} can be written as
\begin{align*}
\allowdisplaybreaks
\OpPreTD{P,\,P'}:= &\PreE(P) \cap \Preo(P\cup P')\\
 = &\SetCompX{\qz\in\Qz}{
 \begin{propConjA}
  \Trz(\qz)\cap P\neq\emptyset\\
   \Trz(\qz)\subseteq P\cup P'
 \end{propConjA}}
\cup\SetCompX{\qo\in\Qo}{\Tro(\qo)\cap P\neq\emptyset}.
\end{align*}
With this, \eqref{equ:proof:lastiteration} and \REFlem{lem:rank_prop} imply that for every system state $q\in\Qo\cap  Z^\infty$ one of the following three cases holds:
\begin{compactenum}[(a)]
\item $q\in\FG$ (i.e., $\rank(q)=(1,1)$ ) and there \emph{exists} $q'\in\Tro(q)\cap  Z^\infty$ with defined, arbitrary rank, or  
\item
  $q\in\FA\setminus\FG$, (i.e., $\rank(q)=(i,1)$, $i>1$ ) and there \emph{exists} $q'\in\Tro(q)\cap Z^\infty$ s.t.\ $\rank(q')\leq(i-1,\cdot)<\rank(q)$, or
\item
  $q\notin(\FA\cup\FG)$, (i.e., $\rank(q)=(i,j)$, $j>1$ ) and
  there \emph{exists} $q'\in\Tro(q)\cap Z^\infty$ s.t.\ $\rank(q')=(i,j')<\rank(q)$.
\end{compactenum}
Similarly, for every environment state $q\in\Qz\cap  Z^\infty$ holds
\begin{compactenum}[(a')]
 \item $q\in\FG$ (i.e., $\rank(q)=(1,1)$ ), 
   $\Trz(q)\subseteq  Z^\infty$, and all $q'\in\Trz(q)$ have a defined, arbitrary rank, or 
 \item $q\in\FA\setminus\FG$, (i.e., $\rank(q)=(i,1)$, $i>1$ ), $\Trz(q)\subseteq Z^\infty$ and
   $\rank(q')\leq(i-1,\cdot)<\rank(q)$ \emph{for all} $q'\in\Trz(q)$, or 
 \item $q\notin(\FA\cup\FG)$, (i.e., $\rank(q)=(i,j)$, $j>1$ ), $\Trz(q)\subseteq Z^\infty$, there
   \emph{exists} $q'\in\Trz(q)$ with $\rank(q')=(i,j')<\rank(q)$ and \emph{for all} $q'\in\Trz(q)$ holds
\new{
 \begin{compactenum}
 \item[(c'1)] $\rank(q')=(i,j')<\rank(q)$, or 
 \item[(c'2)] $\rank(q')=(i',\cdot)$ with $i'\leq i$ and $q'\notin\FA$.
 \end{compactenum}
 }
\end{compactenum}
It should be noted that the system strategy $\fo{}$ constructed in
\eqref{equ:rank_new} ensures that the transitions that are
existentially quantified in (a)-(c) are actually taken. Hence, case
(a) resets the rank, case (b) decreases the first component of
the rank and case (c) decreases the second component of the
rank. 


Based on this insight, we first show that any play over $\Gg$ started in a state $q\in
Z^\infty$ that complies with the system strategy $\fo{}$ and the
environment transition rules stays in $Z^\infty$.
\begin{lemma}\label{lem:stayinZ}
 Given the premises of \REFthm{thm:Soundness_single}, it holds for all $q\in Z^\infty$ that $\Tr{}(q)\in Z^\infty$.
\end{lemma}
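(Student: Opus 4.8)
The plan is to read $\Tr{}(q)\in Z^\infty$ as the single-step invariance claim that every successor of $q$ admitted by the play dynamics---$\fo{q}$ when $q\in\Qo$, and any $q'\in\Trz(q)$ when $q\in\Qz$---again lies in $Z^\infty$; applied inductively along a play this yields the informal assertion stated just before the lemma. I would prove it by splitting on whether $q$ is an environment or a system state and reading off the characterizations (a')--(c') and (a)--(c) derived immediately above from \eqref{equ:proof:lastiteration} and \REFlem{lem:rank_prop}. The conceptual core, already isolated inside the proof of \REFlem{lem:rank_prop}, is that $Z^\infty$ is closed under $\Preo$; its universal half handles environment states and its existential half handles system states.

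For $q\in\Qz\cap Z^\infty$ each of (a')--(c') explicitly records $\Trz(q)\subseteq Z^\infty$, which is exactly the claim, so no further work is needed for environment moves. To keep this self-contained I would re-derive the inclusion from \eqref{equ:proof:lastiteration}: membership $q\in Z^\infty$ puts $q$ into one of the three terms, and for an environment state each term supplies a universal clause---$\Trz(q)\subseteq Y^{i-1}$ from $\Preo(Y^{i-1})$, or $\Trz(q)\subseteq W^i_{j-1}\cup(Y^i\setminus\FA)$ from the $\Preo$-component of $\OpPreTD{W^i_{j-1},\,Y^i\setminus\FA}$. Since the fixed-point nesting gives $Y^{i-1}\subseteq Y^i\subseteq Z^\infty$, and hence $W^i_{j-1}\cup(Y^i\setminus\FA)\subseteq Z^\infty$, every such clause confines $\Trz(q)$ to $Z^\infty$.

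For $q\in\Qo\cap Z^\infty$ I would invoke the construction of $\fo{}$ in \eqref{equ:rank_new}: when $\rank(q)>(1,1)$ it selects a successor of strictly smaller rank, and when $\rank(q)=(1,1)$ it selects an arbitrary successor lying in $Z^\infty$. Cases (a)--(c) guarantee that such a successor always exists, so $\fo{}$ is well defined at $q$; moreover the ranking \eqref{equ:ranking} is defined only on states of $Z^\infty$, so any rank-carrying successor is automatically a member of $Z^\infty$. Hence $\fo{q}\in Z^\infty$ in all three cases.

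The only step I expect to require genuine care is the environment case (c'), where confinement rests on the universal clause of the conditional predecessor $\OpPreTD{}$ rather than on a bare $\Preo$; there one must first observe separately that both $W^i_{j-1}$ and $Y^i\setminus\FA$ lie in $Z^\infty$ before concluding $\Trz(q)\subseteq Z^\infty$. The remaining cases are routine bookkeeping over the three ranks, and the single-step invariance, iterated along a play, delivers that no play compliant with $\fo{}$ ever leaves $Z^\infty$.
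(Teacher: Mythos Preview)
Your proposal is correct and follows essentially the same approach as the paper: split on $q\in\Qz$ versus $q\in\Qo$, read off $\Trz(q)\subseteq Z^\infty$ from the explicit clauses of (a')--(c') in the environment case, and in the system case combine (a)--(c) with the strategy definition \eqref{equ:rank_new}, using that a defined rank forces membership in $Z^\infty$. The paper's proof is terser and simply cites the case list rather than re-deriving the inclusions from \eqref{equ:proof:lastiteration}; your additional unpacking is sound but not needed once (a')--(c') are in place.
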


\begin{proof}
Suppose $q\in \Qz\cap Z^\infty$. Then $\rank(q)$ is defined and one of the cases (a')-(c') holds. As for all cases holds $\Trz(q)\subseteq Z^\infty$, the claim follows.
  Suppose $q\in \Qo\cap Z^\infty$. Then $\rank(q)$ is defined and one of the cases (a)-(c) holds. If (a) holds, $q'=\fo{}(q)$ implies $q'\in  Z^\infty$ from the second line of \eqref{equ:rank_new}. If (b)-(c) holds $q'=\fo{}(q)$ implies $q'\in Z^\infty$ from the first line of \eqref{equ:rank_new}.  
\end{proof}

Next we show that every play $\pi$ on $\Gg$ consistent with $\fo{}$ and starting in $q\in Z^\infty$
satisfies the GR(1) winning condition.

\begin{lemma}\label{lem:GFa imp GFg}
Given the premises of \REFthm{thm:Soundness_single}, it holds
for all $q\in Z^\infty$ that $\Lomega_q(\Gg,\fo{}) \subseteq
\overline{\Lomega_q(\Gg,\FcA)} \cup \Lomega_q(\Gg,\FcG)$.
\end{lemma}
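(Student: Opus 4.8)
The plan is to prove the equivalent implication hidden in the union: I fix a play $\pi\in\Lomega_q(\Gg,\fo{})$ with $q\in Z^\infty$ and show that if $\pi$ visits $\FG$ only finitely often, then it visits $\FA$ only finitely often, i.e.\ $\pi\in\overline{\Lomega_q(\Gg,\FcA)}$. This suffices, because if $\pi$ visits $\FG$ infinitely often it already lies in $\Lomega_q(\Gg,\FcG)$ (here $\FcG=\set{\FG}$) and there is nothing to prove. By \REFlem{lem:stayinZ} every state of $\pi$ lies in $Z^\infty$, so $\rank(\pi(k))$ is defined for all $k$; I write $(i_k,j_k):=\rank(\pi(k))$.

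The engine of the proof is the behaviour of the first rank component $i_k$ along $\pi$. Reading off the case distinction (a)--(c) and (a')--(c') established above (and enforced by the strategy \eqref{equ:rank_new}), I record the monotonicity statement: whenever $\pi(k)\notin\FG$ one has $i_{k+1}\le i_k$, and whenever $\pi(k)\in\FA\setminus\FG$ the drop is strict, $i_{k+1}<i_k$. Indeed, a state outside $\FG$ falls under case (b)/(b') (rank $(i_k,1)$ with $i_k>1$, forcing $i_{k+1}\le i_k-1$) or under case (c)/(c') (rank $(i_k,j_k)$ with $j_k>1$, forcing $i_{k+1}\le i_k$); only case (a)/(a'), which applies exactly at $\FG$-states, permits $i$ to grow. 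Crucially these inequalities hold for \emph{every} admissible successor, so they are valid both when $\pi(k)$ is a system state (where $\fo{}$ selects a rank-decreasing move) and when it is an environment state (where the universal quantifier over $\Trz(\pi(k))$ in (b') and (c') already covers all of the environment's options).

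Now I invoke the finite-visit hypothesis: pick $N$ with $\pi(k)\notin\FG$ for all $k\ge N$. Then $(i_k)_{k\ge N}$ is a non-increasing sequence of positive integers and hence stabilizes; choose $M\ge N$ and $i^\ast$ with $i_k=i^\ast$ for all $k\ge M$. For any $k\ge M$ we have $\pi(k)\notin\FG$, so if in addition $\pi(k)\in\FA$ then $\pi(k)\in\FA\setminus\FG$ and the strict inequality yields $i_{k+1}<i^\ast$, contradicting $i_{k+1}=i^\ast$. Hence $\pi(k)\notin\FA$ for all $k\ge M$, so $\pi$ meets $\FA$ only finitely often, i.e.\ $\pi\in\overline{\Lomega_q(\Gg,\FcA)}$, as required.

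The only delicate point is the uniform monotonicity of the $i$-component, together with the observation that rank-resetting is confined to $\FG$-states (case (a)); once these are pinned down the proof reduces to the stabilization of a non-increasing integer sequence. I therefore expect the main obstacle to lie in arguing the non-increase of $i$ uniformly over \emph{all} environment moves: the argument must lean on the fact that (b') and (c') quantify over every successor in $\Trz(\pi(k))$, so the environment cannot raise $i$ while the play is outside $\FG$, and on \REFlem{lem:stayinZ} guaranteeing that the play never escapes $Z^\infty$ (keeping all ranks defined).
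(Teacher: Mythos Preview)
Your argument is correct and is essentially the contrapositive of the paper's proof: the paper assumes $\FA$ is visited infinitely often and shows $\FG$ must be as well, whereas you assume finitely many $\FG$-visits and deduce finitely many $\FA$-visits. Both rest on precisely the same observation you isolate, namely that the first rank component is non-increasing outside $\FG$ (cases (b)/(b'), (c)/(c')) and strictly drops at states in $\FA\setminus\FG$ (cases (b)/(b')), with increases possible only via (a)/(a'); your stabilization-of-a-non-increasing-integer-sequence packaging is a bit more explicit than the paper's two-line version, but the mathematical content is identical.
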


\begin{proof}
  Let $\pi\in\Lomega_q(\Gg,\fo{})$, i.e., $\pi(0)=q\in Z^\infty$. Then it follows from \REFlem{lem:stayinZ} that $\pi(k)\in Z^\infty$ for all $k\in\mathbb{N}$, i.e., one of the cases (a)-(c') holds for every $k$. 
  If $\pi$ visits every $\FA$ infinitely often, then case (b) or (b') occurs infinitely often.
  It follows that the first component decreases infinitely often.
  The only option that allows the first component to
  increase is by going through case (a) or (a').
  Hence, $\pi$ visits $\FG$ infinitely often.
\end{proof}

Next we show that there always exists a play $\pi$ on $\Gg$ that 
complies with $\fo{}$, starts in a state $q\in Z^\infty$ and visits every $\FG$ infinitely often. 

\begin{lemma}\label{lem:existE}
Given the premises of \REFthm{thm:Soundness_single}, it holds
for all $q\in Z^\infty$ that $\Lomega_q(\Gg,\fo{}) \cap \Lomega_q(\Gg,\FcG)\neq \emptyset$.
\end{lemma}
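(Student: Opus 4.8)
The plan is to exhibit a single play $\pi$ that is compliant with $\fo{}$ and visits $\FG$ infinitely often; this play then witnesses $\Lomega_q(\Gg,\fo{})\cap\Lomega_q(\Gg,\FcG)\neq\emptyset$. The crucial observation is that the lemma only asks for the \emph{existence} of a suitable play, so I am free to resolve the environment's nondeterminism myself and let the environment \enquote{cooperate}. I would build $\pi$ inductively starting from $\pi(0)=q$: at a system state I take the move prescribed by $\fo{}$ in \eqref{equ:rank_new}, and at an environment state I pick a successor that strictly decreases the rank whenever possible. By \REFlem{lem:stayinZ} every such play remains within $Z^\infty$, so the rank is defined along the entire play.

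Next I would verify that the rank strictly decreases at every step on which the current state has rank above $(1,1)$. For system states this is immediate from cases (b) and (c), since $\fo{}$ is defined in \eqref{equ:rank_new} precisely to realize the rank-decreasing successors guaranteed there. For environment states the relevant point is case (c'): although the environment is in general adversarial, case (c') guarantees the \emph{existence} of a successor $q'\in\Trz(q)$ with $\rank(q')<\rank(q)$, and I simply let the constructed play follow this successor; in case (b') every successor already decreases the rank, so any choice works. Consequently, as long as the play has not reached a state of rank $(1,1)$, the lexicographic rank strictly decreases.

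Because the ranks range over the finite, lexicographically ordered set $\{(i,j)\mid i\in[1;k],\,j\in[1;l_i]\}$, there is no infinite strictly decreasing chain; hence, starting from any reachable state, the play must reach a state of rank $(1,1)$ after finitely many steps. By \eqref{equ:rankFG} a state of rank $(1,1)$ lies in $\FG\cap Z^\infty$, so the play visits $\FG$. At such a state the rank may \enquote{reset} to an arbitrary value (cases (a)/(a')), but from the new state the same well-foundedness argument applies again, forcing a further visit to $\FG$ after finitely many more steps. Iterating shows that $\pi$ visits $\FG$ infinitely often, i.e.\ $\pi\in\Lomega_q(\Gg,\FcG)$, while $\pi\in\Lomega_q(\Gg,\fo{})$ by construction, which establishes the claim. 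The main subtlety is the cooperative resolution of case (c'): I must ensure that always choosing the rank-decreasing successor there is consistent with $\fo{}$ being fixed, which it is, since $\fo{}$ constrains only the \emph{system}'s moves and leaves the environment moves free.
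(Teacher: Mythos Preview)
Your proposal is correct and follows essentially the same approach as the paper: both construct a cooperative play by letting the environment choose rank-decreasing successors (using the existential guarantee in case~(c')), then argue via well-foundedness of the lexicographic rank that $\FG$ is reached infinitely often. The only cosmetic difference is that the paper organizes its case analysis by the value of the rank while you organize it by which player moves, but the content and the key observation---that the existential part of $\OpPreTD{}$ supplies the needed cooperative environment move---are identical.
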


\begin{proof}
  We will construct an infinite computation $\pi$ in
  $\Lomega_q(\fo{})\cap \Lomega_q(\Gg,\FcG)$, hence $\pi(0)=q\in Z^\infty$.
  We construct $\pi$ by induction such that for every $k$ we have
  $\pi(k)\in Z^\infty$.
  As $\pi$ will be consistent with $\fo{}$ this follows from
  \REFlem{lem:stayinZ}.

  Let $\pi(k)=q'$, hence, by induction $\pi(k)\in Z^\infty$. Let $\rank(q')=(i,j)$, that is $q'\in  W^i_j$.
  Then one of the following cases holds:
  \begin{enumerate}
  \item
    $\rank(q')=(1,1)$ - then $q'\in \FG\cap \Preo( Z^\infty)$.
    We extend $\pi$ by choosing a successor $q''$ of $q'$ compatible
    with $\fo{}$ such that $q''\in  Z^\infty$.
  \item
    $\rank(q')=(i,1)$ for $i>1$ - then $q'\in\Preo( Y^{i-1})$.
    We extend $\pi$ by choosing a successor $q''$ of $q'$ compatible
    with $\fo{}$ such that $q''\in  Y^{i-1}$. That is, the first
    component in the rank of $q''$ is smaller than $i$.
  \item
    $\rank(q')=(i,j)$ for $j>1$ - then $q'\in (Q\setminus \FA)\cap
    \OpPreTD{ W^{i}_{j-1}, Y^{i}\setminus \FA}$.
    By definition of $\OpPreTD{}$ we have $q'\in \PreE( W^i_{j-1})$.
    We extend $\pi$ by choosing a successor $q''$ of $q'$ compatible
    with $\fo{}$ such that $q''\in  W^i_{j-1}$.
    That is, $\rank(q'')<\rank(q')$.

    We note that if $q'\in \Qo$ then the only option compatible with
    $\fo{}$ is $q''$.
    However, if $q'\in \Qz$ then $q''$ is compatible with $\fo{}$ but
    $q''$ is not enforceable by player~$1$.
  \end{enumerate}

  We show that $\pi\in\Lomega_q(\Gg,\FcG)$.
  In option $1$ above, $\FG$ is visited and the rank is possibly increased.
  In options $2$ and $3$ above, the rank of $\pi$ decreases.
  As $\pi$ is infinite, it follows that infinitely many times option
  $1$ is taken, implying that every $\FG$ is visited infinitely often, hence $\pi\in\Lomega_q(\Gg,\FcG)$. 
\end{proof}

As an immediate consequence of \REFlem{lem:stayinZ} and
\REFlem{lem:existE} we can now show that $\OpPre{\Lomega_q(\Gg,\fo{})}$ is contained
in $\OpPre{\Lomega_q(\Gg,\fo{})\cap \Lomega_q(\Gg,\FcA)}$.
Interestingly, this is only
true if $\Lomega(\Gg,\FcG)\subseteq\Lomega(\Gg,\FcA)$.


\begin{lemma}\label{lem:f_nonconflice}
 Given the premises of \REFthm{thm:Soundness_single}, let $q\in Z^\infty$ and
  $\Lomega_q(\Gg,\FcG)\subseteq\Lomega_q(\Gg,\FcA)$. Then, 
$
  \OpPre{\Lomega_q(\Gg,\fo{})} = \OpPre{\Lomega_q(\Gg,\fo{})\cap\Lomega_q(\Gg,\FcA)}\,.
$
\end{lemma}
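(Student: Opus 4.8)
The plan is to prove the two inclusions separately. The inclusion $\OpPre{\Lomega_q(\Gg,\fo{})} \supseteq \OpPre{\Lomega_q(\Gg,\fo{})\cap\Lomega_q(\Gg,\FcA)}$ is immediate, since $\Lomega_q(\Gg,\fo{})\cap\Lomega_q(\Gg,\FcA)\subseteq\Lomega_q(\Gg,\fo{})$ and $\OpPre{\cdot}$ is monotone. So all the work is in the converse inclusion $\subseteq$, i.e. showing that every finite prefix of an $\fo{}$-compliant play from $q$ can already be realized as a prefix of an $\fo{}$-compliant play from $q$ that in addition satisfies the environment assumption $\FcA$.

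First I would fix an arbitrary $\alpha\in\OpPre{\Lomega_q(\Gg,\fo{})}$, say $\alpha$ is a finite prefix of some play in $\Lomega_q(\Gg,\fo{})$ ending in a state $q'$. Since $\alpha$ is a finite initial segment of an $\fo{}$-compliant play started in $q\in Z^\infty$, repeated application of \REFlem{lem:stayinZ} gives $q'\in Z^\infty$. This is the key point that makes the intermediate state usable: because $q'\in Z^\infty$, both the strategy $\fo{}$ is defined there and \REFlem{lem:existE} applies at $q'$, yielding a play $\pi'\in\Lomega_{q'}(\Gg,\fo{})\cap\Lomega_{q'}(\Gg,\FcG)$, i.e. an $\fo{}$-compliant continuation from $q'$ that visits $\FG$ infinitely often.

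Next I would glue $\alpha$ and $\pi'$ at their common state $q'$ to form a play $\pi$ with $\alpha\le\pi$. Because $\fo{}$ is memoryless (so compliance is a purely positional condition on the $\Qo$-states of the play) and because the state $q'$ where the two segments are joined has a fixed type in $\Q=\Qz\cup\Qo$, the concatenation is again a valid play over $\Gg$ started in $q$ and compliant with $\fo{}$; moreover $\pi$ inherits from $\pi'$ the property of visiting $\FG$ infinitely often. Hence $\pi\in\Lomega_q(\Gg,\fo{})\cap\Lomega_q(\Gg,\FcG)$. Now the hypothesis $\Lomega_q(\Gg,\FcG)\subseteq\Lomega_q(\Gg,\FcA)$ is invoked exactly once, to upgrade ``$\pi$ visits $\FG$ infinitely often'' to ``$\pi$ visits $\FA$ infinitely often'': it gives $\pi\in\Lomega_q(\Gg,\FcA)$, so $\pi\in\Lomega_q(\Gg,\fo{})\cap\Lomega_q(\Gg,\FcA)$ and therefore $\alpha\le\pi$ witnesses $\alpha\in\OpPre{\Lomega_q(\Gg,\fo{})\cap\Lomega_q(\Gg,\FcA)}$.

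Given \REFlem{lem:stayinZ} and \REFlem{lem:existE}, the genuinely delicate points are not computational but bookkeeping: one must verify that \REFlem{lem:existE} can be applied at the \emph{terminal state $q'$ of an arbitrary prefix} rather than only at the initial state (which is why invariance of $Z^\infty$ under $\fo{}$ and under environment moves is needed), and that the gluing preserves $\fo{}$-compliance and the alternation structure of plays. I expect the gluing/parity check to be the main obstacle to state cleanly, and I would handle it by appealing to the memorylessness of $\fo{}$ so that compliance reduces to a local condition at each $\Qo$-position, which is trivially preserved by concatenation at a shared state. The use of the assumption $\Lomega_q(\Gg,\FcG)\subseteq\Lomega_q(\Gg,\FcA)$ is then the single substantive hypothesis, and the remark following the lemma statement (that the conclusion fails without it) is explained precisely by the fact that without it the constructed $\pi$ would visit $\FG$ but possibly not $\FA$ infinitely often.
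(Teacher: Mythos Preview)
Your proposal is correct and follows essentially the same approach as the paper's proof: note that $\supseteq$ is trivial, take a finite prefix ending in $q'$, use \REFlem{lem:stayinZ} to get $q'\in Z^\infty$, apply \REFlem{lem:existE} at $q'$ to obtain an $\fo{}$-compliant continuation hitting $\FG$ infinitely often, and then invoke the hypothesis $\Lomega_q(\Gg,\FcG)\subseteq\Lomega_q(\Gg,\FcA)$. Your explicit discussion of why the gluing preserves $\fo{}$-compliance (via memorylessness) is more careful than the paper, which simply asserts that $\pi\beta\in\Lomega_q(\Gg,\fo{})$.
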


\begin{proof}
  Observe that \enquote{$\supseteq$} above always holds.
  We therefore only prove the other direction.
  Pick $\pi \in \OpPre{\Lomega_q(\Gg,\fo{q})}$.
  Let $q'$ be the last state in $\pi$. 
  As $q\in Z^\infty$ it follows  from \REFlem{lem:stayinZ} that $q'\in
  Z^\infty$.
  Then we can use \REFlem{lem:existE} to pick $\beta$
  s.t.\ $\pi\beta\in \Lomega_q(\Gg,\fo{}) \cap \Lomega_q(\Gg,\FcG)$.
  As $\Lomega_q(\Gg,\FcG)\subseteq \Lomega_q(\Gg,\FcA)$ we therefore
  have $\pi\beta\in\Lomega_q(\Gg,\FcA)$ and hence $\pi\beta\in
  \Lomega_q(\Gg,\fo{})\cap \Lomega_q(\Gg,\FcA)$.
  With this we immediately have that $\pi\in
  \OpPre{\Lomega_q(\Gg,\fo{})\cap \Lomega_q(\Gg,\FcA)}$.
\end{proof}

\subsubsection{Proof of \REFthm{thm:Soundness_single}}
Combing the above properties of $\fo{}$ we see that
 \eqref{equ:Soundness:infinite} follows from \REFlem{lem:GFa imp GFg}, \eqref{equ:Soundness:nonempty} follows from \REFlem{lem:existE} and 
 \eqref{equ:Soundness:nonblock} follow from \REFlem{lem:f_nonconflice}.

 \subsection{Completeness}\label{sec:proof:completeness}
We start by negating \eqref{equ:new_FP}. We then use the induced ranking of this negated fixed-point to show that the environment can (i) render the negated winning set invariant, and (ii) can force the play to violate the guarantees. Based on this, we show that whenever \eqref{equ:LanguageSpecOld} holds for an arbitrary system strategy $\fo{}$ starting in $\semantics{\overline{\varphi}_4}$, then \eqref{eq:SCT_nonblocking} cannot hold. 

\subsubsection{Negating the fixed-point in \eqref{equ:new_FP}}
We use the negation rule of the $\mu$-calculus, i.e., $\neg (\mu
X~.~F(X))=\nu\overline{X}~.~\overline{F}(\overline{X})$, to negate
\eqref{equ:new_FP}.
This results in the fixed-point 
\begin{align}\label{equ:negatedFourFP_first}
 &\mu \Zt.\nu \Yt.\mu \Xt.\nu \Wt.~(\FGt\rs \cup\rs \Prez(\Zt)) \cap \Prez(\Yt) \cap (\FA\rs\cup\rs \OpPreTDc{\Wt, \Xt \rs\cup\rs \FA}).
\end{align}
By using de-Morgan laws on the right-hand side of
\eqref{equ:negatedFourFP_first} we obtain four disjuncts:
\begin{equation}\label{equ:rearrangeddemorgan}
\begin{array}{l l c c c c c r r}
 & ( & \FGt & \cap & \Prez(\Yt) &  \cap & \FA & ) & \qquad \langle L_1\rangle \\
 \cup & (& \Prez(\Zt) & \cap & \Prez(\Yt) & \cap & \FA &) & \qquad \langle L_2\rangle  \\
 \cup & (& \FGt  &\cap  &\Prez(\Yt)  &\cap&  \OpPreTDc{\Wt, \Xt \cup
   \FA} &)  & \qquad \langle L_3\rangle  \\
 \cup & (& \Prez(\Zt) &\cap & \Prez(\Yt)&  \cap&  \OpPreTDc{\Wt, \Xt
   \cup \FA}& )& \qquad \langle L_4\rangle \\
\end{array}
\end{equation}
From the structure of the fixed-points, we know that $\Zt\subseteq \Xt
\subseteq \Wt \subseteq\Yt$.
As $\Prez$ is monotonic, we have $\Prez(\Zt)\subseteq \Prez(\Yt)$.
It follows that $\langle L_2 \rangle$ above simplifies to $\Prez(\Zt)\cap \FA$ and
$\langle L_4 \rangle$ simplifies to $\Prez(\Zt)\cap \OpPreTDc{\Wt,\Xt\cup \FA}$. 
From de-Morgan rules, $\Wt\cap (\Xt\cup \FA)=\Xt\cap \Wt \cup \Wt\cap
\FA$.
From $\Xt \subseteq \Wt$ we have $\Xt\cap \Wt = \Xt$.
By definition  $\OpPreTDc{\Wt,\Xt\cup \FA}= \PreA(\Wt) \cup
\Prez(\Xt\cup (\Wt \cap\FA))$.
However, as $\Zt\subseteq \Xt$ we know that $\Prez(\Zt)\subseteq
\Prez(\Xt\cup(\Wt \cap \FA))$.
Hence, $\langle L_4 \rangle$ simplifies to $\Prez(\Zt)$, making $\langle L_2 \rangle$ redundant.
From $\Xt\subseteq \Wt\subseteq \Yt$ we know $\Prez(\Xt \cup (\Wt \cap
\FA)) \subseteq \Prez(\Yt)$.
From $\Wt\subseteq \Yt$ we know that $\PreA(\Wt)\subseteq \Prez(\Yt)$.
Thus, $\langle L_3 \rangle$ simplifies to $\FGt \cap \OpPreTDc{\Wt,\Xt\cup \FA}$.
Summarizing, we have
\begin{align*}
   ( \FGt \cap \Prez(\Yt) \cap \FA ) 
~\cup~  ( \FGt \cap \OpPreTDc{\Wt,\Xt\cup \FA}) 
~\cup~ ( \Prez(\Zt)),
\end{align*}
so \eqref{equ:negatedFourFP_first} simplifies to
\begin{equation}\label{equ:negatedFourFP}
  \begin{array}{l}
    \overline{\varphi}_4 = \mu \Zt ~.~\nu \Yt~.~\mu \Xt~.~\nu \Wt~.~ \hfill\\
    \multicolumn{1}{r}{ \quad (\Prez(\Zt) ~\cup~
      (\FGt \cap \FA \cap \Prez(\Yt)) ~\cup~  (\FGt\cap \OpPreTDc{\Wt, \Xt \cup \FA}))}.
  \end{array}
\end{equation}

\subsubsection{The induced ranking of $\Zt^\infty$}
Let $\Zt^0=\emptyset$ and 
$\Zt^i$ for $i\geq 1$
denote the set obtained in the $i$th iteration over $\Zt$.
For $i\geq 1$ we denote $\Yt{}^i=\Zt^i$ as the value of the fixpoint on
$\Yt{}$ that computes the $i$-th iteration of $\Zt$ .
Furthermore, let $\Xt{}^i_0=\emptyset$ and denote by $\Xt{}^i_j$ for
$j\geq 1$ the set obtained in the $j$-th iteration over $\Xt{}$
performed while computing $\Yt{}^i$ (i.e., using $\Yt{}^i$ for $\Yt{}$ and
$\Zt^{i-1}$ for $\Zt$).
Then
it follows from the properties of the fixed-point that after the $i$th 
iteration over $\Zt$ has terminated, we have $\Zt^i=\bigcup_j \Xt{}_j^i$
(in particular $\Zt^k=\bigcup_j \Xt{}_j^k$ for $\Zt^\infty=\Zt^k$).
We define the ranking for every state $q\in \Zt^\infty$ s.t.\
\begin{equation}
\rank(q)=(i,j) \iff q\in \Xt{}^i_j\setminus \Xt{}^i_{j-1} \mbox{ for $i,j>0$.}
\end{equation}
%
After termination of the inner fixed-point over $\Wt{}$,
giving $\Wt{}^i_j=\Xt{}^i_j$, we have
\begin{align}\label{equ:negatedFourFP_withranking}
\Xt{}^i_j=\Prez(\Zt^{i-1}) \rs\cup\rs (\FGt\rs\cap\rs \FA \rs\cap\rs \Prez(\Zt^{i})) \rs\cup\rs  (\FGt\rs\cap\rs \OpPreTDc{\Xt{}^i_j, \Xt{}^{i}_{j-1} \rs\cup\rs \FA})).
 \end{align}
Before interpreting this set, we look at the last term of \eqref{equ:negatedFourFP_withranking} separately.  Using the definition of $\PreA$, $\PreE$ and $\OpPreTDc{}$ from \REFsec{sec:prelim} we have
\begin{align}
 &\OpPreTDc{\Xt{}^i_j, \Xt{}^{i}_{j-1} \cup \FA}:= \PreA(\Xt{}^i_j) \cup
  \Prez(\Xt{}^i_j\cap (\Xt{}^{i}_{j-1} \cup \FA))\notag\\ 
 &=\PreA(\Xt{}^i_j) \cup \Prez(\Xt{}^{i}_{j-1}) \cup \Prez(\Xt{}^i_j\cap \FA)\notag\\ 
 &= \SetCompX{\qz\rs\in\rs\Qz}{\rs
 \begin{propDisjA}
  \Trz(\qz)\subseteq \Xt{}^i_j\\
 \Trz(\qz)\cap \Xt{}^{i}_{j-1}\neq\emptyset\\
 \Trz(\qz)\cap (\Xt{}^i_j\rs\cap\rs \FA)\neq\emptyset
 \end{propDisjA}\rs}
 \cup\SetCompSplit{\qo\rs\in\rs\Qo}{\Tro(\qo)\subseteq (\Xt{}^i_j\rs\cup\rs\Xt{}^{i}_{j-1}\rs\cup\rs(\Xt{}^i_j\rs\cap\rs \FA))}\notag\\
 &=\SetCompX{\qz\rs\in\rs\Qz}{\rs
 \begin{propDisjA}
  \Trz(\qz)\subseteq \Xt{}^i_j\\
 \Trz(\qz)\cap \Xt{}^{i}_{j-1}\neq\emptyset\\
 \Trz(\qz)\cap (\Xt{}^i_j\rs\cap\rs \FA)\neq\emptyset
 \end{propDisjA}\rs}
 \cup\SetCompX{\qo\rs\in\rs\Qo}{\Tro(\qo)\subseteq \Xt{}^i_j)}\label{equ:simplifyOpPreTDc}
\end{align}

Using \eqref{equ:negatedFourFP_withranking} and \eqref{equ:simplifyOpPreTDc} we see that for every
system state $q\in\Qo\cap \Zt^\infty$ with $\rank(q)=(i,j)$ holds
\begin{compactenum}[(a)]
\item $\Tro(q)\subseteq\Zt^\infty$ and  for all $q'\in\Tro(q)$
  holds $\rank(q')\leq(i-1,\cdot)$, or  
 \item $q\in\FA\setminus\FG$, $\Tro(q)\subseteq\Zt^\infty$ and for all $q'\in\Tro(q)$ holds $\rank(q')\leq(i,\cdot)$, or 
 \item $q\in\FGt$, $\Tro(q)\subseteq\Zt^\infty$ and for all $q'\in\Tro(q)$ holds $\rank(q')\leq(i,j)$.
\end{compactenum}
Similarly, for every environment state $q\in\Zt^\infty\cap\Qz$ with $\rank(q)=(i,j)$ holds
\begin{compactenum}[(a')]
 \item that there exists $q'\in\Trz(q)\cap\Zt^\infty$ s.t.\ $\rank(q')\leq(i-1,\cdot)$, or
 \item $q\in\FA\setminus\FG$, and there exists $q'\in\Trz(q)\cap\Zt^\infty$ s.t.\ $\rank(q')\leq(i,\cdot)$, or
 \item $q\in\FGt$ and either
 \begin{compactenum}
 \item[(c'1)] $\Trz(q)\subseteq\Zt^\infty$ and for all $q'\in\Trz(q)$ holds $\rank(q')\leq(i,j)$, or
 \item[(c'2)] there exists $q'\in\Trz(q)\cap\Zt^\infty$ s.t.\  $\rank(q')\leq(i,j-1)$, or
 \item[(c'3)] there exists $q'\in\Trz(q)\cap\Zt^\infty$ s.t.\ $\rank(q')\leq(i,j)$ and $q'\in\FA$.
 \end{compactenum}
\end{compactenum}

\subsubsection{Consequences for a game over $\Gg$}
Consider a system strategy $\fo{}$ over $\Gg$ starting in some state $q\in
\Zt^\infty$ and an environment playing in accordance with the properties $(a)-(c)$ and $(a')-(c')$.
We denote by $R_{\fo{}}$ the subset
of $\Zt^\infty$ that is reachable under $\fo{}$ within such a game and construct this region by induction on the distance from $q$ as follows.

By assumption $q\in\Zt^\infty$. Initially, we set $q\in R_{\fo{}}$.
Consider, by induction, a state $q'\in R_{\fo{}}$ with
$\rank(q')=(i,j)$. Then we have two cases.\\
\textbf{(1)} If $q'\in Q^1$, then based on the $(a)$, $(b)$, and $(c)$ above it
follows that either $(a)$ all successors of $q'$ have rank at most 
$(i-1,\cdot)$ and $\delta^1(q')\in \Zt^\infty$, $(b)$ all
successors of $q'$ have rank at most $(i,\cdot)$ and
$\delta^1(q')\in \Zt^\infty$,
or $(c)$ all successors of $q'$ have rank at
most $(i,j)$ and $\Tro(q')\in \Zt{}^\infty$.
In particular, one of these cases holds for the successor $q''$ of
$q'$ that is compatible with $\fo{}$. We add $q''$ to $R_{\fo{}}$.\\
\textbf{(2)} If $q'\in Q^0$, then based on $(a')$, $(b')$, and $(c')$ above it
follows that either $(a')$ there is a successor $q''$ of $q'$ that has
rank at most $(i-1,\cdot)$ and $q''\in\Zt^\infty$,
$(b')$ there is a successor $q''$ of $q'$ that has rank
at most $(i,\cdot)$ and $q''\in\Zt^\infty$, $(c'2)$ there is a
successor $q''$ of $q'$ such that $\rank(q'')\leq (i,j-1)$, or $(c'3)$
there is a successor $q''$ of $q'$ such that $\rank(q'')\leq (i,j)$
and $q''\in \FcA{}$.
In all these cases, we add this identified successor $q''$ to
$R_{\fo{}}$.
As $\fo{}$ is a strategy for the system, the state $q''$ is compatible with
$\fo{}$. 
The remaining case is $(c'1)$ when {\bf all} successors $q''$ of $q'$
satisfy that $q''\in\Zt^\infty$ and that $\rank(q'') \leq (i,j)$.
In that case we add {\bf all} successors $q''$ of $q'$ to
$R_{\fo{}}$.
As $\fo{}$ is a strategy for the system all these successors $q''$ are
compatible with $\fo{}$. 

We denote by $\Lomega_q(\Gg,\fo{},R_{\fo{}})$ the restriction of
$\Lomega_q(\Gg,\fo{})$ to computations that remain within $R_{\fo{}}$.
It is easy to see that the following lemma follows by construction and is therefore stated without proof.
\begin{lemma}\label{lem:propRf1}
  Given the premises of \REFthm{thm:Completeness_single}, it holds that 
  $\Lomega_q(\Gg,\fo{},R_{\fo{}})\neq \emptyset$ and $\Lomega_q(\Gg,\fo{},R_{\fo{}})\subseteq\Zt^\infty$. 
\end{lemma}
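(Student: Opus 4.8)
The plan is to read off both claims directly from the inductive construction of $R_{\fo{}}$, handling the inclusion first (which is essentially immediate) and then the non-emptiness (which requires a short case inspection). The two assertions are logically independent, so I would prove them in that order and combine them at the end.

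For $\Lomega_q(\Gg,\fo{},R_{\fo{}})\subseteq\Zt^\infty$ I would simply observe that $R_{\fo{}}$ is, by its very definition, assembled as a subset of $\Zt^\infty$: every state added in case $(1)$ or case $(2)$ of the construction is taken from $\Zt^\infty$. Since $\Lomega_q(\Gg,\fo{},R_{\fo{}})$ is by definition the restriction of $\Lomega_q(\Gg,\fo{})$ to plays that never leave $R_{\fo{}}$, and $R_{\fo{}}\subseteq\Zt^\infty$, every such play remains inside $\Zt^\infty$. This gives the inclusion at once, with no ranking argument required.

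For non-emptiness the key step is to argue that $R_{\fo{}}$ is non-blocking with respect to $\fo{}$-compatible moves, i.e.\ that every $q'\in R_{\fo{}}$ admits at least one successor in $R_{\fo{}}$ reachable by a transition compatible with $\fo{}$. I would verify this by a case split mirroring the construction. If $q'\in\Qo$, one of $(a)$, $(b)$, $(c)$ holds; each of these already guarantees $\Tro(q')\subseteq\Zt^\infty$, so the unique $\fo{}$-chosen successor lies in $\Zt^\infty$ and is exactly the state that the construction adds to $R_{\fo{}}$. If $q'\in\Qz$, one of $(a')$, $(b')$, $(c'1)$, $(c'2)$, $(c'3)$ holds; in each branch the construction identifies at least one successor $q''\in\Zt^\infty$ (all of them under $(c'1)$) and adds it to $R_{\fo{}}$, and since $\fo{}$ is a \emph{system} strategy it places no constraint on environment moves, so $q''$ is trivially $\fo{}$-compatible.

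Given this successor property I would then build an infinite play by starting at $q\in R_{\fo{}}$ and repeatedly extending it through a guaranteed $\fo{}$-compatible successor inside $R_{\fo{}}$; the limit sequence is a play compliant with $\fo{}$ that never leaves $R_{\fo{}}$, witnessing $\Lomega_q(\Gg,\fo{},R_{\fo{}})\neq\emptyset$. The only point requiring care — and the reason the authors call the lemma immediate — is the bookkeeping in the environment case: one must confirm that each branch $(a')$–$(c'3)$ genuinely supplies a successor in $\Zt^\infty$. Since those existence statements are precisely the properties already derived from \eqref{equ:negatedFourFP_withranking} and \eqref{equ:simplifyOpPreTDc}, no new argument is needed, and the lemma follows directly by construction.
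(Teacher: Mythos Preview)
Your proposal is correct and takes essentially the same approach as the paper, which in fact provides no proof at all, stating only that the lemma ``follows by construction and is therefore stated without proof.'' You have simply made explicit the two observations the paper leaves implicit: that every state added to $R_{\fo{}}$ lies in $\Zt^\infty$, and that the inductive construction always supplies at least one $\fo{}$-compatible successor in $R_{\fo{}}$, so an infinite compliant play can be built.
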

Hence, the environment can render $\Zt^\infty$ invariant. Additionally, it can ensure that $\FcG$ is only visited finitely often, as formalized in the following lemma.

\begin{lemma}\label{lem:FGng}
Given the premises of \REFthm{thm:Completeness_single}, it holds 
for all $q\in \Zt{}^\infty$ and for every system strategy $\fo{}$ over $\Gg$ that
$\Lomega_q(\Gg,\fo{},R_{\fo{}}) \cap \Lomega(\Gg,\FcG)=\emptyset$.
\end{lemma}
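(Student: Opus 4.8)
The plan is to run a well-foundedness argument on the first component of the induced rank: I will show that along any play remaining in $R_{\fo{}}$ this first component never increases, while each visit to $\FG$ forces it to strictly decrease at the very next step. Since ranks are positive integers this bounds the number of visits to $\FG$, and as $\FcG=\Set{\FG}$ is a singleton the generalized B\"uchi condition then cannot be met.

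First I would record the two structural facts supplied by the case analysis preceding the lemma. For a system state $q\in\Qo\cap\Zt^\infty$ with $\rank(q)=(i,j)$, one of (a)--(c) holds, and each bounds the rank of \emph{all} successors by first component at most $i$; hence the $\fo{}$-chosen successor obeys this bound. For an environment state $q\in\Qz\cap\Zt^\infty$, the successor selected when building $R_{\fo{}}$ (the witness of (a$'$), (b$'$), (c$'$2), (c$'$3), or any successor in case (c$'$1)) likewise has first component at most $i$. Thus along any $\pi\in\Lomega_q(\Gg,\fo{},R_{\fo{}})$, writing $\rank(\pi(k))=(i_k,\cdot)$, the sequence $(i_k)_k$ is non-increasing. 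Next I would isolate the behaviour at $\FG$: if $\pi(k)\in\FG$ then $\pi(k)\notin\FGt$ and $\pi(k)\notin\FA\setminus\FG$, so neither the ``$\FGt$''-cases (c)/(c$'$) nor the ``$\FA\setminus\FG$''-cases (b)/(b$'$) can apply. Only (a) or (a$'$) remains, and both force $\rank(\pi(k+1))\leq(i_k-1,\cdot)$, i.e.\ a strict drop of the first component at the step leaving $\pi(k)$. (If convenient I would also note that every $q\in\FG\cap\Zt^\infty$ can enter $\Xt{}^i_j$ only through the leading disjunct $\Prez(\Zt^{i-1})$ of \eqref{equ:negatedFourFP_withranking}, so in fact $\rank(q)=(i,1)$; but this is not needed here.)

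Finally I would combine the two facts. Since $(i_k)_k$ is a non-increasing sequence of positive integers, it strictly decreases at most $i_0-1$ times in total. Each index $k$ with $\pi(k)\in\FG$ charges the strictly-decreasing step $k\mapsto k+1$, and distinct such indices charge distinct steps, so $\pi$ can visit $\FG$ only finitely often, i.e.\ $\OpInf{\pi}\cap\FG=\emptyset$. As $\FcG=\Set{\FG}$, membership $\pi\in\Lomega(\Gg,\FcG)$ would require $\OpInf{\pi}\cap\FG\neq\emptyset$, a contradiction; hence $\Lomega_q(\Gg,\fo{},R_{\fo{}})\cap\Lomega(\Gg,\FcG)=\emptyset$. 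The main obstacle is the second step of the first paragraph: one must be careful that the restriction to $R_{\fo{}}$ genuinely forces every taken transition to obey the rank bound of its case, even though the environment's membership in $\Prez(\cdot)$ is only \emph{existentially} constrained inside $\Zt^\infty$. This is exactly what the inductive construction of $R_{\fo{}}$ secures, by inserting at each environment state only the rank-respecting witness successor (or, in case (c$'$1), all successors, which then all respect the bound), so that no play confined to $R_{\fo{}}$ can raise the first component.
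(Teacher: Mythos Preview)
Your argument is correct and follows essentially the same approach as the paper: the first rank component is non-increasing along any play in $R_{\fo{}}$, only cases (a)/(a$'$) apply at states in $\FG$, and these force a strict drop, so $\FG$ is visited only finitely often. Your write-up is in fact more careful than the paper's in justifying why the restriction to $R_{\fo{}}$ guarantees the rank bounds at environment states.
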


\begin{proof}
Let $\pi\in\Lomega_q(\Gg,\fo{},R_{\fo{}})$. In particular, $\pi(0)=q\in
\Zt^\infty$.
As $R_{\fo{}}\subseteq \Zt^\infty$, for all $k\in\mathbb{N}$ one of
the cases (a)-(c') holds.
As $\FG$ can only be visited by going through cases (a) and (a'),
every visit of $\pi$ to $\FG$ causes
the first component of the rank to decrease.
As no case causes an increase of the first component of the rank,
$\pi$ ultimately gets trapped in states with some $i$-rank and cannot
visit $\FG$ any more.
Hence, $\FG$ is not visited infinitely often and therefore
$\pi\notin\Lomega(\Gg,\FcG)$.
\end{proof}

Using \REFlem{lem:propRf1} and \REFlem{lem:FGng} we can now show the essence of \REFthm{thm:Completeness_single}, i.e., that  whenever \eqref{equ:LanguageSpecOld_q} holds for an arbitrary system strategy $\fo{}$ starting in $\semantics{\overline{\varphi}_4}$, then \eqref{eq:SCT_nonblocking_q} cannot hold. This is formalized in the following lemma.

\begin{lemma}\label{lem:FAconditioned}
 Given the premises of \REFthm{thm:Completeness_single}, let $q\in \semantics{\overline{\varphi}_4^v}$ and $\fo{}$ be a system
strategy over $\Gg$ s.t.\ \eqref{equ:LanguageSpecOld_q} holds. 
Then $\OpPre{\Lomega_q(\Gg,\fo{})} \neq
\OpPre{\Lomega_q(\Gg,\fo{})\cap\Lomega_q(\Gg,\FcA)}$.
\end{lemma}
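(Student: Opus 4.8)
The plan is to exhibit a single finite prefix $\alpha$ that lies in $\OpPre{\Lomega_q(\Gg,\fo{})}$ but not in $\OpPre{\Lomega_q(\Gg,\fo{})\cap\Lomega_q(\Gg,\FcA)}$, which is precisely the strict inequality asserted. First I would record the joint consequence of the two preceding lemmas and the language inclusion: by \REFlem{lem:propRf1} the set $\Lomega_q(\Gg,\fo{},R_{\fo{}})$ is non-empty and contained in $\Zt^\infty$; by \REFlem{lem:FGng} none of its plays visits $\FG$ infinitely often; hence by \eqref{equ:LanguageSpecOld_q} every such play lies in $\overline{\Lomega_q(\Gg,\FcA)}$, i.e.\ visits $\FA$ only finitely often. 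The engine of the argument is that the environment can keep the play inside $\Zt^\infty$ while driving the induced rank down; since the rank is well-founded it stabilizes, after which the language inclusion forbids any further visit to $\FA$.

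Next I would extract the witnessing prefix. Fix a play $\pi\in\Lomega_q(\Gg,\fo{},R_{\fo{}})$. Along $\pi$ the first rank component never increases (the very observation driving \REFlem{lem:FGng}), so $\rank$ stabilizes to some value $(i^*,j^*)$; beyond the point of stabilization and beyond the last visit of $\pi$ to $\FA$ and to $\FG$, the play stays in $B:=\{s\in\Zt^\infty\mid \rank(s)\le(i^*,j^*)\}$ and avoids $\FA$. Let $q^*$ be such a tail state and let $\alpha$ be the prefix of $\pi$ ending in $q^*$; clearly $\alpha\in\OpPre{\Lomega_q(\Gg,\fo{})}$. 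To show $\alpha$ cannot be extended $\fo{}$-compliantly to a play in $\Lomega_q(\Gg,\FcA)$, I would analyse the environment cases $(a')$--$(c')$ in the stabilized tail: a tail environment state can satisfy neither $(a')$, $(b')$ nor $(c'2)$ (each would let the rank strictly decrease, contradicting stabilization), and it cannot rely on $(c'3)$ alone (that case forces a step into $\FA$, contradicting that the tail avoids $\FA$); hence it must satisfy $(c'1)$, so all of its successors lie in $\Zt^\infty$ with rank $\le(i^*,j^*)$.

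Finally I would propagate this into a genuine invariant. Since system states in $\Zt^\infty$ keep all successors at rank at most their own by cases $(a)$--$(c)$, and tail environment states keep all successors in $B$ by $(c'1)$, the reachable part of $B$ from $q^*$ is closed under $\fo{}$ against \emph{every} environment. Consequently every $\fo{}$-compliant continuation of $\alpha$ stays in $B$, has non-increasing rank, and therefore---by the rank argument of \REFlem{lem:FGng}---visits $\FG$ only finitely often; by \eqref{equ:LanguageSpecOld_q} it then visits $\FA$ only finitely often as well. Thus no extension of $\alpha$ lies in $\Lomega_q(\Gg,\FcA)$, giving $\alpha\notin\OpPre{\Lomega_q(\Gg,\fo{})\cap\Lomega_q(\Gg,\FcA)}$ and hence $\OpPre{\Lomega_q(\Gg,\fo{})}\neq\OpPre{\Lomega_q(\Gg,\fo{})\cap\Lomega_q(\Gg,\FcA)}$. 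The hard part will be exactly this last propagation: \REFlem{lem:propRf1} guarantees only that the environment \emph{can} remain in $\Zt^\infty$, whereas here I need a rank-bounded sub-region that no cooperative environment can escape upward. Establishing that the stabilized tail actually enters such a closed trap---so that $(c'1)$ applies at every reachable environment state, not merely along $\pi$---is the delicate point, and it is what turns the existential invariance of $\Zt^\infty$ into the universal, strategy-independent trap needed to defeat non-conflictingness.
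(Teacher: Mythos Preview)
You have located the gap precisely, but the proposal does not close it. Establishing that only case (c'1) applies at environment states \emph{on the tail of the single branch $\pi$} is correct; the trouble is that one step off $\pi$ you may reach an environment state $s\in B$ at which (a'), (c'2) or (c'3) genuinely holds, and then some successor of $s$ leaves $\Zt^\infty$ or leaves $B$. So neither ``stays in $B$'' nor ``rank non-increasing'' is available for an arbitrary $\fo{}$-compliant continuation of $\alpha$. A related slip: in case~(b) a system state bounds only the $i$-component of its successors' ranks, not the full rank, so your blanket claim that system successors never exceed the current rank is too strong once an off-branch system state lands in $\FA$. Picking $q^*$ on one branch therefore does not yield the universal trap you need.

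The paper closes the gap by arguing at the tree level rather than along a single branch. In the unwinding $T$ of $R_{\fo{}}$ it locates a node $h^*$ whose \emph{entire} $T$-subtree carries only labels (c) and (c'1): were there no such node, every node of $T$ would have a badly-labelled descendant, and threading through successive bad descendants would build a single branch of $T$ with infinitely many bad labels---contradicting the per-branch analysis you already carried out. The structural payoff is that at (c'1)-nodes the construction of $R_{\fo{}}$ inserted \emph{all} environment successors, so below $h^*$ the tree $T$ coincides with the full $\fo{}$-compliant game tree. Every $\fo{}$-compliant extension of the prefix $h^*$ is thus a branch of this subtree; as (c)- and (c'1)-labelled states avoid $\FG$ (and, under the labelling priority, also $\FA$), no such extension lies in $\Lomega_q(\Gg,\FcA)$. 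Taking $\alpha=h^*$ gives the required witness directly, bypassing the off-branch induction you correctly flagged as delicate.
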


\begin{proof}
First observe that the left part of \eqref{equ:LanguageSpecOld_q}  implies $\OpPre{\Lomega_q(\Gg,\fo{})} \neq\emptyset$. 
The claim is therefore proven by showing that $\OpPre{\Lomega_q(\Gg,\fo{})\cap\Lomega_q(\Gg,\FcA)}=\emptyset$.

  Consider the unwinding of the region $R_{\fo{}}$ to an infinite
  tree $T$.
  Label every node in the tree according to the case $(a)-(c)$ or
  $(a')-(c'3)$ that applies to it according to the construction of
  $R_{\fo{}}$.

  By Lemma~\ref{lem:FGng} there are finitely many occurrences of cases
  $(a)$ and $(a')$.
  Assume by contradiction that cases $(b)$, $(b')$ or $(c'3)$ appear
  infinitely often in $T$.
  From K\"onig's lemma it follows that there is a path $\pi$ in $T$
  along which these cases occur infinitely often.
  However, whenever $(b)$, $(b')$, or $(c'3)$ occur, $\pi$ visits
  $\FcA$.
  It follows that $\pi$ visits infinitely many states in $\FcA$ and
  only finitely many states in $\FcG$ (from Lemma~\ref{lem:FGng}).
  This contradicts the assumption that $\fo{}$ satisfies
  \eqref{equ:LanguageSpecOld_q}. 
  It follows that cases $(b)$, $(b')$ and $(c'3)$ occur finitely often
  in $T$.
  
  Now consider a location in $T$ under which there are no appearances of
  cases $(b)$, $(b')$ or $(c'3)$ and restrict attention to the
  sub-tree $T'$ of $T$ under this location.
  Suppose that case $(c'2)$ occurs infinitely often in $T'$.
  As $(c'2)$ leads to a decrease in the second component of the rank,
  and cases $(c)$ and $(c'1)$ do not allow the rank to increase it
  follows that there are finitely many occurrences of $(c'2)$ in $T'$.
  
  This reasoning implies that along every branch of $T$ (enumerated by $k\in\mathbb{N}$) 
  there exists a \emph{finite} prefix $s_k\in\OpPre{\Lomega_q(\Gg,\fo{})}$ leading to a 
  state $q_k$ at which a sub-tree $T''_k$ is rooted in which only cases $(c)$ and $(c'1)$ occur.
  By construction of $R_{\fo{}}$ all sub-trees $T''_k$ are closed under
  environment moves. This implies that $\OpPre{\Lomega_q(\Gg,\fo{})}=\bigcup_k \OpPre{s_k\cdot\Lomega_{q_k}(\Gg,\fo{},T''_k)}$.
  Further, using the same reasoning as before we know that \eqref{equ:LanguageSpecOld_q} implies that $T''_k$ only contains finitely many states in $\FcA$.
  This implies that $\Lomega_{q_k}(\Gg,\fo{},T''_k)\cap\Lomega_q(\Gg,\FcA)=\emptyset$ for all $k$.  
  As $s_k$ also only contains finitely many states in $\FcA$ (from above), combining the last two observations results in $\OpPre{\Lomega_q(\Gg,\fo{})\cap\Lomega_q(\Gg,\FcA)}=\emptyset$.
%
%
\end{proof}

\subsubsection{Proof of \REFthm{thm:Completeness_single}}
It is easy to see that \REFthm{thm:Completeness_single} directly follows from \REFlem{lem:FAconditioned}. If we pick some system strategy $\fo{}$ over $H$ we have that either \eqref{equ:LanguageSpecOld_q} does not hold, or, if  \eqref{equ:LanguageSpecOld_q} holds we know from \REFlem{lem:FAconditioned} that \eqref{eq:SCT_nonblocking_q} does not hold.

\section{Proofs for GR(1) Winning Conditions}\label{sec:proof:vectorized}

\subsection{Soundness}\label{sec:proof:soundness:vec}
We start by recalling that the last iteration of the fixed-point in
\eqref{equ:4FP_vector} results in the sets $\Za{a}^\infty\subseteq Q$
and define 
$\semantics{\varphi_4^v}=Z^\infty=\bigcup_{a\in[1;n]}\Za{}^\infty$.
Now let $\Ya{}^i$ be the set obtained after the $i$-th iteration of
$\Ya{}$ in line $a$ of \eqref{equ:4FP_vector}, let $\Xa{ab}^i$
denote the fixed-point of the iteration over $\Xa{ab}$ resulting in
$\Ya{ab}^i$ and denote by $\Wa{ab}^i_j$ the set obtained in the $j$th
iteration  
over $\Wa{ab}$ performed while computing $\Xa{ab}^i$ in line $a$ of \eqref{equ:4FP_vector}.
With this notation, we see that the computation of $\Wa{ab}^i_j$ as
part of $\Ya{a}^i$ and based on $\Ya{a}^{i-1}$, $\Xa{ab}^i$, and 
$\Wa{ab}^i_{j-1}$ results in the set
\begin{equation}\label{equ:proof:lastinteration:a}
 \Wa{ab}^{i}_j=
(\FGa{}\cap \Preo(\Za{a^+}^\infty))
\cup 
\Preo(\Ya{}^{i-1})
\cup 
\underbrace{(Q\setminus \FAa{b}\cap \OpPreTD{\Wa{ab}^i_{j-1}, \Xa{ab}^i \setminus \FAa{b}})}_{\ps{ab}\Theta^i_j}.
\end{equation}
Using \eqref{equ:proof:lastinteration:a}, we first show that \eqref{equ:arank} holds.

\begin{lemma}
Given the premises of \REFthm{thm:SoundnessCompleteness_vec} it holds that 
\begin{subequations}\label{equ:proof:arankFG}
\begin{align}
&q\in (\FGa{}\cap \Preo(\Za{a^+}^\infty))=:\ps{a}D\label{equ:proof:arankFG:a}\\
\Leftrightarrow~&\AllQ{b\in[1;m]}{\ps{ab}\rank(q)=(1,1)}\label{equ:proof:arankFG:b}
\end{align}
\end{subequations}
\begin{subequations}\label{equ:proof:arankFA}
\begin{align}
 &q\in\Preo(\Ya{}^{i-1})\setminus\Ya{}^{i-1}=:\ps{a}E^i\label{equ:proof:arankFA:a}\\
 \Leftrightarrow~&\AllQ{b\in[1;m]}{\propConj{\ps{ab}\rank(q)=(i,1)}{i>1}}\label{equ:proof:arankFA:b}
\end{align}
\end{subequations}
\begin{subequations}\label{equ:proof:arankNM}
\begin{align}
 &q\in \ps{ab}\Theta^i_j\setminus (\Wa{ab}^i_{j-1}\cup \Ya{}^{i-1}\cup \ps{a}E^i\cup D)=:\ps{ab}R^i_j\neq\emptyset\label{equ:proof:arankNM:a}\\
 \Leftrightarrow ~ &\propConj{\ps{ab}\rank(q)=(i,j)}{j>1}\label{equ:proof:arankNM:b}\\
\Rightarrow~&q\notin\FAa{b}.\label{equ:proof:arankNM:c}
\end{align}
\end{subequations}
\end{lemma}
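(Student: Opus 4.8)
The plan is to transfer the singleton-case argument of \REFlem{lem:rank_prop} essentially verbatim, reading the rank of each state directly off the recursion \eqref{equ:proof:lastinteration:a} for $\Wa{ab}^i_j$ and treating the three claims in turn. The one genuinely new structural fact to exploit is that the first two summands $\FGa{}\cap\Preo(\Za{a^+}^\infty)$ and $\Preo(\Ya{}^{i-1})$ of \eqref{equ:proof:lastinteration:a} are independent of $b$, whereas only the last summand $\ps{ab}\Theta^i_j$ depends on $b$. Since, by \eqref{equ:ranking_a}, the $i$-component of $\ps{ab}\rank$ is governed by the $b$-independent sets $\Ya{}^i$ while its $j$-component is governed by $\Wa{ab}^i_j$, this dichotomy is exactly what makes the ranks $(1,1)$ and $(i,1)$ uniform in $b$ (justifying the universal quantifier over $b$ in \eqref{equ:proof:arankFG:b} and \eqref{equ:proof:arankFA:b}) while leaving the $(i,j)$ with $j>1$ genuinely $b$-specific.

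For \eqref{equ:proof:arankFG} I would evaluate \eqref{equ:proof:lastinteration:a} at $i=j=1$. Using $\Ya{}^0=\Wa{ab}^1_0=\emptyset$ together with $\PreE(\emptyset)=\emptyset$, both the middle summand $\Preo(\Ya{}^0)$ and the last summand $\ps{ab}\Theta^1_1$ vanish, so $\Wa{ab}^1_1=\FGa{}\cap\Preo(\Za{a^+}^\infty)=\ps{a}D$ for every $b$; hence $\ps{ab}\rank(q)=(1,1)$ for every $b$ iff $q\in\ps{a}D$, and $q\in\FGa{}$ is immediate from $\ps{a}D\subseteq\FGa{}$. For \eqref{equ:proof:arankFA} I would evaluate \eqref{equ:proof:lastinteration:a} at $j=1$ with $i>1$: again $\Wa{ab}^i_0=\emptyset$ forces $\ps{ab}\Theta^i_1=\emptyset$, so $\Wa{ab}^i_1=\ps{a}D\cup\Preo(\Ya{}^{i-1})$, once more independent of $b$. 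As $\ps{a}D\subseteq\Ya{}^1\subseteq\Ya{}^{i-1}$, the states first acquiring rank $(i,1)$ are exactly $\Preo(\Ya{}^{i-1})\setminus\Ya{}^{i-1}=\ps{a}E^i$, and \eqref{equ:proof:arankFA:a}$\Leftrightarrow$\eqref{equ:proof:arankFA:b} follows as in the singleton case, with $\ps{a}E^i\neq\emptyset$ possible only for $i>1$ because $\Ya{}^0=\emptyset$. I would emphasize here that, unlike \REFlem{lem:rank_prop}, the present statement omits any characterization of $\ps{a}E^i$ as a set of assumption states, so the delicate fixed-point-closure step of the singleton proof (rewriting $\OpPreTD{W^{i-1}_l,W^{i-1}_l\setminus\FA}$ as $\Preo(W^{i-1}_l)$) is not required here.

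For \eqref{equ:proof:arankNM} I would argue, exactly as in the singleton case, that a state first receives rank $(i,j)$ with $j>1$ precisely when it is contributed by the $b$-dependent term $\ps{ab}\Theta^i_j$ and by nothing of lower rank, i.e.\ lies in $\ps{ab}\Theta^i_j\setminus(\Wa{ab}^i_{j-1}\cup\Ya{}^{i-1}\cup\ps{a}E^i\cup\ps{a}D)=\ps{ab}R^i_j$; this gives \eqref{equ:proof:arankNM:a}$\Leftrightarrow$\eqref{equ:proof:arankNM:b}. The implication \eqref{equ:proof:arankNM:c} is then immediate: membership in $\ps{ab}\Theta^i_j$ requires passing through the factor $Q\setminus\FAa{b}$, so $q\notin\FAa{b}$. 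The main obstacle I anticipate is not conceptual but bookkeeping: one must keep the disjunction $\Ya{}^i=\bigcup_{b}\Xa{ab}^i$ straight and in particular verify that each $\Xa{ab}^i$ is computed from the \emph{same} $\Ya{}^{i-1}$, so that the $b$-dependent summand truly cannot contribute at $j=1$; this is what secures the clean split between the $b$-uniform ranks $(1,1),(i,1)$ and the $b$-specific ranks $(i,j)$ with $j>1$. Everything else transfers verbatim once \eqref{equ:proof:lastinteration:a} is fixed.
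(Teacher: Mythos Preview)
Your proposal is correct and follows essentially the same route as the paper's own proof: both arguments evaluate \eqref{equ:proof:lastinteration:a} at the boundary indices ($i=j=1$ for $\ps{a}D$, $j=1$ with $i>1$ for $\ps{a}E^i$), use $\Ya{}^0=\Wa{ab}^i_0=\emptyset$ to kill the $b$-dependent summand $\ps{ab}\Theta$ there, and then read off \eqref{equ:proof:arankNM} from the residual term. Your explicit emphasis on the $b$-independence of the first two summands and the observation that the delicate closure step from \REFlem{lem:rank_prop} is unnecessary here (because no $\FAa{b}$-membership claim is made for $\ps{a}E^i$) are both accurate and, if anything, make the argument slightly cleaner than the paper's version.
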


\begin{proof}
We show each claim separately.

\textbf{Show \eqref{equ:proof:arankFG}:}
Using \eqref{equ:proof:lastinteration:a} it can be easily observed
that for $i=j=1$ we have $\Wa{ab}^1_1=\FGa{}\cap \Preo(\Za{a^+}^\infty)=\ps{a}D$.  
As $\Ya{}^0=\emptyset$ and $\Wa{ab}^1_0=\emptyset$ this implies that every state
$q\in\ps{a}D$ has $\ps{ab}\rank(q)=(1,1)$ for every $b$ (from \eqref{equ:ranking_a}).
By the definition of the $\ps{ab}$rank in \eqref{equ:ranking_a}, this in turn
means that $\ps{ab}\rank(q')>(1,1)$ implies $q'\notin \ps{a}D$. Hence, \eqref{equ:proof:arankFG:a} $\Leftrightarrow$ \eqref{equ:proof:arankFG:b} holds.

\textbf{Show \eqref{equ:proof:arankFA}:}
First observe that for every $q$ s.t.\ $\ps{ab}\rank(q)=(i,1)$ and $i>1$ we know that $\Wa{ab}^i_{j-1}=\Wa{ab}^i_{0}=\emptyset$ and with this $\ps{ab}\Theta^i_j=\emptyset$.
As $q\in \Wa{ab}^{i}_1$ and $\Wa{ab}^i_0=\emptyset$ we conclude $q\in\Preo(\Ya{}^{i-1})$. Now observe from the definition of the ranking that we have $q\notin \Ya{}^{i-1}$. This immediately proves
\eqref{equ:proof:arankFA:b}$\Rightarrow$\eqref{equ:proof:arankFA:a}. 
To see, that the other direction also holds, fix
$q\in\Preo(\Ya{}^{i-1})\setminus\Ya{}^{i-1}$.
If $q\in\Preo(\Ya{}^{i-1})\neq\emptyset$ then $q\in \Wa{ab}^i_j$ for all $b$ (from \eqref{equ:proof:lastinteration:a}) and hence $q\in\Ya{}^{i}$ by construction. Now observe that \eqref{equ:ranking_a} determines the $j$-rank based on $\Wa{ab}^i_j\setminus \Wa{ab}^i_{j-1}$. As we know that $\Wa{ab}^i_1$ contains $\Preo(Y^{i-1})$ (from \eqref{equ:proof:lastinteration:a}), we conclude $j=1$.

\textbf{Show \eqref{equ:proof:arankNM}:}
First observe that for every $q$ s.t.\ $\ps{ab}\rank(q)=(i,j)$ and $j>1$ we know that $q\in\Wa{ab}^i_{j}\setminus\Wa{ab}^i_{j-1}$ where $\Wa{ab}^i_{j-1}\neq\emptyset$ and $q\in\Ya{a}^i\setminus\Ya{a}^{i-1}$. 
As \eqref{equ:proof:arankFG} and \eqref{equ:proof:arankFA} hold, we furthermore know that $q\notin\ps{a}D$ and $q\notin\ps{a}E^i$. With this it follows from \eqref{equ:proof:lastinteration:a} that 
$q\in\ps{ab}\Theta^i_j\setminus (\Wa{ab}^i_{j-1}\cup \Ya{}^{i-1}\cup \ps{a}E^i\cup D)$. This immediately proves \eqref{equ:proof:arankNM:b}$\Rightarrow$\eqref{equ:proof:arankNM:a}.
For the other direction, we see that $q\in\ps{ab}\Theta^i_j$ implies $q\in\Wa{ab}^i_{j}$ from \eqref{equ:proof:lastinteration:a}. As $q\notin \Wa{ab}^i_{j-1}$ and $q\notin\Ya{a}^{i-1}$, we know that $\ps{ab}\rank(q)=(i,j)$. As $q\notin\ps{a}D$ and $q\notin\ps{a}E^i$, it immediately follows from \eqref{equ:proof:arankFG} and \eqref{equ:proof:arankFA} that $j>1$. 

To see that
\eqref{equ:proof:arankNM:a}$\Rightarrow$\eqref{equ:proof:arankNM:c},
observe that \eqref{equ:proof:arankNM:a} and
\eqref{equ:proof:lastinteration:a} imply that $q$ is contained in the
last term of \eqref{equ:proof:lastinteration:a}, from which it is easy
to see that $q\notin\FAa{b}$.%
%
\end{proof}

Even though the proven statements are a bit weaker compared to \REFlem{lem:rank_prop} they are still sufficient to derive the same cases for states within $Z^\infty$ as in case of singleton winning conditions.
\new{In particular, observe that \eqref{equ:proof:arankNM:c} implies that any state in $\FAa{b}\cap \Za{a}^\infty$  needs to have a $\ps{ab}\rank(q)$ with $j=1$.}
Therefore, the remaining proof for soundness follows the same lines as the one discussed in \REFsec{sec:proof:single:soundness} by annotating the used sets with $a$ and $b$ modes. The resulting lemmas and proofs are given in the remainder of this section for the sake of completeness.

We start by observing the different cases for states in $\Za{}^\infty$.
For every system state $q\in\Qo\cap \Za{}^\infty$ one of the following three cases hold:
\begin{compactenum}[(a)]
\item $\AllQ{b\in[1;m]}{\ps{ab}\rank(q)=(1,1)}$, i.e., $q\in\FGa{}$ and there \emph{exists} $q'\in\Tro(q)\cap \Za{a^+}^\infty$ with defined, arbitrary $a^+b'$-rank for some $b'\in[1;m]$, or  
\item $\AllQ{b\in[1;m]}{\ps{ab}\rank(q)=(i,1),i>1}$, i.e.,
  there
  \emph{exists} $q'\in\Tro(q)\cap \Za{a}^\infty$ and some
  $b'\in[1;m]$
  s.t.\  $\ps{ab'}\rank(q')\leq(i-1,\cdot)<\ps{ab'}\rank(q)$, or 
\item $\ExQ{b\in[1;m]}{\ps{ab}\rank(q)=(i,j),j>1}$,i.e.,
  $q\notin \FAa{b}$  and 
  there \emph{exists} $q'\in\Tro(q)\cap \Za{a}^\infty$
  s.t.\ $\ps{ab}\rank(q')\leq(i,j-1)<\ps{ab}\rank(q)$.
\end{compactenum}
Similarly, for every environment state $q\in\Qz\cap \Za{}^\infty$ holds
\begin{compactenum}[(a')]
 \item $\AllQ{b\in[1;m]}{\ps{ab}\rank(q)=(1,1)}$, i.e., $q\in\FGa{}$, further
   $\Trz(q)\subseteq \Za{a^+}^\infty$, and for all $q'\in\Trz(q)$
   exists $b'\in[1;m]$ s.t.\ $q$ has a defined, arbitrary
   $\ps{ab'}$rank, or  
 \item $\AllQ{b\in[1;m]}{\ps{ab}\rank(q)=(i,1),i>1}$, i.e.,
   $\Trz(q)\subseteq \Za{a}^\infty$ and 
   \emph{for all} $q'\in\Trz(q)$ exists some $b'\in[1;m]$
   s.t.\ $\ps{ab'}\rank(q')\leq(i-1,\cdot)<\ps{ab}\rank(q)$, or  
 \item $\ExQ{b\in[1;m]}{\ps{ab}\rank(q)=(i,j),j>1}$, i.e.,
   $q\notin \FAa{b}$, further $\Trz(q)\subseteq
   \Za{a}^\infty$, there 
   \emph{exists} $q'\in\Trz(q)$ with $\ps{ab}\rank(q')\leq(i,j-1)<\ps{ab}\rank(q)$ and \emph{for all} $q'\in\Trz(q)$ holds
\new{
 \begin{compactenum}
 \item[(c'1)] $\ps{ab}\rank(q')=(i,j')<\ps{ab}\rank(q)$,
 \item[(c'2)] $\ps{ab}\rank(q')=(i,\cdot)$ and $q'\notin\FAa{b}$, or
 \item[(c'3)] there exists $b'\in[1;m]$ s.t. $\ps{ab'}\rank(q')=(i',\cdot)$ with $i'< i$ and $q'\notin\FAa{b}$.
 \end{compactenum}
 }
\end{compactenum}
It should be noted that the system strategy $\fo{}$ constructed in
\eqref{equ:rank_new_a} ensures that the transitions that are
existentially quantified in (a)-(c) are actually taken. Hence, case
(a) resets the $a^+$-rank (ignoring $b$), case (b) decreases the first component of
the $a$-rank (ignoring $b$) and case (c) decreases the second component of the
$ab$-rank. 

Based on this insight, we first show that every play over $\Gg$ started in a state $q\in
Z^\infty$ that complies with the system strategy $\fo{}$ and the
environment transition rules stays in $Z^\infty$.
\begin{lemma}\label{lem:stayinZ_vec}
Given the premises of \REFthm{thm:SoundnessCompleteness_vec}, it holds for all $q\in Z^\infty$ that $\Trz(q)\in Z^\infty$ if $q\in\Qz$ and $(q',a',b')=\fo{}(q,a,b)$ implies $q'\in Z^\infty$, otherwise.
\end{lemma}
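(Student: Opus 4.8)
The plan is to reduce this to a direct case analysis on the mode-based rank, exactly mirroring the proof of \REFlem{lem:stayinZ} in the singleton setting, but now invoking the vectorized case distinction (a)--(c) and (a$'$)--(c$'$) recorded immediately above the lemma together with the strategy definition \eqref{equ:rank_new_a}. Throughout I read the statement ``$\Trz(q)\in Z^\infty$'' as $\Trz(q)\subseteq Z^\infty$, and I use the identity $Z^\infty=\bigcup_{a\in[1;n]}\Za{}^\infty$ established at the start of the soundness section.

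First I would dispatch the environment states. Given $q\in\Qz\cap Z^\infty$, it lies in $\Za{}^\infty$ for some $a\in[1;n]$, so exactly one of the cases (a$'$)--(c$'$) applies to $q$. Each of these cases explicitly asserts a containment of the \emph{full} successor set: case (a$'$) gives $\Trz(q)\subseteq\Za{a^+}^\infty$, while cases (b$'$) and (c$'$) give $\Trz(q)\subseteq\Za{a}^\infty$. Since both $\Za{a^+}^\infty$ and $\Za{a}^\infty$ are among the sets whose union is $Z^\infty$, in every case $\Trz(q)\subseteq Z^\infty$, which settles the environment half for all successors simultaneously.

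Next I would treat the system states. For $q\in\Qo\cap Z^\infty$ with chosen move $(q',a',b')=\fo{}(q,a,b)$, I would align the three branches of \eqref{equ:rank_new_a} with the cases (a)--(c). When $\ps{ab}\rank(q)=(1,1)$ the strategy chooses $q'\in\ps{a^+}Z^\infty\subseteq Z^\infty$ outright. In the remaining two branches the strategy chooses a successor whose rank $\ps{a'b'}\rank(q')$ is defined and bounded by $(i-1,\cdot)$, respectively $(i,j-1)$; a defined rank witnesses $q'\in\Za{a'}^\infty\subseteq Z^\infty$, giving the claim. The existence of a legal such successor in each branch is exactly what cases (a)--(c) guarantee, so $\fo{}$ is well defined and its image stays in $Z^\infty$.

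The hard part is not any estimate but the mode bookkeeping: I must check that the $a$-shift to $a^+$ in the $(1,1)$-branch and the preservation of $a$ (and of $b$ in the innermost branch) prescribed by \eqref{equ:rank_new_a} line up with the target sets named in the matching cases, and that ``$\ps{a'b'}\rank(q')$ is defined'' is genuinely equivalent to $q'\in\Za{a'}^\infty$. The latter is immediate from \eqref{equ:ranking_a}, under which $\ps{ab}\rank$ is defined precisely on $\bigcup_i\Ya{}^i=\Za{}^\infty$. Once this identification is in place the proof collapses to a single-line invocation per case, with no new reasoning beyond what the case analysis already supplies.
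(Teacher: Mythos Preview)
Your proposal is correct and follows essentially the same approach as the paper: a case split on (a)--(c) and (a$'$)--(c$'$) together with the branches of \eqref{equ:rank_new_a}, using that each case places the successor in some $\Za{}^\infty\subseteq Z^\infty$. Your write-up is in fact more explicit than the paper's about the mode bookkeeping and about why a defined $\ps{ab}\rank$ witnesses membership in $\Za{}^\infty$.
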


\begin{proof}
Suppose $q\in \Qz\cap Z^\infty$. Then $\ps{ab}\rank(q)$ is defined and one of the cases (a')-(c') holds. As for all cases holds $\Trz(q)\subseteq \Za{}^\infty\subseteq Z^\infty$, the claim follows.
  Suppose $q\in \Qo\cap Z^\infty$. Then $\ps{ab}\rank(q)$ is defined and one of the cases (a)-(c) holds. 
  If (a) holds, $(q',a^+,b')=\fo{}(q,a,b)$ implies $q'\in \Za{a^+}^\infty\subseteq Z^\infty$ from the first line of \eqref{equ:rank_new_a}. 
  If (b)-(c) holds $(q',a,b')=\fo{}(q,a,b)$ implies $q'\in \Za{a}^\infty\subseteq Z^\infty$ from the second and third line of \eqref{equ:rank_new_a}.
\end{proof}

Next we show that every play $\pi$ on $\Gg$ consistent with $\fo{}$ and starting in $q\in Z^\infty$
satisfies the GR(1) winning condition.

\begin{lemma}\label{lem:GFa imp GFg_vec}
Given the premises of \REFthm{thm:SoundnessCompleteness_vec}, it holds for all $q\in Z^\infty$ that $\Lomega_q(\Gg,\fo{}) \subseteq
\overline{\Lomega_q(\Gg,\FcA)} \cup \Lomega_q(\Gg,\FcG)$.
\end{lemma}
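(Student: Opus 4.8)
The plan is to adapt the compact argument of \REFlem{lem:GFa imp GFg} to the vectorized setting, the only genuinely new ingredient being careful tracking of the two internal modes maintained by the strategy. I would fix a play $\pi\in\Lomega_q(\Gg,\fo{})$ together with witnessing mode traces $\alpha\in[1;n]^\omega$ and $\beta\in[1;m]^\omega$; by \REFlem{lem:stayinZ_vec} each $\pi(k)$ lies in $\Za{\alpha(k)}^\infty$, so $\ps{\alpha(k)\beta(k)}\rank(\pi(k))$ is defined and exactly one of the cases (a)--(c') stated before \REFlem{lem:stayinZ_vec} applies at every step. Assuming $\pi\in\Lomega_q(\Gg,\FcA)$, I would reduce the goal to showing that case (a) or (a') occurs infinitely often: every such occurrence sits at a rank-$(1,1)$ state, i.e.\ in $\ps{a}D\subseteq\FGa{}$ by \eqref{equ:arankFG}, and advances the guarantee mode to its cyclic successor $a^+$; hence infinitely many occurrences make $\alpha$ cycle through all of $[1;n]$ and visit every $\FGa{a}$ infinitely often, giving $\pi\in\Lomega_q(\Gg,\FcG)$.

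The heart of the proof is a contradiction argument assuming (a)/(a') occurs only finitely often. Then past some time $K$ the compliance rules (i)--(iii) freeze $\alpha$ at a value $a_0$. I would first note that the first rank component is independent of the $b$-mode: by \eqref{equ:ranking_a} it is the level $i$ with $q\in\Ya{a_0}^i\setminus\Ya{a_0}^{i-1}$, so it gives a well-defined quantity $\iota(\pi(k))$. Scanning the cases, (b), (b') and (c'3) strictly decrease $\iota$ while (c), (c'1), (c'2) preserve it, and with (a)/(a') excluded nothing increases it; thus $\iota(\pi(k))$ is eventually constant, say equal to $i_0$ for $k\ge K'$. After $K'$, cases (b), (b'), (c'3) can no longer occur, and no state can have rank $(i_0,1)$ either, since that would trigger (a)/(a') (if $i_0=1$) or (b)/(b') (if $i_0>1$), contradicting the choice of $K$ or the stabilization of $\iota$.

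I would then conclude that every $\pi(k)$ with $k\ge K'$ has rank $(i_0,j)$ with $j>1$; by the third line of \eqref{equ:rank_new_a} and compliance condition (iii) such steps preserve the $b$-mode, so $\beta(k)$ is constant, equal to some $b_0$, from $K'$ on. The final step is to invoke \eqref{equ:arankFA_without}, which yields $q\notin\FAa{b_0}$ whenever $\ps{a_0 b_0}\rank(q)=(i_0,j)$ with $j>1$; hence $\pi$ visits $\FAa{b_0}$ only finitely often, contradicting $\pi\in\Lomega_q(\Gg,\FcA)$. This establishes that (a)/(a') occurs infinitely often and completes the argument.

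The hard part, I expect, will be the bookkeeping for the $b$-mode. Unlike the singleton case, \eqref{equ:arank} does \emph{not} identify $\ps{a}E^i$ with $\FAa{b}\setminus\FGa{a}$, so I cannot read off assumption visits directly from case (b); instead the link to the assumptions must come solely from the one-directional implication \eqref{equ:arankFA_without}. The delicate point is to get the two stabilizations in the right order: I must argue that once $\alpha$ and the first rank component have both settled, only $j>1$ states remain and these preserve $b$, so that $\beta$ is \emph{also} eventually frozen at a single $b_0$---which is exactly what licenses applying \eqref{equ:arankFA_without} for one fixed assumption index.
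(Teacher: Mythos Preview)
Your proposal is correct and follows the same underlying idea as the paper, but the two proofs are organized rather differently. The paper argues \emph{directly}: it picks a state in case (c)/(c') with current mode $(a,b)$, observes that any future visit to $\FAa{b}$ forces the $\ps{ab}$-rank down to second component $1$, which lands the play in case (a)/(a') or (b)/(b'); the latter decreases the first rank component, so iterating over the (possibly changing) $b$-mode drives the first component down until (a)/(a') must occur. Your argument is the contrapositive, packaged as two explicit stabilization steps (first $\alpha$ and $\iota$, then $\beta$), which makes the $b$-mode bookkeeping considerably more transparent than the paper's somewhat informal ``re-applying the same reasoning'' phrase. One small imprecision: in case (c) the strategy only guarantees $\ps{ab}\rank(q')\leq(i,j-1)$, so $\iota$ is merely \emph{non-increasing} there, not preserved; this does not affect your argument since all you need is that $\iota$ cannot increase once (a)/(a') are excluded, hence it still stabilizes. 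After that, your observation that a lingering rank-$(i_0,1)$ state (for the current $b$) would trigger the second strategy case and drop $\iota$ is exactly what pins down $j>1$ for the current mode and hence freezes $\beta$ via condition~(iii); this is the delicate point you anticipated, and you handle it correctly.
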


\begin{proof}
  Let $\pi\in\Lomega_q(\Gg,\fo{})$, i.e., $\pi(0)=q\in Z^\infty$. Then it follows from \REFlem{lem:stayinZ_vec} that $\pi(k)\in Z^\infty$ for all $k\in\mathbb{N}$, i.e., one of the cases (a)-(c') holds for every $k$. 
  
  Now assume that $\pi\in\Lomega_q(\Gg,\FcA)$, i.e., $\pi$ visits every $\FAa{b}$ with $b\in[1,m]$ infinitely often. It remains to show that in this case $\pi$ needs to also pass $\FGa{}$ with $a\in[1,n]$ infinitely often. 
  
  Consider some state $q=\pi(k)$ s.t.\ (c) or (c') holds, i.e., there
  exists $a$, $b$ s.t.\  $\ps{ab}\rank(q)=(i,j)$ with $j>1$ and
  $q\notin \FAa{b}$. In order to visit $\FAa{b}$ again, the second
  component of the $\ps{ab}\rank$ has to decrease to $j=1$, entering
  case (a') or (b') \new{(as $q\notin \FAa{b}$ whenever case (c') holds for $q$)}. If we enter case (a'), $\FGa{a}$ is visited and
  the rank gets reset. Then we can re-apply the same reasoning for
  $a^+$ and some $b'$. On the other hand, if we enter case (b'), the
  first component of the $a$-rank gets reduced
  and $b$ possibly changes to some
  $b''$. Re-applying the same reasoning as before shows that case (b')
  always eventually needs to occur in $\pi$, always reducing the first
  component of the rank for every $b$. The only option that allows the
  first component of the rank to increase is by going through case (a)
  or (a'). As $\pi$ is infinite, while the ranking is finite, this
  implies that we eventually need to go through case (a) or (a') for
  $a$, passing $\FGa{a}$. With this, we reach a state $\pi(k')$
  s.t. $\ps{a^+b}\rank(\pi(k'))$ is defined for some $b$. Then we can
  apply the same reasoning to show that we will eventually pass
  $\FGa{a^+}$.
  
  Hence, $\pi$ visits $\FGa{}$ for every $a\in[1,n]$ infinitely often.
\end{proof}

Next we show that there always exists a play $\pi$ on $\Gg$ that 
complies with $\fo{}$, starts in a state $q\in Z^\infty$ and visits every $\FGa{}$ infinitely often. 

\begin{lemma}\label{lem:existE_vec}
Given the premises of \REFthm{thm:SoundnessCompleteness_vec}, it holds for all
$q\in Z^\infty$ that $\Lomega_q(\Gg,\fo{}) \cap \Lomega_q(\Gg,\FcG)\neq \emptyset$.
\end{lemma}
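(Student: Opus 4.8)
The plan is to mirror the proof of \REFlem{lem:existE} for the singleton case, now carrying the mode annotations $a\in[1;n]$ and $b\in[1;m]$ along the constructed play. I would build an infinite play $\pi$ together with mode traces $\alpha$ and $\beta$ by induction on the step index, maintaining the invariant that $\pi(k)\in Z^\infty$ and that $\ps{\alpha(k)\beta(k)}\rank(\pi(k))$ is defined. That the play never leaves $Z^\infty$ is exactly \REFlem{lem:stayinZ_vec}, so the construction only has to exhibit, at each step, a successor that is compatible with $\fo{}$ (at system states $\Qo$) or with some cooperative environment choice (at environment states $\Qz$) and that behaves correctly with respect to the rank.

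For the inductive step, writing $a=\alpha(k)$, $b=\beta(k)$ and $\ps{ab}\rank(\pi(k))=(i,j)$, I would distinguish the three cases classified just before the lemma. (1)~If $(i,j)=(1,1)$, then $\pi(k)\in\ps{a}D\subseteq\FGa{}\cap\Preo(\Za{a^+}^\infty)$, and by cases (a)/(a') there is a successor $q'\in\Za{a^+}^\infty$; here I advance the guarantee mode, setting $\alpha(k+1)=a^+$, and pick an admissible $b'$. (2)~If $i>1$ and $j=1$, then $\pi(k)\in\ps{a}E^i\subseteq\Preo(\Ya{}^{i-1})$, and by cases (b)/(b') there is a successor whose first rank component is strictly smaller, with $\alpha(k+1)=a$. (3)~If $j>1$, then $\pi(k)\in\ps{ab}R^i_j$ lies in $\PreE(\Wa{ab}^i_{j-1})$ by the definition of $\OpPreTD{}$, so by cases (c)/(c') there is a successor with strictly smaller second component, keeping $\alpha(k+1)=a$ and $\beta(k+1)=b$. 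In each case the chosen successor is admissible: for $\pi(k)\in\Qo$ it is the move prescribed by \eqref{equ:rank_new_a}, and for $\pi(k)\in\Qz$ the quoted cases guarantee existence of an appropriate cooperative environment move.

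Finally I would show that $\pi\in\Lomega_q(\Gg,\FcG)$. Cases~(2) and~(3) strictly decrease the lexicographic rank while the guarantee mode $a$ is held fixed; since the first component $i$ is the $\Ya{}$-iteration index and hence independent of $b$, a strict decrease in case~(2) is genuine even when $\beta$ changes, and case~(3) decreases the rank with both $a,b$ fixed. As the ranks are well-founded and the first component can only be reset through case~(1), case~(1) must occur after finitely many steps, at which point $\FGa{a}$ is visited and $\alpha$ advances to $a^+$. Because the modes then cycle through the successor map $a\mapsto(a\bmod n)+1$, every guarantee set $\FGa{a}$, $a\in[1;n]$, is visited infinitely often, so $\pi$ satisfies the generalized B\"uchi condition $\FcG$. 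Together with $\pi\in\Lomega_q(\Gg,\fo{})$ this yields $\pi\in\Lomega_q(\Gg,\fo{})\cap\Lomega_q(\Gg,\FcG)$ and hence non-emptiness.

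I expect the main obstacle to be the bookkeeping of the $b$-mode across case~(1): when $\FGa{}$ is reached and the guarantee mode jumps from $a$ to $a^+$, the companion assumption mode $b'$ need not equal the previous $b$, and one must verify that $\ps{a^+ b'}\rank(\cdot)$ is defined at the chosen successor so that the induction can continue without getting stuck. Here I would invoke \eqref{equ:proof:arankFG}, which guarantees that every state of $\ps{a}D$ has rank $(1,1)$ for all $b$, together with the fact that any successor selected inside $\Za{a^+}^\infty$ carries a well-defined $\ps{a^+ b'}\rank$ for some $b'\in[1;m]$, since $\Za{a^+}^\infty$ is covered by the sets $\Wa{a^+ b'}$ computed in line $a^+$ of \eqref{equ:4FP_vector}.
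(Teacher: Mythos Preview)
Your proposal is correct and follows essentially the same route as the paper's proof: both construct the play $\pi$ inductively, split into the three rank cases (1,1), $(i,1)$ with $i>1$, and $(i,j)$ with $j>1$, use $\PreE(\Wa{ab}^i_{j-1})$ from $\OpPreTD{}$ to pick the cooperative environment move in the third case, and then argue that well-foundedness of the rank forces case~(1) infinitely often so that the $a$-mode cycles through all guarantees. Your explicit tracking of the mode traces $\alpha,\beta$ and your remark that the $i$-component is the $\Ya{}$-iteration index (hence stable under changes of $b$) make the bookkeeping slightly more transparent than the paper's version, but the underlying argument is identical.
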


\begin{proof}
  We will construct an infinite computation $\pi$ in
  $\Lomega_q(\fo{})\cap \Lomega_q(\Gg,\FcG)$.
  We construct $\pi$ by induction such that for every $k$ we have
  $\pi(k)\in Z^\infty$.
  As $\pi$ will be consistent with $\fo{}$ this follows from
  \REFlem{lem:stayinZ_vec}.
  Let $\pi(0)=q\in Z^\infty$.

  Let $\pi(k)=q'$, hence, by induction $\pi(k)\in Z^\infty$. Let $\ps{a}\rank(q')=(i,j)$, that is $q'\in \Wa{}^i_j$.
  Then one of the following cases holds:
  \begin{enumerate}
  \item
    $\ps{ab}\rank(q')=(1,1)$ for all $b$ - then $q'\in \FGa{}\cap \Preo(\Za{a^+}^\infty)$.
    We extend $\pi$ by choosing a successor $q''$ of $q'$ compatible
    with $\fo{}$ such that $q''\in \Za{a^+}^\infty$.
  \item
    $\ps{ab}\rank(q')=(i,1)$ with $i>1$  for all $b$ - then $q'\in\Preo(\Ya{}^{i-1})$.
    We extend $\pi$ by choosing a successor $q''$ of $q'$ compatible
    with $\fo{}$ such that $q''\in \Ya{}^{i-1}$. That is, the first
    component in the rank of $q''$ is smaller than $i$.
  \item
    There exists some $b$ s.t. $\ps{ab}\rank(q')=(i,j)$ with $j>1$ - then $q'\in (Q\setminus \FAa{b})\cap
    \OpPreTD{\Wa{ab}^{i}_{j-1},\Xa{ab}^{i}\setminus \FAa{b}}$.
    By definition of $\OpPreTD{}$ we have $q'\in \PreE(\Wa{ab}^i_{j-1})$.
    We extend $\pi$ by choosing a successor $q''$ of $q'$ compatible
    with $\fo{}$ such that $q''\in \Wa{ab}^i_{j-1}$.
    That is, $\ps{ab}\rank(q'')\leq(i,j-1)<\ps{ab}\rank(q')$.

    We note that if $q'\in \Qo$ then the only option compatible with
    $\fo{}$ is $q''$.
    However, if $q'\in \Qz$ then $q''$ is compatible with $\fo{}$ but
    $q''$ is not enforceable by player~$1$.
  \end{enumerate}

  We show that $\pi\in\Lomega_q(\Gg,\FcG)$.
  In option $1$ above, $\FGa{}$ is visited, the mode is changed to $a^+b'$ and both components of the rank are possibly increased.
  In options $2$ and $3$ above, the rank of $\pi$ decreases.
  As $\pi$ is infinite, it follows that infinitely many times option
  $1$ needs to be taken, implying that every mode $a\in[1;n]$ and every $\FGa{}$ is visited infinitely often, hence $\pi\in\Lomega_q(\Gg,\FcG)$. 
\end{proof}

As an immediate consequence of \REFlem{lem:stayinZ_vec} and
\REFlem{lem:existE_vec} we can now show that $\OpPre{\Lomega_q(\Gg,\fo{})}$ is contained
in $\OpPre{\Lomega_q(\Gg,\fo{})\cap \Lomega_q(\Gg,\FcA)}$.
Interestingly, this is only
true if $\Lomega(\Gg,\FcG)\subseteq\Lomega(\Gg,\FcA)$.


\begin{lemma}\label{lem:f_nonconflice_vec}
Given the premises of \REFthm{thm:SoundnessCompleteness_vec}, let $q\in Z^\infty$ and
  $\Lomega_q(\Gg,\FcG)\subseteq\Lomega_q(\Gg,\FcA)$. Then, 
$
  \OpPre{\Lomega_q(\Gg,\fo{})} = \OpPre{\Lomega_q(\Gg,\fo{})\cap\Lomega_q(\Gg,\FcA)}\,.
$
\end{lemma}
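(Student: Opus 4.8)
The plan is to transfer the argument of \REFlem{lem:f_nonconflice} essentially verbatim to the vectorized setting, replacing its two ingredients---invariance of $Z^\infty$ under $\fo{}$ and the existence of a guarantee-fulfilling compliant play---by their mode-annotated analogues \REFlem{lem:stayinZ_vec} and \REFlem{lem:existE_vec}, both already at our disposal. As in the singleton case, the inclusion ``$\supseteq$'' is immediate: since $\Lomega_q(\Gg,\fo{})\cap\Lomega_q(\Gg,\FcA)\subseteq\Lomega_q(\Gg,\fo{})$ and $\OpPre{\cdot}$ is monotone, only the inclusion ``$\subseteq$'' requires work.

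First I would fix a finite prefix $\pi\in\OpPre{\Lomega_q(\Gg,\fo{})}$ and let $q'$ be its terminal state. Being a prefix of a play compliant with $\fo{}$, $\pi$ reaches $q'$ together with a current mode $(a,b)\in[1;n]\times[1;m]$. By \REFlem{lem:stayinZ_vec}, $Z^\infty$ is invariant under $\fo{}$ and the environment transition rules, so from $q\in Z^\infty$ we obtain $q'\in Z^\infty$. I would then apply \REFlem{lem:existE_vec} at $q'$ to obtain an infinite continuation $\beta$ with $\pi\beta\in\Lomega_q(\Gg,\fo{})\cap\Lomega_q(\Gg,\FcG)$, i.e.\ a compliant play visiting every $\FGa{}$, $a\in[1;n]$, infinitely often. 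Finally, the hypothesis $\Lomega_q(\Gg,\FcG)\subseteq\Lomega_q(\Gg,\FcA)$ yields $\pi\beta\in\Lomega_q(\Gg,\FcA)$, hence $\pi\beta\in\Lomega_q(\Gg,\fo{})\cap\Lomega_q(\Gg,\FcA)$; as $\pi$ is a prefix of $\pi\beta$, this gives $\pi\in\OpPre{\Lomega_q(\Gg,\fo{})\cap\Lomega_q(\Gg,\FcA)}$, completing ``$\subseteq$''.

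The one point that genuinely differs from the singleton proof, and which I expect to be the main obstacle, is the gluing of mode traces at the junction between $\pi$ and $\beta$. Compliance with $\fo{}$ is an \emph{existential} statement over a pair of mode traces valued in $[1;n]$ and $[1;m]$, so $\pi\beta$ is compliant only if the annotation witnessing compliance of the prefix $\pi$ extends consistently along the continuation. I would therefore invoke \REFlem{lem:existE_vec} so that its inductive construction is initialized at the triple $(q',a,b)$ inherited from $\pi$ rather than at a freshly chosen mode. The construction is mode-respecting---it advances $a$ to $a^+$ exactly when $\FGa{}$ is hit (case~1) and otherwise keeps the current $a$ (cases~2 and~3), in accordance with the compliance conditions (i)--(iii)---so the annotation it produces along $\beta$ agrees at index $|\pi|$ with the one already carried by $\pi$. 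Once the two annotations are glued, $\pi\beta$ is genuinely compliant and the remaining reasoning is identical to that of \REFlem{lem:f_nonconflice}.
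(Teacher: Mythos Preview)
Your proof is correct and follows essentially the same approach as the paper: reduce to the ``$\subseteq$''-direction, use \REFlem{lem:stayinZ_vec} to place the terminal state of the prefix in $Z^\infty$, apply \REFlem{lem:existE_vec} to obtain a compliant extension in $\Lomega_q(\Gg,\FcG)$, and conclude via the hypothesis $\Lomega_q(\Gg,\FcG)\subseteq\Lomega_q(\Gg,\FcA)$. Your third paragraph on mode-trace gluing at the junction makes explicit a subtlety that the paper's own proof leaves entirely implicit.
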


\begin{proof}
  Observe that \enquote{$\supseteq$} above always holds.
  We therefore only prove the other direction.
  Pick $\pi \in \OpPre{\Lomega_q(\Gg,\fo{q})}$.
  Let $q'$ be the last state in $\pi$. 
  As $q\in Z^\infty$ it follows  from \REFlem{lem:stayinZ_vec} that $q'\in
  Z^\infty$.
  Then we can use \REFlem{lem:existE_vec} to pick $\beta$
  s.t.\ $\pi\beta\in \Lomega_q(\Gg,\fo{}) \cap \Lomega_q(\Gg,\FcG)$.
  As $\Lomega_q(\Gg,\FcG)\subseteq \Lomega_q(\Gg,\FcA)$ we therefore
  have $\pi\beta\in\Lomega_q(\Gg,\FcA)$ and hence $\pi\beta\in
  \Lomega_q(\Gg,\fo{})\cap \Lomega_q(\Gg,\FcA)$.
  With this we immediately have that $\pi\in
  \OpPre{\Lomega_q(\Gg,\fo{})\cap \Lomega_q(\Gg,\FcA)}$.
\end{proof}

\subsubsection{Proof of \REFthm{thm:SoundnessCompleteness_vec}, part 1}
Combing the above properties of $\fo{}$ we see that
 \eqref{equ:Soundness:infinite} follows from \REFlem{lem:GFa imp GFg_vec}, \eqref{equ:Soundness:nonempty} follows from \REFlem{lem:existE_vec} and 
 \eqref{equ:Soundness:nonblock} follow from \REFlem{lem:f_nonconflice_vec}.

 \subsection{Completeness}\label{sec:proof:completeness:vec}

We first show that the negation of \eqref{equ:4FP_vector} can be over-approximated by negating every line separately. This implies that the reasoning for every line of the negated fixed-point carries over from \REFsec{sec:proof:completeness} by annotating the used sets with $a$ and $b$ modes. The resulting lemmas and proofs are re-stated in this section for the sake of completeness.

\subsubsection{Negating the vectorized fixed-point in \eqref{equ:4FP_vector}}
First observe that negating line $a$ of \eqref{equ:4FP_vector} results
in the formula
\begin{align}\label{equ:negatedFourFP_first_vec}
  \nu\Yta{}~.~
  \bigwedge_{b=1}^m\mu \Xta{ab}~.~\nu \Wta{ab}~.~&(\FGta{}\cup
  \Prez(\Zta{a^+})) \cap \Prez(\Yta{})\\ 
 &\cap (\FAa{b}\cup \OpPreTDc{\Wta{ab}, \Xta{ab} \cup \FAa{b}}).\notag
\end{align}

One assumption that was made in the simplification of
\eqref{equ:negatedFourFP} was that $\Zt\subseteq\Xt\subseteq
\Wt\subseteq \Yt$.
When we consider the vectorized version, the right hand side of
$\Zta{}$ depends on $^{a^+}\Zt$.
Although, ultimately, all the $Z$ variables have the same value (as
arises from our proofs) we cannot rely on this in the simplification
of the fixpoint.
Instead, we use an over-approximation of the fixpoint.
Consider the reorganization of \eqref{equ:negatedFourFP_first}
appearing in \eqref{equ:rearrangeddemorgan}.
The reasoning that simplifies $\langle L_3\rangle$ relies on $\Xt \subseteq \Wt
\subseteq \Yt$. It is easy to see, that we still have $\Xta{ab}
\subseteq \Wta{ab} \subseteq \Yta{a}$, but the simplification of
$\langle L_2\rangle$ and $\langle L_4\rangle$ to $\Prez(\Zt)$ relies 
on $\Zt \subseteq \Yt$.
However, we note that in both cases, $\Prez(\Zt)$ over-approximates
$\langle L_2\rangle$ and $\langle L_4\rangle$.
It follows that if we replace $\langle L_2\rangle$ and $\langle L_4\rangle$ in
\eqref{equ:rearrangeddemorgan} by $\Prez(\Zt)$ we get a formula that
characterizes \emph{more} states.
Applying this reasoning to \eqref{equ:negatedFourFP_first_vec} results in 
\begin{align}\label{equ:negatedFourFP_proof_vec}
  \nu\Yta{}~.~
  \bigwedge_{b=1}^m\mu \Xta{ab}~.~\nu \Wta{ab}~.~&
 (\Prez(\Zta{a^+}))\\
 &\cup (\FGta{} \cap \FAa{b} \cap \Prez(\Yta{}))\notag\\
 &\cup  (\FGta{}\cap \OpPreTDc{\Wta{ab}, \Xta{ab} \cup \FAa{b}}))\notag
\end{align}
which is the mode-annotated version of \eqref{equ:negatedFourFP}.
We denote the vectorized versions of \eqref{equ:negatedFourFP_first_vec} and \eqref{equ:negatedFourFP_proof_vec} by $\overline{\varphi}_4^v$ and
$\overline{\varphi}_5^v$.
That is, include the vector least-fixpoint on the $\Zta{}$ variables, where
each line is either \eqref{equ:negatedFourFP_first_vec} ($\overline{\varphi}_4$)
or \eqref{equ:negatedFourFP_proof_vec} ($\overline{\varphi}_5$). 

We know that $\semantics{\overline{\varphi}_4^v}\subseteq
\semantics{\overline{\varphi}_5^v}$ (point wise containment for the
resulting vector of $\Zta{}$).
We have defined $\semantics{\varphi_4}=\bigcup_{a\in
  [1;n]}\Za{}^\infty$.
It follows, that in order to prove that $\varphi_4^v$ is complete it would
be sufficient to prove that in $\overline{\varphi}_4^v$ the environment
wins the GR(1) game from every state in $\bigcap_{a\in [1;n]}
\Zta{}^\infty$.
However, we are going to show that the environment wins the GR(1) game
from every state in $\bigcup_{a\in [1;n]} \Zta{}^\infty$ as
computed by $\overline{\varphi}_5^v$.
From the soundness argument, as established above, and from the
determinacy of GR(1) games, it follows that
$\semantics{\overline{\varphi}_5^v} \cap
\semantics{\varphi_4}=\emptyset$.
It follows that
$\semantics{\overline{\varphi}_5^v}=\semantics{\overline{\varphi}_4^v}$
and furthermore for every $a,a'\in [1;n]$ we have
$\Za{}^\infty=\ps{a'}Z^\infty$ (in $\varphi_4$) and
$\Zta{}^\infty=\ps{a'}\Zt^\infty$ (in $\overline{\varphi}_4^v$ and
$\overline{\varphi}_5^v$).
We now proceed with the analysis of $\overline{\varphi}_5^v$ by defining $\Zt^\infty:=\semantics{\overline{\varphi}_5^v}$.

\subsubsection{The induced ranking of $\Zt^\infty$}
Let $\Zt^\infty=\bigcup_{a\in [1;n]}\Zta{}^\infty$.
Let $\Zta{}^0=\emptyset$ and 
$\Zta{}^i$ for $i\geq 1$
denote the set obtained in the $i$th iteration over $\Zta{}$.
Notice that the $\Zta{}^i$ is coordinated for all modes $a$.
That is, they are all obtained from the same vector that is co-computed.
For $i\geq 1$ we denote $\Yta{}^i=\Zta{}^i$ as the value of the fixpoint on
$\Yta{}$ that computes the $i$-th approximation of $\Zta{}$ (based on
$^{a^+}\Zt^{i-1}$).
Furthermore, let $\Xta{ab}^i_0=\emptyset$ and denote by $\Xta{ab}^i_j$ for
$j\geq 1$ the set obtained in the $j$-th iteration over $\Xta{ab}$
performed while computing $\Yta{}^i$ (i.e., using $\Yta{}^i$ for $\Yta{}$ and
$^{a^+}\Zt^{i-1}$ for $^{a^+}\Zt$).
Then
it follows from the properties of the fixed-point that after the $i$th 
iteration over $\Zta{}$ has terminated, we have $\Zta{}^i=\bigcup_j \bigcap_b \Xta{ab}_j^i$
(in particular $\Zta{}^k=\bigcup_j \bigcap_b \Xta{ab}_j^k$ for $\Zta{}^\infty=\Zta{}^k$).

We define the $ab$-ranking for every state $q\in \Zta{}^\infty$ s.t.\
\begin{equation}
\ps{ab}\rank(q)=(i,j) \iff q\in \Xta{ab}^i_j\setminus \Xta{ab}^i_{j-1} \mbox{ for $i,j>0$.}
\end{equation}
%
After termination of the inner fixed-point over $\Wta{ab}$,
giving $\Wta{ab}^i_j=\Xta{ab}^i_j$, we have
\begin{align}\label{equ:negatedFourFP_withranking_a}
\Xta{ab}^i_j=&\Prez(\ps{a^+}\Zt^{i-1}) \cup (\FGta{} \cap \FAa{b} \cap \Prez(\Zta{}^{i}))\notag\\
&\cup  (\FGta{}\cap \OpPreTDc{\Xta{ab}^i_j, \Xta{ab}^{i}_{j-1} \cup \FAa{b}})).
 \end{align}
Before interpreting this set, we look at the last term of \eqref{equ:negatedFourFP_withranking_a} separately.  Using the definition of $\PreA$, $\PreE$ and $\OpPreTDc{}$ from \REFsec{sec:prelim} we have
\begin{align*}
 &\OpPreTDc{\Xta{ab}^i_j, \Xta{ab}^{i}_{j-1} \cup \FAa{}}\\
 :=& \PreA(\Xta{ab}^i_j) \cup
  \Prez(\Xta{ab}^i_j\cap (\Xta{ab}^{i}_{j-1} \cup \FAa{}))\\
 =&\PreA(\Xta{ab}^i_j) \cup \Prez((\Xta{ab}^{i}_{j} \cap \Xta{ab}^{i}_{j-1}) \cup (\Xta{ab}^i_j\cap \FAa{}))\\ 
 =&\PreA(\Xta{ab}^i_j) \cup \Prez(\Xta{ab}^{i}_{j-1} \cup (\Xta{ab}^i_j\cap \FAa{}))\\ 
 =& \SetCompX{\qz\in\Q^0}{
 \begin{propDisjA}
  \Trz(\qz)\subseteq \Xta{ab}^i_j\\
 \Trz(\qz)\cap \Xta{ab}^{i}_{j-1}\neq\emptyset\\
 \Trz(\qz)\cap (\Xta{ab}^i_j\cap \FAa{})\neq\emptyset
 \end{propDisjA}}\\
 &\cup\SetCompX{\qo\in\Q^1}{\Tro(\qo)\subseteq \Xta{ab}^i_j \vee
   \Tro(\qo)\subseteq \Xta{ab}^{i}_{j-1} \cup (\Xta{ab}^i_j\cap \FAa{})}\\
 =&\SetCompX{\qz\in\Q^0}{
 \begin{propDisjA}
  \Trz(\qz)\subseteq \Xta{ab}^i_j\\
 \Trz(\qz)\cap \Xta{ab}^{i}_{j-1}\neq\emptyset\\
 \Trz(\qz)\cap (\Xta{ab}^i_j\cap \FAa{})\neq\emptyset
 \end{propDisjA}}
 \cup\SetCompX{\qo\in\Q^1}{\Tro(\qo)\subseteq \Xta{ab}^i_j)}
\end{align*}

Using \eqref{equ:negatedFourFP_withranking_a} and the previous derivation we see that for every
system state $q\in\Q^1\cap \Zta{}^\infty$ with $\ps{ab}\rank(q)=(i,j)$ holds 
\begin{compactenum}[(a)]
\item $\Tre{1}(q)\subseteq\Zta{}^\infty$ and for all $q'\in\Tre{1}(q)$ exists a $b'\in[1;m]$ s.t.\
   $\ps{a^+b'}\rank(q')\leq(i-1,\cdot)$, or  
 \item 
 $q\in\FAa{}\setminus\FGa{}$, $\Tre{1}(q)\subseteq\Zta{}^\infty$ and for all $q'\in\Tre{1}(q)$ exists a $b'\in[1;m]$ s.t. $\ps{ab'}\rank(q')\leq(i,\cdot)$, or 
 \item $q\in\FGta{}$, $\Tre{1}(q)\subseteq\Zta{}^\infty$ and for all $q'\in\Tre{1}(q)$ holds $\ps{ab}\rank(q')\leq(i,j)$.
\end{compactenum}
Similarly, for every environment state $q\in\Zta{}^\infty\cap\Q^{0}$ with $\ps{ab}\rank(q)=(i,j)$ holds
\begin{compactenum}[(a')]
 \item that there exists $q'\in\Tre{0}(q)\cap\Zta{a^+}^\infty$ s.t.\ $\ps{a^+b}\rank(q')\leq(i-1,\cdot)$ for some $b\in[1;m]$, or
 \item $q\in\FAa{}\setminus\FGa{}$, and there exists $q'\in\Tre{0}(q)\cap\Zta{}^\infty$ s.t.\ $\ps{ab'}\rank(q')\leq(i,\cdot)$ for some (possibly different) $b'\in[1;m]$, or
 \item $q\in\FGta{}$ and either
 \begin{compactenum}
 \item[(c'1)] $\Tre{0}(q)\subseteq\Zta{}^\infty$ and for all $q'\in\Tre{0}(q)$ holds $\ps{ab}\rank(q')\leq(i,j)$, or
 \item[(c'2)] there exists $q'\in\Tre{0}(q)\cap\Zta{}^\infty$ s.t.\  $\ps{ab}\rank(q')\leq(i,j-1)$, or
 \item[(c'3)] there exists $q'\in\Tre{0}(q)\cap\Zta{}^\infty$ s.t.\ $\ps{ab}\rank(q')\leq(i,j)$ and $q'\in\FAa{}$.
 \end{compactenum}
\end{compactenum}

\subsubsection{Consequences for a GR(1) game over $\Gg$}
Consider a system strategy $\fo{}$ over $\Gg$ starting in some state $q\in
\Zt{}^\infty$.
We use the properties $(a)-(c)$ and $(a')-(c')$ to identify a subset $R_{\fo{}}$
of $\Zt{}^\infty$ that is reachable under $\fo{}$.
We construct this region by induction on the distance from $q$.

By assumption $q\in\Zt{}^\infty$. Initially, we set $q\in R_{\fo{}}$.
Consider, by induction, a state $q'\in R_{\fo{}}$ with
$\ps{ab}\rank(q')=(i,j)$. Then we have two cases.\\
\textbf{(1)} If $q'\in \Q^{1}$, then based on the $(a)$, $(b)$, and $(c)$ above it
follows that either $(a)$ for all successors $q''$ of $q'$ exists some $b'$ s.t.\ $q''$ has \ps{a^+b'}rank at most 
$(i-1,\cdot)$ and $\Tre{1}(q')\in \Zta{a^+}^\infty$, $(b)$ for all
successors $q''$ of $q'$ exists some $b'$ s.t.\ $q''$ has \ps{ab'}rank at most $(i,\cdot)$ and
$\Tre{1}(q')\in \Zta{}^\infty$,
or $(c)$ all successors of $q'$ have \ps{ab}rank at
most $(i,j)$ and $\Tro(q')\in \Zta{}{}^\infty$.
In particular, one of these cases holds for the successor $q''$ of
$q'$ that is compatible with $\fo{}$. We add $q''$ to $R_{\fo{}}$.\\
\textbf{(2)} If $q'\in \Q^{0}$, then based on $(a')$, $(b')$, and $(c')$ above it
follows that either $(a')$ there is a successor $q''$ of $q'$ that has
\ps{a^+b'}rank at most $(i-1,\cdot)$ for some $b'\in[1;m]$ and $q''\in\Zta{a^+}^\infty$,
$(b')$ there is a successor $q''$ of $q'$ that has \ps{ab'}rank
at most $(i,\cdot)$ for some $b'\in[1;m]$ and $q''\in\Zta{}^\infty$, $(c'2)$ there is a
successor $q''$ of $q'$ such that $\ps{ab}\rank(q'')\leq (i,j-1)$, or $(c'3)$
there is a successor $q''$ of $q'$ such that $\ps{ab}\rank(q'')\leq (i,j)$ and $q''\in \FAa{}$.
In all these cases, we add this identified successor $q''$ to
$R_{\fo{}}$.
As $\fo{}$ is a strategy for the system, the state $q''$ is compatible with
$\fo{}$. 
The remaining case is $(c'1)$ when {\bf all} successors $q''$ of $q'$
satisfy that $q''\in\Zta{}^\infty$ and that $\ps{ab}\rank(q'') \leq (i,j)$.
In that case we add {\bf all} successors $q''$ of $q'$ to
$R_{\fo{}}$.
As $\fo{}$ is a strategy for the system all these successors $q''$ are
compatible with $\fo{}$. 

We denote by $\Lomega_q(\Gg,\fo{},R_{\fo{}})$ the restriction of
$\Lomega_q(\Gg,\fo{})$ to computations that remain within $R_{\fo{}}$.
%
It is easy to see that the following lemma follows by construction and is therefore stated without proof.
\begin{lemma}\label{lem:propRf1_vec}
  Given the premises of \REFthm{thm:SoundnessCompleteness_vec}, it holds that 
  $\Lomega_q(\Gg,\fo{},R_{\fo{}})\neq \emptyset$ and $\Lomega_q(\Gg,\fo{},R_{\fo{}})\subseteq\Zt^\infty$. 
\end{lemma}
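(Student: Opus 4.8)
The plan is to prove both assertions by a single induction along the distance-based construction of $R_{\fo{}}$, maintaining simultaneously the two invariants that every state placed in $R_{\fo{}}$ lies in $\Zt^\infty$ and that every such state retains at least one outgoing edge into $R_{\fo{}}$ that is compatible with $\fo{}$. The first invariant gives $R_{\fo{}}\subseteq\Zt^\infty$, and since $\Lomega_q(\Gg,\fo{},R_{\fo{}})$ by definition collects only plays that never leave $R_{\fo{}}$, it immediately yields $\Lomega_q(\Gg,\fo{},R_{\fo{}})\subseteq\Zt^\infty$. The second invariant gives non-emptiness: starting at $q$ one greedily extends any finite prefix, taking the unique $\fo{}$-move at system states and a retained successor at environment states (which, being an environment choice, is automatically $\fo{}$-compatible), producing an infinite play that stays inside $R_{\fo{}}$, so $\Lomega_q(\Gg,\fo{},R_{\fo{}})\neq\emptyset$.

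For the induction, the base case is immediate since $q\in\Zt^\infty$ by assumption. For the step I would fix $q'\in R_{\fo{}}$ with $\ps{ab}\rank(q')=(i,j)$; by the first invariant $q'\in\Zt^\infty$, so exactly one of the cases $(a)$--$(c)$ (when $q'\in\Qo$) or $(a')$--$(c')$ (when $q'\in\Qz$) applies. In each of these cases the construction adds successors that the case analysis already certifies to lie in $\Zta{a^+}^\infty$, in $\Zta{a}^\infty$, or -- in the branching case $(c'1)$ -- in $\Zta{}^\infty$; as each $\Zta{}^\infty\subseteq\Zt^\infty$, the first invariant is preserved. The second invariant is preserved because every case contributes at least one successor to $R_{\fo{}}$: at a system state the single $\fo{}$-successor, and at an environment state either one identified successor (cases $(a')$, $(b')$, $(c'2)$, $(c'3)$) or all of them (case $(c'1)$).

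I expect the only real subtleties to be bookkeeping. The first is that the case split $(a)$--$(c')$ must be genuinely exhaustive on $\Zt^\infty$; this is exactly what the rank characterization derived from \eqref{equ:negatedFourFP_withranking_a}, together with the rewriting of $\OpPreTDc{}$, delivers, and it is the ingredient that guarantees the play can never be forced out of the region during the greedy extension. The second is that $\fo{}$ is in general history-dependent, so $R_{\fo{}}$ should be read as the reachable part of the unwinding of $\Gg$ under $\fo{}$ (the infinite tree $T$ used in \REFlem{lem:FAconditioned}) rather than as a flat state set; reading \enquote{successor} as \enquote{child in this tree} makes the argument above go through verbatim, and the two claims then follow by construction exactly as in the singleton counterpart \REFlem{lem:propRf1}.
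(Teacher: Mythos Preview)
Your proposal is correct and coincides with the paper's own treatment: the paper states this lemma \emph{without} proof, remarking only that it ``follows by construction.'' Your argument is precisely that construction spelled out, with the two invariants (containment in $\Zt^\infty$ and existence of a compatible successor inside $R_{\fo{}}$) made explicit; the subtlety you flag about reading $R_{\fo{}}$ as a tree unwinding when $\fo{}$ is history-dependent is well taken and consistent with how the paper uses $R_{\fo{}}$ in \REFlem{lem:FAconditioned_vec}.
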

Hence, the environment can render $\Zt^\infty$ invariant. Additionally, it can ensure that $\FcG$ is only visited finitely often, as formalized in the following lemma.

\begin{lemma}\label{lem:FGng_vec}
Given the premises of \REFthm{thm:SoundnessCompleteness_vec}, it holds
for all $q\in \Zt{}^\infty$ and for every system strategy $\fo{}$ over $\Gg$ that
$\Lomega_q(\Gg,\fo{},R_{\fo{}}) \cap \Lomega(\Gg,\FcG)=\emptyset$.

\end{lemma}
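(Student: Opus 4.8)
The plan is to mirror the singleton argument of \REFlem{lem:FGng}, the only new ingredient being the bookkeeping of the currently tracked mode $a$ as the play cycles through the lines of the negated fixed point. Fix $\pi\in\Lomega_q(\Gg,\fo{},R_{\fo{}})$; by \REFlem{lem:propRf1_vec} we have $\pi(k)\in\Zt^\infty$ for all $k$, so at every position one of the cases $(a)$--$(c'3)$ applies together with an associated mode pair $(a,b)$ and rank $\ps{ab}\rank(\pi(k))=(i,j)$.

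First I would record that the first rank component $i$ is a \emph{global} potential: since the outer vector $\mu$-fixpoint computing the $\Zta{}^i$ is coordinated across all modes $a$, the index $i$ does not depend on the mode and is therefore comparable through a mode switch $a\to a^+$. Inspecting the cases shows that $i$ never increases along $\pi$: cases $(b)/(b')$ and $(c)/(c')$ keep the first component fixed (the rank drops only in its second component, to $(i,j-1)$, or stays $\le(i,\cdot)$), while cases $(a)/(a')$ send the play into mode $a^+$ with first component $\le i-1$. Thus $i$ strictly decreases exactly when the play passes through $(a)/(a')$, which is precisely when the tracked mode advances $a\mapsto a^+$.

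The second key step is the observation, analogous to the singleton case, that a state lying in the guarantee $\FGa{}$ of the \emph{currently tracked} mode $a$ can be handled only by cases $(a)/(a')$: cases $(b)/(b')$ require $q\in\FAa{}\setminus\FGa{}$ and cases $(c)/(c')$ require $q\in\FGta{}=\overline{\FGa{}}$, both of which exclude $q\in\FGa{}$. Consequently, each visit of $\pi$ to the guarantee of its current mode forces a mode advance and hence a strict decrease of $i$. Since $i$ ranges over the finite set $\{1,\dots,k\}$ and is non-increasing, it can decrease only finitely often; hence the mode advances only finitely often and $\pi$ eventually remains in a single mode $a^\ast$ forever. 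After this point cases $(a)/(a')$ never occur, so by the previous step $\pi$ never again enters $\FGa{a^\ast}$. Therefore $\FGa{a^\ast}$ is visited only finitely often, so $\pi$ fails the generalized B\"uchi condition $\FcG$, and $\Lomega_q(\Gg,\fo{},R_{\fo{}})\cap\Lomega(\Gg,\FcG)=\emptyset$ follows.

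I expect the main obstacle to be making the notion of ``currently tracked mode along $\pi$'' precise within the $R_{\fo{}}$ construction — in particular justifying that the mode annotation attached to the successors chosen when $R_{\fo{}}$ is built is consistent with the case applied at each step — and verifying carefully that the strict decrease of the first component survives the mode switch in cases $(a)/(a')$, which is exactly where the coordinated computation of the $\Zta{}^i$ across all modes is needed.
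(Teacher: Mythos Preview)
Your proposal is correct and follows essentially the same route as the paper's proof: both argue that along any play in $R_{\fo{}}$ the first component of the (currently tracked) rank is non-increasing and strictly decreases whenever $\FGa{}$ is hit via cases $(a)/(a')$, so the mode eventually stabilizes at some $a^\ast$ whose guarantee is then visited only finitely often. Your additional emphasis on the coordinated computation of the $\Zta{}^i$ across modes, which is what makes the first rank component comparable through a mode switch, is exactly the point the paper invokes (more tersely) when it says ``no case causes an increase of the first component of the rank''; your final caveat about making the mode annotation along $\pi$ precise is a fair observation, but the paper leaves this at the same level of informality.
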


\begin{proof}
Let $\pi\in\Lomega_q(\Gg,\fo{},R_{\fo{}})$. In particular, $\pi(0)=q\in
\Zt{}^\infty$.
As $R_{\fo{}}\subseteq \Zt{}^\infty$, for all $k\in\mathbb{N}$ one of
the cases (a)-(c') holds.
As $\FGa{}$ can only be visited by going through cases (a) and (a'),
every visit of $\pi$ to $\FGa{}$ causes the mode to change to $a^+$ and
the first component of the $a^+$ rank decreases.
As no case causes an increase of the first component of the rank,
$\pi$ ultimately gets trapped in a single mode $a$ and cannot visit all
$\FGa{}$ infinitely often, i.e., $\pi\notin\Lomega(\Gg,\FcG)$.
\end{proof}


\begin{lemma}\label{lem:FAconditioned_vec}
Let $q\in \Zt^\infty$ and $\fo{}$ be a system
strategy over $\Gg$ s.t.\ \eqref{equ:LanguageSpecOld_q} holds. 
Then $\OpPre{\Lomega_q(\Gg,\fo{})} \neq
\OpPre{\Lomega_q(\Gg,\fo{})\cap\Lomega_q(\Gg,\FAa{})}$.
\end{lemma}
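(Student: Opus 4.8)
The plan is to transfer the argument of \REFlem{lem:FAconditioned} to the mode-annotated setting, using the vectorized cases $(a)$--$(c'3)$ together with \REFlem{lem:propRf1_vec} and \REFlem{lem:FGng_vec}. As in the singleton case, the left-hand inclusion of \eqref{equ:LanguageSpecOld_q} already gives $\OpPre{\Lomega_q(\Gg,\fo{})}\neq\emptyset$, so it suffices to show $\OpPre{\Lomega_q(\Gg,\fo{})\cap\Lomega_q(\Gg,\FcA)}=\emptyset$; equivalently, that no $\fo{}$-compliant play from $q\in\Zt^\infty$ visits all $\FAa{b}$ infinitely often. First I would unwind the region $R_{\fo{}}$ into an infinite tree $T$ and label every node by the case $(a)$--$(c'3)$ that the construction of $R_{\fo{}}$ assigned to it. By \REFlem{lem:FGng_vec}, along every branch the mode-resetting cases $(a)$ and $(a')$ occur only finitely often, so after a finite prefix each branch is trapped in a single guarantee mode $a$.

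Next I would bound the assumption-visiting cases. The cases $(b)$, $(b')$ and $(c'3)$ are exactly the ones that force a visit to the currently avoided set $\FAa{b}$. Assuming these occur infinitely often, K\"onig's lemma produces a branch $\pi\in\Lomega_q(\Gg,\fo{})$ that visits assumption states infinitely often while, by \REFlem{lem:FGng_vec}, visiting $\FcG$ only finitely often; together with \eqref{equ:LanguageSpecOld_q} this yields a contradiction, so $(b)$, $(b')$ and $(c'3)$ occur only finitely often on $T$. Below a node past their last occurrence only the rank-stable cases $(c)$, $(c'1)$, $(c'2)$ remain, and since $(c'2)$ strictly decreases the second rank component while $(c)$ and $(c'1)$ never increase it, $(c'2)$ is also confined to a finite prefix.

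I would then, exactly as in the singleton proof, extract along every branch $k$ a finite prefix $s_k\in\OpPre{\Lomega_q(\Gg,\fo{})}$ reaching a state $q_k$ rooting a subtree $T''_k$ in which only $(c)$ and $(c'1)$ occur; by construction of $R_{\fo{}}$ these subtrees are closed under environment moves, so $\OpPre{\Lomega_q(\Gg,\fo{})}=\bigcup_k\OpPre{s_k\cdot\Lomega_{q_k}(\Gg,\fo{},T''_k)}$. Since each $T''_k$ and each $s_k$ contains only finitely many assumption states (again by the recurrence argument of the previous paragraph), I obtain $\Lomega_{q_k}(\Gg,\fo{},T''_k)\cap\Lomega_q(\Gg,\FcA)=\emptyset$, and collecting over $k$ gives $\OpPre{\Lomega_q(\Gg,\fo{})\cap\Lomega_q(\Gg,\FcA)}=\emptyset$, hence the claimed strict inequality.

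The hard part will be the mode bookkeeping that is absent from the singleton argument: the $a$-mode is switched in cases $(a)$/$(a')$ and the $b$-mode may change in $(a')$ and $(b')$, being preserved only inside the rank-stable region. I must therefore ensure that each instance of ``decreasing the first rank component'' and ``visiting an assumption set'' is read relative to the correct current mode, and that the K\"onig branch genuinely witnesses membership in $\Lomega_q(\Gg,\FcA)$, which for the generalized B\"uchi condition means visiting \emph{every} $\FAa{b}$ and not merely some assumption state. The cleanest way around this is to exploit that $\Lomega_q(\Gg,\FcA)\subseteq\Lomega_q(\Gg,\FAa{b})$ for each $b$, so it suffices to exhibit a single assumption index that the trapped branch fails to hit infinitely often; pinning down that index once the play is confined to the rank-stable $(c)$/$(c'1)$ subtree, where no assumption visit is forced, is the technical core of the argument.
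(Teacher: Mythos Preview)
Your K\"onig step in the second paragraph is where the argument breaks. From infinitely many occurrences of cases $(b)$, $(b')$, $(c'3)$ in $T$ you extract a branch $\pi$ that visits \emph{some} assumption set infinitely often; but $\pi\in\Lomega_q(\Gg,\FcA)$ requires visiting \emph{every} $\FAa{b}$ infinitely often, and each occurrence of $(b)$, $(b')$, $(c'3)$ only witnesses a hit to the set $\FAa{b}$ indexed by the \emph{current} $b$-mode. So \eqref{equ:LanguageSpecOld_q} gives no contradiction, and you cannot conclude that these cases are globally finite on $T$. You correctly flag this difficulty in your last paragraph, but the fix you sketch is mis-ordered: you propose to pin down the failing index $b$ only \emph{after} confining the play to the $(c)/(c'1)$ region, yet reaching that region is exactly what requires bounding the $(b)$, $(b')$, $(c'3)$ cases---and that bound is only available relative to a \emph{specific} $b$.

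The paper resolves this by reversing the order. For each branch $\pi$ of $T$, \REFlem{lem:FGng_vec} and \eqref{equ:LanguageSpecOld_q} together give $\pi\notin\Lomega_q(\Gg,\FcA)$, so there is some $b=b(\pi)$ with only finitely many occurrences of the $b$-indexed cases $\ps{b}(b)$, $\ps{b}(b')$, $\ps{b}(c'3)$ along $\pi$. This partitions the branches of $T$ into $m$ families $\ps{b}T$, one per assumption index, and the rank-decrease argument (bounding $(c'2)$, then isolating the environment-closed $(c)/(c'1)$ subtrees $\ps{b}T''_{\ps{b}k}$) is carried out \emph{within each $\ps{b}T$ for its own fixed $b$}. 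The final decomposition of $\OpPre{\Lomega_q(\Gg,\fo{})}$ is then a double union $\bigcup_b\bigcup_{\ps{b}k}$, not the single $\bigcup_k$ you wrote. Without this per-$b$ splitting the argument does not go through.
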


\begin{proof}
First observe that the left part of \eqref{equ:LanguageSpecOld_q}  implies $\OpPre{\Lomega_q(\Gg,\fo{})} \neq\emptyset$. 
The claim is therefore proven by showing that $\OpPre{\Lomega_q(\Gg,\fo{})\cap\Lomega_q(\Gg,\FAa{})}=\emptyset$.

  Consider the unwinding of the region $R_{\fo{}}$ to an infinite
  tree $T$.
  Label every node in the tree according to the case $(a)-(c)$ or
  $(a')-(c'3)$ that applies to it according to the construction of
  $R_{\fo{}}$.

  By Lemma~\ref{lem:FGng} there are finitely many occurrences of cases
  $(a)$ and $(a')$ for every $a$.
  Assume by contradiction that cases $\ps{b}(b)$, $\ps{b}(b')$ or $\ps{b}(c'3)$ appear
  infinitely often in $T$ for every $b$.
  From K\"onig's lemma it follows that there is a path $\pi$ in $T$
  along which these cases occur infinitely often.
  However, whenever $\ps{b}(b)$, $\ps{b}(b')$, or $\ps{b}(c'3)$ occur, $\pi$ visits
  $\FAa{}$.
  As $\fo{}$ satisfies \eqref{equ:LanguageSpecOld_q} and $\pi$ only visits finitely many states in $\FGa{}$ for all $a$, we know that for every path $\pi$ in $T$ there exists at least one $b\in[1,m]$ s.t.\ $\ps{b}(b)$, $\ps{b}(b')$, or $\ps{b}(c'3)$ only occur finitely often along $\pi$. This forms $m$ subtrees $\ps{b}T$ where inside $\ps{b}T$ there are only finitely many occurrences of $(a)$, $(a')$, $\ps{b}(b)$, $\ps{b}(b')$ and $\ps{b}(c'3)$ and we have $T=\bigcup_b \ps{b}T$.

  Now consider a location in $\ps{b}T$ under which there are no appearances of
  cases $\ps{b}(b)$, $\ps{b}(b')$ or $\ps{b}(c'3)$ and restrict attention to the
  sub-tree $\ps{b}T'$ of $\ps{b}T$ under this location.
  Suppose that case $(c'2)$ (for some $b'\in[1;m]$) occurs infinitely often in $\ps{b}T'$ .
  As $(c'2)$ leads to a decrease in the second component of the rank,
  and cases $(c)$ and $(c'1)$ do not allow the rank to increase it
  follows that there are finitely many occurrences of $(c'2)$ in $\ps{b}T'$.
  
  This reasoning implies that along every branch of $\ps{b}T$ (enumerated by $\ps{b}k\in\mathbb{N}$) 
  there exists a \emph{finite} prefix $\ps{b}s_{\ps{b}k}\in\OpPre{\Lomega_q(\Gg,\fo{})}$ leading to a 
  state $\ps{b}q_{\ps{b}k}$ at which a sub-tree $\ps{b}T''_{\ps{b}k}$ is rooted in which only cases $(c)$ and $(c'1)$ occur.
  By construction of $R_{\fo{}}$ all sub-trees $\ps{b}T''_{\ps{b}k}$ are closed under
  environment moves. This implies that $\OpPre{\Lomega_q(\Gg,\fo{})}=\bigcup_b\bigcup_{\ps{b}k} \OpPre{\ps{b}s_{\ps{b}k}\cdot\Lomega_{\ps{b}q_{\ps{b}k}}(\Gg,\fo{},\ps{b}T''_{\ps{b}k})}$.
  Further, using the same reasoning as before we know that \eqref{equ:LanguageSpecOld_q} implies that $\ps{b}T''_{\ps{b}k}$ only contains finitely many states in $\FAa{}$. 
  This implies that for all $b\in[1;m]$ we have that  $\Lomega_{\ps{b}q_{\ps{b}k}}(\Gg,\fo{},\ps{b}T''_{\ps{b}k})\cap\Lomega_q(\Gg,\FAa{})=\emptyset$ holds for all ${\ps{b}k}$.  
  Combining the last two observations we have $\OpPre{\Lomega_q(\Gg,\fo{})\cap\Lomega_q(\Gg,\FAa{})}=\emptyset$.
%
%
\end{proof}

\subsubsection{Proof of \REFthm{thm:SoundnessCompleteness_vec}, part 2}
It is easy to see that the claim directly follows from \REFlem{lem:FAconditioned_vec}. If we pick some system strategy $\fo{}$ over $H$ we have that either \eqref{equ:LanguageSpecOld_q} does not hold, or, if  \eqref{equ:LanguageSpecOld_q} holds we know from \REFlem{lem:FAconditioned_vec} that \eqref{eq:SCT_nonblocking_q} does not hold.

\end{document}